\theoremstyle{remark}
\algnewcommand{\algorithmicvariables}{\textbf{Global variables}}
\newtheorem{problem}{Problem}
\newtheorem{definition}{Definition}
\newtheorem{theorem}{Theorem}
\newtheorem{lemma}[theorem]{Lemma}
\newtheorem{remark}{Remark}
\newtheorem{kase}{Case}
\algnewcommand{\algorithmicgoto}{\textbf{go to}}%
\algnewcommand{\Goto}[1]{\algorithmicgoto~\ref{#1}}%
\begin{document}
\title{
    Group-covariant extreme and quasi-extreme channels}
\author{Laleh Memarzadeh}
 \email{memarzadeh@sharif.edu}
\affiliation{%
	Department of Physics, Sharif University of Technology, Tehran 11365-9161, Iran}
\author{Barry C.\ Sanders}%
 \email{sandersb@ucalgary.ca}
 \homepage{http://iqst.ca/people/peoplepage.php?id=4}
\affiliation{%
	Institute for Quantum Science and Technology,
	University of Calgary, Calgary, Alberta T2N~1N4, Canada}
\date{\today}
\begin{abstract}
Constructing all extreme instances of the set of completely positive trace-preserving (CPTP) maps,
i.e., quantum channels,
is a challenging and valuable open problem in quantum information theory. 
Here we introduce a systematic approach that despite the lack of knowledge about the full parametrization of the set of CPTP maps on arbitrary Hilbert-spaced dimension, enables us to 
construct exactly those extreme channels that are covariant with respect to a finite discrete group or a compact connected Lie group. Innovative labeling of quantum channels by group representations enables us to identify the subset of group-covariant channels
whose elements are group-covariant generalized-extreme channels. 
Furthermore, we exploit essentials of group representation theory to introduce equivalence classes for the labels and also partition the set of group-covariant channels. As a result we show that it is enough to construct one representative of each partition.
We construct Kraus operators for  group-covariant generalized-extreme channels by solving systems of linear and quadratic equations for all candidates satisfying the necessary condition for being group-covariant generalized-extreme channels.
Deciding whether these constructed instances are extreme or quasi-extreme is accomplished by solving system of linear equations.
Proper labeling and partitioning the set of group-covariant channels leads to a novel systematic, algorithmic approach for constructing the entire subset of group-covariant extreme channels.
We formalize the problem of constructing and classifying group-covariant generalized extreme channels,
thereby yielding an algorithmic approach to solving,
which we express as pseudocode. 
To illustrate the application and value of our method,
we solve for explicit examples of group-covariant extreme channels. With unbounded computational resources to execute our algorithm,
our method always delivers a description of an extreme channel for any finite-dimensional Hilbert-space and furthermore guarantees a description of a group-covariant extreme channel for any dimension and for any finite-discrete or compact connected Lie group if such an extreme channel exists.
\end{abstract}
\maketitle
\section{Introduction}
\label{sec:Intro}
Quantum channels, which are completely positive trace-preserving linear maps belonging to the set of linear maps on Banach spaces of operators,
represent the most general allowed form of quantum dynamics~\cite{Ludwig1968, Hellwig1969, Doplicher1971}
and they form a convex subset.
Characterizing quantum channels is important for representing general dynamics
and for modeling decoherence in quantum systems~\cite{Holevo2001}. 
Based on such characterizations, efficient simulation of quantum dynamics becomes feasible.
Another importance of characterizing the whole set of quantum channels is to describe general quantum communication channels and to analyze rates of reliable classical and quantum information that can be communicated through quantum channels~\cite{Shannon1948-1, Shannon1948-2}.

As full characterization of convex sets is feasible by just knowing the extreme points of a convex set,
full channel characterization can be accomplished by determining only
the small subset comprising all extreme channels.
The conundrum is that extreme channels can be determined by knowing the convex set and vice versa. For quantum channels acting on $d$-dimensional Hilbert-space, necessary and sufficient conditions for a channel to be extreme are described in~\cite{LS93} using similar arguments to those presented in \cite{Cho75}.
For qubit channels,
full characterization of extreme points is known~\cite{FA98, FA99, RSW02, BGNPZ13}.
On the other hand,
for Hilbert-space dimension~$d>2$ (qubit channels), neither extreme channels nor full characterizations of the set of quantum channels are known

Here we advance the understanding and characterization of extreme channels by treating channels restricted by symmetries,
which are specified by finite discrete or compact connected Lie groups.
Compactness ensures that the connected Lie group has finite-dimensional representations.
We exploit this symmetry to construct exact forms of extreme channels.
Specifically, for any~$d$,
we develop an algorithmic approach to derive exactly a subset of extreme points of the set of channels that have certain specified symmetries.

Only special types of channels have been fully characterized to date:
a set of qubit channels including all extreme points \cite{FA98, FA99, RSW02, BGNPZ13}
and some extreme points for the set of unital channels~\cite{MW09, Haagerup2020},
which are quantum counterparts of classical bistochastic processes.
Studying unital channels is useful for investigating similarities between classical and quantum processes, 
such as establishing a quantum version of Birkhoff's Theorem~\cite{MW09}
and proving additivity or superadditivity of quantities relevant to the communication capacity of channels~\cite{PWPR2006,King2002,Fukuda2007}.
Fully characterizing quantum channels is quite challenging,
which necessitates tackling restricted cases such as unitality.

Characterizing quantum channels has an important practical application to quantum simulation~\cite{Feynman1982, Lloyd1996, BulutaNori2009}.
Quantum simulation is an important application of quantum computing and typically is studied for Hamiltonian-generated unitary evolution \cite{Lloyd1996,AharanovTa-shma, BerryAhokasSanders2007,childs2010, WiebeBerryHoyerSanders2011, Sanders2013},
but,
as general evolution is described by quantum channels,
a fully developed theory of quantum simulation could be based on simulating quantum channels.
Whereas Hamiltonian simulation exploits notions such as the Solovay-Kitaev theorem~\cite{DawsonNielsen2006, NCh10} for gate decomposition and sparseness of Hamiltonians \cite{childs2010,BerryAhokasSanders2007},
direct quantum simulation of channels is challenging. 
However some progress has been made by exploiting knowledge of extreme channels. Decomposing the single-qubit channel has been explored theoretically~\cite{WBOS13}
and experimentally~\cite{Lu2017}
and exploits properties of extreme channels.
Extreme channels are valuable as well for qudit-channel decomposition~\cite{WS15},
including for dimension-altering channels~\cite{Wang2016},
and for decomposing $m$-qubit to $n$-qubit channels~\cite{ItenColbeckChristandl2017}.
This latter result~\cite{ItenColbeckChristandl2017}
emphasizes the importance of their constructive approach to channel decomposition,
which guarantees success of the channel-decomposition procedure.
Those authors contrast their constructive approach to the qudit-channel decomposition approach~\cite{WS15},
which is provably not guaranteed to succeed based on an insufficient number of parameters. These theoretical~\cite{WBOS13, Wang2016,  ItenColbeckChristandl2017} and experimental~\cite{Lu2017} advances point to the importance of determining extreme channels to make quantum-channel simulation efficient or at least tractable.

Solving for extreme channels is currently restricted to particular examples of unital channels,
whereas our goal is to establish a systematic, algorithmic approach to constructing extreme channels that are group-covariant~\cite{Scutaru1979}, whether unital or not.
Beginning with the name of a finite-discrete group or a compact connected Lie group and~$d$,
we exploit the possibility of being able to look up all inequivalent irreducible representations (irreps) of the group in order to be able to construct all possible inequivalent $d$-dimensional representations of the group by direct sums of inequivalent group irreps.
For any two $d$-dimensional inequivalent representations of the group and any inequivalent group irrep with dimension less than or equal to~$d$,
we solve a set of linear equations,
obtained by applying the group-covariance constraint,
to construct Kraus representations of corresponding group-covariant generalized extreme channels.
Our method exploits the full power of representation theory;
if we employed the obvious brute-force approach instead,
we would be solving a set of linear equations
for all pairs of $d$-dimensional representations
and for all $d^2$-dimensional representations of the given group,
which is an uncountably infinite number of candidates;
instead, provided the representation theory for the group is known,
our approach yields only a finite number of candidates,
making our approach feasible algorithmically.
Once we have identified these candidates,
we then test if the obtained generalized extreme channel satisfies the constraint for being extreme.
This constraint is expressed as a system of linear equations whose solution reveals whether the obtained generalized extreme channel is extreme or quasi-extreme. 


Our systematic, algorithmic approach is described by a pseudocode that we define for this purpose.
Typically, pseudocode serves as a convenient way of representing the logical flow of a program for implementation on a standard,
i.e., Turing-like,
computer,
but our pseudocode is quite different:
serving as a representation of the logical flow for our mathematical approach.
Thus, we make it clear that, formally,
our pseudocode applies to a real-number model of computing;
this model enables us to be rigorous with respect to the logic of our systematic, algorithmic approach to solving group-covariant extreme and quasi-extreme channels.

We begin by presenting a full background to our work in~\S\ref{sec:background},
including state of the art and methods,
and then we proceed to describe our approach in~\S\ref{sec:approach}.
Our results are presented and fully explained in~\S\ref{sec:results}
followed by a discussion of these results in~\S\ref{sec:discussion}.
Finally,
we conclude in~\S\ref{sec:conclusions}
including an outlook on outstanding problems and potential future work.

\section{Background}
\label{sec:background}
In this section we summarize the pertinent literature and provide basic concepts required for subsequent sections.
We begin by discussing quantum channels in~\S\ref{subsec:quantumchannels}, 
including their Kraus and Choi representations. 
\S\ref{subsec:groupcovariantchannels}
is devoted to group-covariant channels and the constraints on the Kraus operators of group-covariant quantum channels. 
\subsection{Quantum channels}
\label{subsec:quantumchannels}
In this subsection first we review the definition of quantum channels. We then review Kraus representation and Choi matrix representations for quantum channels. Following that, we recall the definition of specific subsets of quantum channels, namely extreme channels, generalized-extreme channels, and quasi-extreme channels.
\subsubsection{Channel representation}
\label{subsubsec:channelrep}
For~$\mathscr{H}$ a complex finite-dimensional Hilbert space
and $\mathcal{L}(\mathscr{H})$ the space of linear operators acting on $\mathscr{H}$,
density operators~$\{\rho\}$ 
are positive trace-class operators on~$\mathscr{H}$;
i.e., they belong to the subset of $\mathcal{L}(\mathscr{H})$ denoted by
\begin{equation}
    \mathcal{T}(\mathscr{H})=\{\rho\in\mathcal{L}(\mathscr{H})| \rho\geq 0, \operatorname{tr}(\rho)=1\}.
\end{equation}
For our purposes, the trace of the density operator is unity.
A quantum channel is any completely positive trace-preserving map $\Phi:\mathcal{T}(\mathscr{H})\to\mathcal{T}(\mathscr{H})$

For~$\mathscr H$ restricted to finite dimension~$d$,
i.e.,
\begin{equation}
\label{eq:dimH}
    d:=\operatorname{dim}\mathscr{H},
\end{equation}
every quantum channel can be expressed as
\begin{equation}
\label{eq:Krausrep}
    \Phi(\bullet)
        =\sum_{k=1}^K
            A_k\bullet A_k^\dagger,\;
        \bullet\in\mathcal{T}(\mathscr{H}),\,
    K\leq d^2.
\end{equation}
This expression is subject to the trace-preserving constraint
\begin{equation}
\label{eq:TP}
    \Xi=\mathds1
\end{equation} for
\begin{equation}
\label{def:Xi}
    \Xi:=\sum_{k=1}^KA_k^\dagger A_k,
\end{equation}
which can be non-diagonal. Here the nonzero linear operators  $A_k\in\mathcal{L}(\mathscr{H})$
are called Kraus operators
with each Kraus operator expressible as a $d\times d$ complex matrix
whose entries are
\begin{equation}
\label{eq:Akij}
    (A_k)_{ij}:=\bra{e_i}A_k\ket{e_j}\in\mathbb{C},\;
    i+1,j+1\in[d]:=\{1,2,\ldots,d\},
\end{equation}
for~$\{\ket{e_i}\}$
an orthonormal basis of finite-dimensional $\mathscr{H}$, where $i$ goes from $0$ to $d-1$.
In summary, $\Phi$
can be represented by the set~$\{A_k\}_{k\in[K]}$
with each~$A_k$ comprising $d^2$ complex-valued matrix elements.
Therefore, the channel is described by up to~$Kd^2$ complex-valued parameters.
\begin{remark}
For $d=0$, the only vector in Hilbert space is zero,
which has zero norm. Hence there is no allowed state in this case.
For $d=1$, only one normalized state exists, which forms the normal basis for the Hilbert space.
Kraus operators of this channel are proportional to the projector onto the basis of the space satisfying the trace-preserving condition.
\end{remark}

The set of Kraus operators describing a map~$\Phi$ is unique up to an isometry~\cite{NCh10}.
For a given map~$\Phi$, 
the minimum number of Kraus operators,
called the Choi rank,
equals the rank of the Choi operator
\begin{equation}
\label{eq:Choiop}
    C_{\Phi}:=\frac{1}{d}
        \left(\Phi\otimes\mathds1\right)\left(\ket\Psi\bra\Psi\right),\;
    \ket{\Psi}:=\frac{1}{\sqrt{d}}\sum_{i=0}^{d-1}\ket{e_i,e_i}
\end{equation}
and
\begin{equation}
    \left(C_\Phi\right)_{m n,p q}
    =\bra{e_m,e_n}C_\Phi\ket{e_p,e_q}=\frac{1}{d}\bra{e_m}\Phi(\ket{e_n}\bra{e_q})\ket{e_p}
\end{equation}
is the Choi matrix.
Typically, the operator~(\ref{eq:Choiop})
is called the Choi matrix,
but,
due to our algorithmic approach,
we need to be extra careful in distinguishing operators from their matrix representations
and we denote
matrices of size~$m\times n$
with entries drawn from the field~$\mathbb F$
by $\mathcal{M}_{m\times n}(\mathbb{F})$.
Choi showed that a minimal set of Kraus operators
can be obtained from the eigenvectors of the Choi matrix with non-zero eigenvalus~\cite{Cho75}.
\subsubsection{Extreme channels}
\label{subsubsec:extremechannels}
The set of quantum channels~{$S_{\Phi}$}
is convex and thus has extreme points.
\begin{definition}
\label{def:extremechannels}
Extreme points of the convex set {$S_{\Phi}$} are called extreme channels.
Extreme channels are channels that cannot be written as a convex combination of any other two distinct channels in a non-trivial way.
\end{definition}
We denote the set of extreme channels by 
{$S_{\Phi_\text{ext}}\subset S_{\Phi}$}.
Despite the importance of extreme channels,
characterization of extreme channels is unknown except for the special case of $d=2$~\cite{FA98,FA99, RSW02}.
Results beyond $d=2$ are restricted to characterizing extreme points of the set of unital channels ($\Phi(\mathds1)\equiv\mathds1$), which are not necessarily extreme points of the set of all channels~\cite{MW09}.

An important theorem on extreme channels specifies necessary and sufficient conditions for a channel to be an extreme one~\cite{Cho75}:
\begin{theorem} 
\label{theorem:Choi}
A channel represented by a set of Kraus operators $\{A_k\}_{k\in[K]}$
is extreme if and only if (iff)
the set of operators
\begin{equation}
\label{eq:SAA}
    \mathcal{S}:=\{A_k^{\dag}A_l\}_{k,l\in[K]}
\end{equation}
is linearly independent~\cite{Cho75,LS93}.
\end{theorem}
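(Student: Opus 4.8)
\emph{Proof strategy.} I would prove the logically equivalent statement that $\Phi$ is \emph{not} extreme precisely when $\mathcal{S}$ is linearly \emph{dependent}. Throughout, take the Kraus set $\{A_k\}_{k\in[K]}$ to be minimal, so that $K$ is the Choi rank and, equivalently, $\{A_k\}_{k\in[K]}$ is itself linearly independent: this is implicit in the statement, it is forced by linear independence of $\mathcal{S}$ (if $\sum_k c_kA_k=0$ nontrivially then $\sum_k c_kA_m^\dagger A_k=0$ for each $m$, contradicting independence of $\mathcal{S}$), and for a non-minimal representation one passes to the canonical minimal one. In this case the range of the Choi operator~(\ref{eq:Choiop}) equals $\operatorname{span}\{\ket{v_k}:k\in[K]\}$ for $\ket{v_k}:=(A_k\otimes\mathds1)\ket{\Psi}$, and the $\ket{v_k}$ form a basis of that range because $C_\Phi$ is proportional to $\sum_k\ket{v_k}\!\bra{v_k}$.

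The crux is a structure lemma for the completely positive order: if $\Psi$ is completely positive and $\Phi-\Psi$ is also completely positive, then there is a Hermitian $K\times K$ matrix $Q$ with $0\le Q\le\mathds1$ such that $\Psi(\bullet)=\sum_{k,l\in[K]}Q_{kl}A_k\bullet A_l^\dagger$. The plan is to obtain this by passing to Choi operators: $0\le C_\Psi\le C_\Phi$ forces $\ker C_\Phi\subseteq\ker C_\Psi$, hence $\operatorname{range} C_\Psi\subseteq\operatorname{range} C_\Phi=\operatorname{span}\{\ket{v_k}\}$, so $C_\Psi$ is proportional to $\sum_{k,l}Q_{kl}\ket{v_k}\!\bra{v_l}$ with $Q$ uniquely determined by the basis property of $\{\ket{v_k}\}$ and Hermitian; then $C_\Psi\ge0$ and $C_\Phi-C_\Psi\ge0$ translate, using linear independence of the $\ket{v_k}$, into $Q\ge0$ and $\mathds1-Q\ge0$ respectively.

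With the lemma in hand, the two directions are short. If $\Phi=\tfrac12(\Phi_1+\Phi_2)$ with $\Phi_1,\Phi_2$ channels, apply the lemma to $\Psi=\tfrac12\Phi_1$ to get $Q$ with $0\le Q\le\mathds1$ and $\tfrac12\Phi_1(\bullet)=\sum_{k,l}Q_{kl}A_k\bullet A_l^\dagger$; writing $H:=2Q-\mathds1$, the trace-preservation of $\Phi_1$, namely $\sum_{k,l}2Q_{kl}A_l^\dagger A_k=\mathds1$, combined with $\Xi=\mathds1$ for $\Phi$, gives $\sum_{k,l\in[K]}H_{kl}A_l^\dagger A_k=0$, a linear relation among the members of $\mathcal{S}$ that is nontrivial exactly when $H\ne0$, i.e.\ exactly when the decomposition $\Phi_1=\Phi_2=\Phi$ fails; hence $\mathcal{S}$ linearly independent forces $\Phi$ extreme. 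Conversely, given any nontrivial relation $\sum_{k,l}c_{kl}A_k^\dagger A_l=0$, I would first replace it by a nonzero \emph{Hermitian} one (take Hermitian and anti-Hermitian parts of the coefficient matrix; at least one survives), then rescale to a nonzero Hermitian $H$ with $-\mathds1\le H\le\mathds1$, set $Q_\pm:=\tfrac12(\mathds1\pm H)$, and define $\Phi_\pm(\bullet):=\sum_{k,l}(Q_\pm)_{kl}A_k\bullet A_l^\dagger$. That $2\Phi_\pm$ are genuine channels is routine: complete positivity because the Choi operator of $\Phi_\pm$ is proportional to $\sum_{k,l}(Q_\pm)_{kl}\ket{v_k}\!\bra{v_l}\ge0$ as $Q_\pm\ge0$, and trace preservation because $\sum_{k,l}(Q_\pm)_{kl}A_l^\dagger A_k=\tfrac12\mathds1$; moreover $\Phi_+\ne\Phi_-$ since $H\ne0$ and the $\ket{v_k}$ are independent, so $\Phi=\tfrac12(2\Phi_+)+\tfrac12(2\Phi_-)$ is a nontrivial convex decomposition and $\Phi$ is not extreme.

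I expect the main obstacle to be the structure lemma for the completely positive order: specifically, justifying the passage $0\le C_\Psi\le C_\Phi\Rightarrow\operatorname{range} C_\Psi\subseteq\operatorname{range} C_\Phi$ and extracting from it a Hermitian $Q$ bounded by $\mathds1$, together with the bookkeeping of (i) reducing an arbitrary relation among $\mathcal{S}$ to a Hermitian, norm-bounded one, and (ii) keeping the minimality of $\{A_k\}_{k\in[K]}$ in force so that $Q$, and hence $H$, is well defined and vanishes exactly when the corresponding decomposition is trivial.
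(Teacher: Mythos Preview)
Your argument is the standard Choi--Landau--Streater proof and is correct in substance. Note, however, that the paper does \emph{not} supply its own proof of this theorem: it is stated in the background section as a known result with citations to \cite{Cho75,LS93}, and is then used as a tool throughout. So there is no ``paper's proof'' to compare against; your write-up would in fact fill a gap the authors deliberately left to the literature.

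One small bookkeeping point worth tidying before you call it done: in the converse direction you start from a relation $\sum_{k,l}H_{kl}A_k^\dagger A_l=0$ but the trace-preservation check for your $\Phi_\pm(\bullet)=\sum_{k,l}(Q_\pm)_{kl}A_k\bullet A_l^\dagger$ requires $\sum_{k,l}H_{kl}A_l^\dagger A_k=0$, with the indices on the operators swapped. This is not a real obstruction---replacing $H$ by $\bar H=H^{\mathsf T}$ (still Hermitian, still nonzero) fixes it, or equivalently you can write the original dependence with the index convention matching your channel definition---but as written the two displays are not literally consistent. Otherwise the structure-lemma step (range inclusion from $0\le C_\Psi\le C_\Phi$ via kernel containment, then reading off a Hermitian $Q$ with $0\le Q\le\mathds1$ in the basis $\{\ket{v_k}\}$) and the Hermitian-part/rescaling trick are exactly the classical moves, and your identification of the structure lemma as the only nontrivial ingredient is accurate.
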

\noindent
Thus, the number~$K$ of Kraus operators of an extreme channel has upper bound
\begin{equation}
\label{eq:KrausUpperBound}
    K\leq d.
\end{equation}
Therefore,
the Choi rank of an extreme channel is bounded by~$d$.

Clearly not all channels with Choi rank satisfying inequality~(\ref{eq:KrausUpperBound}) are extreme channels,
but such channels are interesting as well~\cite{R00,RSW02}.
The fact that other channels are interesting leads to defining two further important subsets of quantum channels.
One subset is known as generalized-extreme channels and the other subset is known as quasi-extreme channels,
which we now define.
\begin{definition}
Channels with Choi rank not exceeding~$d$
are called generalized-extreme channels~\cite{RSW02}
and the set of generalized-extreme channels is denoted by 
{$S_{\Phi_\text{gen}}$}
\end{definition}
\begin{definition}
 Generalized-extreme channels that are not extreme are called quasi-extreme channels~\cite{RSW02},
and the set of quasi-extreme channels is denoted by~{$S_{\Phi_\text{qe}}$}
\end{definition}
\begin{remark}
\label{remark:mutuallyexclusive}
Extreme and quasiextreme channels are mutually exclusive:
${S_{\Phi_\text{gen}}\cap S_{\Phi_{\text{qe}}}}=\emptyset$.
\end{remark}
\subsection{Group-covariant channels}
\label{subsec:groupcovariantchannels}
In this subsection,
we elaborate on a specific class of channels,
namely group-covariant channels. First we recall the definition of equivalent channels. Based on this definition we explain that a group-covariant channel is a channel~(\ref{eq:Krausrep})
with the additional property that the channel's action is invariant under pre- and post-unitary conjugations that are described by group representations. 
Then we explain the constraints on the Kraus operators of group-covariant channels and how two 
group-covariant channels under the same group, with respect to equivalent representations of the group, are equivalent. 
Group-covariant channels have been studied in the context of channel capacity~\cite{Holevo2002, KoenigWehner2009, DattaTomamichelWilde2016,WildeTomamichelBerta2017,SiddharthaBaumlWilde2020}, extreme points of unital channels~\cite{MW09},
and channel characterization~\cite{KMM11,MozrzymasStudzinskiDatta2017,SiudzinskaChruscinski2018},
and complementarity and additivity properties of various covariant channels are discussed in~\cite{DattaFukudaHolevo2006}.
\begin{definition}
\label{def:channelequivalence}
A Channel~$\Phi$ is unitarily equivalent to channel $\Phi'$,
denoted by $\Phi\sim\Phi'$, if there exist $d$-dimensional unitary operators $U$ and $V$
 such that
 \begin{equation}
 \label{eq:equivalence}
     \Phi\sim\Phi':  \Phi=\mathcal{T}_U\circ\Phi'\circ\mathcal{T}_V
\end{equation}
where $\circ$ denotes composition of maps
and
\end{definition}
\begin{equation}
\label{eq:SQ}
    \mathcal{T}_Q:
    \mathcal{L}(\mathscr{H})\to \mathcal{L}(\mathscr{H}):
        \bullet\mapsto  Q\bullet {Q}^{\dagger},\;
    Q\in\mathcal{L}(\mathscr{H}).
\end{equation}
The equivalence relation $\sim$ (\ref{eq:equivalence}) partitions the set of all quantum channels for given Hilbert-space dimension~$d$ into equivalence classes of channels. Any quantum channel is equivalent to itself for $U=V=\mathds1$. However, for some channels,
each channel is equivalent to itself even if~$U$ and~$V$ are not identity operators. These channels, with symmetric properties,
are group-covariant channels defined as below.

\begin{definition}
For a finite discrete group or a compact connected Lie group denoted~$\mathcal{G}$,
with a pair of unitary representations~\cite{Scutaru1979}
\begin{equation}
\label{eq:D1D2unitaryreps}
    D^{(1)},D^{(2)}\in\mathcal{L}(\mathscr{H}),
\end{equation}
a channel~$\Phi$ is group-covariant with respect to representations
$D^{(1)},D^{(2)}$
if 
\begin{equation}
\label{eq:GCovariant}
    \mathcal{T}_{{D}^{(2)}(g)}\circ\Phi\circ\mathcal{T}_{{D}^{(1)}(g)}=\Phi\;
    \forall g\in\mathcal{G}
\end{equation}
where $\mathcal{T}_Q$ is defined in Eq.~(\ref{eq:SQ}).
\end{definition}
Suppose channel~$\Phi$,
represented by Kraus operators $\{A_k\}_{k\in[K]}$~(\ref{eq:Krausrep}), 
is group-covariant with respect to two~$d$-dimensional unitary representations of the group~$\mathcal G$,
namely,
$D^{(1)}$ and~$D^{(2)}$.
Then Eq.~(\ref{eq:GCovariant})
implies that
\begin{equation}
\label{eq:CovariantKraus}
    {D^{(2)}}^{\dagger}(g)A_k D^{(1)}(g)=\sum_{l=1}^K \Omega_{kl}(g)A_l,\;
    \forall k\in[K],\;
    \forall g\in\mathcal{G},
\end{equation}
subject to the trace-preserving condition~(\ref{eq:TP}),
for~$\Omega$ a unitary representation of the group~$\mathcal G$
on any~$K$-dimensional unitary space~\cite{Holevo2002}, \cite{KMM11}. The dimension of~$\Omega$ is equal to the number of Kraus operators describing the channel~\cite{KMM11}. 
\begin{remark}
\label{remark:Omega}
The group-covariant channel $\Phi$~(\ref{eq:GCovariant}) with respect to $D^{(1)}$ and $D^{(2)}$
 is not necessarily unique
 and each distinct~$\Phi$
 should be suitably labelled.
The role of $\Omega$~(\ref{eq:CovariantKraus})
is to label distinct group-covariant channels with respect to~$D^{(1)}$ and~$D^{(2)}$;
i.e.\ we can write~$\Phi_\Omega$. 

\end{remark}
\begin{remark}
\label{remark:reducibleOmega}
If~$\Omega$ is a reducible representation of the group,
then the group-covariant channel with Kraus operators satisfying Eq.~(\ref{eq:CovariantKraus}) is a convex combination of other channels that are also group-covariant with respect to~$D^{(1)}$ and~$D^{(2)}$~\cite{KMM11}. 
\end{remark}
\begin{remark}
\label{remark: UniEquivReps}
Let $\Phi$~(\ref{eq:Krausrep})
be group-covariant with respect to representations
$D^{(1)}$ and $D^{(2)}$.
Let~$D'^{(1)}$ and~$D'^{(2)}$
be two representations of the same group.
Suppose these two representations
are respectively unitarily equivalent to $D^{(1)}$ and $D^{(2)}$
so
\begin{equation}
\label{eq:equireps}
    U_iD^{(i)}(g)U_i^{\dag}=D'^{(i)}(g), \;\forall g\in\mathcal{G},
\end{equation}
for~$\{U_i\}_{i\in[2]}$ $d$-dimensional unitary operators.
Then channel $\Phi'$,
described by Kraus operators $\{A'_k=U_2A_kU_1^{\dagger}\}_{k\in[K]}$,
is group-covariant with respect to representations $D'^{(i)}$s~\cite{KMM11}
\begin{equation}
\label{eq:VPhiU}
    \Phi'=\mathcal{T}_{U_2}\circ\Phi\circ\mathcal{T}_{U_1^{\dagger}}.
\end{equation}
That is according to Definition \ref{def:channelequivalence}, these channels are unitarily equivalent. 
\end{remark}
Equality~(\ref{eq:CovariantKraus}) holds iff 
this equality holds for all generators of the group.
Consider any finite discrete group~$\mathcal{G}$ generated by 
a subset of $\mathcal{G}$, namely
\begin{equation}
\label{eq:Sgenerators}
    S=\{g_1,g_2,\ldots,g_r\},\; r:=\operatorname{rank}(\mathcal{G}).
\end{equation}
\noindent
Then Eq.~(\ref{eq:CovariantKraus}) is satisfied for all
$g\in\mathcal{G}$
if Eq.~(\ref{eq:CovariantKraus}) is satisfied for all $g_i\in S$.
Now consider any compact connected Lie group~$\mathcal{G}$
with corresponding Lie algebra
\begin{equation}
\label{eq:Liegen}
    \mathfrak{g}\in\operatorname{span}\{T_n\},\;
    n\in[\nu],\;
    \nu:=\operatorname{dim}\mathfrak{g}
\end{equation}
for~$T_n$ a generator.
Then Eq.~(\ref{eq:CovariantKraus})
can be expressed in terms of the representations of $\mathfrak{g}$~\cite{KMM11},
\begin{equation}
\label{eq:CovariantLieKraus}
    D^{(1)}(T_n)A_k-A_kD^{(2)}(T_n)=\sum_{l=1}^K {\Omega}_{kl}(T_n) A_l, \;\forall T_n,
\end{equation}
where~$D^{(1)}$ and~$D^{(2)}$
are $d$-dimensional Hermitian representations of the algebra~$\mathfrak{g}$ and~$\Omega$ is a $K$-dimensional Hermitian representation of the Lie algebra $\mathfrak{g}$.
\begin{remark}
\label{remark:commutator}
If~$D\equiv D^{(1)}=D^{(2)}$,
Eq.~(\ref{eq:CovariantLieKraus})
simplifies to the commutator 
\begin{equation}
\label{eq:CovariantLieKrauscommutator}
    \left[D(T_n),A_k\right]
    =\sum_{l=1}^K {\Omega}(T_n) A_l\;\forall T_n.
\end{equation}
\end{remark}
In this section we have reviewed the main properties of quantum channels and group-covariant channels.
Now we use these results for our aim of constructing group-covariant extreme points of the set of quantum channels~\cite{KMM11}.

\section{Approach}
\label{sec:approach}
In this section based on the background provided in~\S\ref{sec:background} we introduce our systematic approach to construct  group-covariant generalized-extreme channels where the corresponding group has unitary representation. 
In subsection~\S\ref{subsec:formalproblems},
we describe the problem as a computational problem.
In subsection~\S\ref{subsec:mathematical}, we present a constructing subset of extreme channels that are group covariant. The next subsection~\S\ref{subsec:AlgorithmicApproach} is devoted to the algorithmic approach for solving the problem (
in~\S\ref{subsec:Pseudocode},
we explain our approach to pseudocode). 

\subsection{Formal problems}
\label{subsec:formalproblems}
The purpose of this subsection is to formalize problems of constructing group-covariant extreme channels
over finite-dimensional Hilbert space.
Specifically,
the construction of these channels is achieved by obtaining the exact Kraus representations for generalized-extreme group-covariant channels,
and we deal with both finite discrete and compact connected Lie groups.
Our final problem concerns deciding whether a channel 
is either extreme or not.

We formulate problems by specifying the inputs and outputs, and the problem in each case is to map the inputs to the outputs,
although the problem formulation does not strictly use this language.
Each problem is thus a task that needs to be performed to construct group-covariant extreme and quasi-extreme channels.

We now formulate and explain our three problems.
The first two problems concern construction of group-covariant generalized-extreme channels
discussed in~\S\ref{subsec:quantumchannels},
first for finite discrete groups and then for compact connected Lie groups.
For our first problem,
the input comprises the name of the finite discrete group and the dimension of the Hilbert space.
The output of the first problem is the set of $d\times d$ matrix representations of Kraus operators for group-covariant generalized extreme channels. 
Our second problem is similar to the first,
with the input comprising the name of the group,
but in this case a compact connected Lie group;
otherwise the statement is the same in that the input includes the dimension of the Hilbert space,
and the output 
is the set of $d\times d$ matrix representations of Kraus operators for group-covariant generalized extreme channels.
These first two problems are now given.
\begin{problem}
\label{prob:discrete}
Construct all exact~$d$-dimensional Kraus operators 
of all group-covariant generalized-extreme channels
for any finite discrete group.
\end{problem}
\noindent
\begin{problem}
\label{prob:Lie}
Construct all exact~$d$-dimensional Kraus representations
of all group-covariant generalized-extreme channels
for any compact connected Lie group.
\end{problem}

The third problem, which is a decision problem, 
accepts matrix representations of Kraus operators 
for a channels as input. 
The problem is to solve whether the input channel is extreme or not
and yields this answer as a single-bit output. Therefore, in our case in which the inputs are the matrix representations of Kraus operators for a generalized-extreme channel, the problem solves whether the input channel is extreme or quasi-extreme
\noindent
\begin{problem}
\label{prob:decision}
Decide whether a given set of~$d$-dimensional
Kraus operators of a quantum channel describes an extreme channel or not.
\end{problem}
\noindent
Now that we have three well-posed computational problems,
albeit permitting real and complex numbers,
we proceed to describe our approach for designing a proper procedure and its presentation as an algorithm that solves the problem. 
\subsection{Algebraic approach}
\label{subsec:mathematical}
Building on the formal problems posed in~\S\ref{subsec:formalproblems},
we explain our approach for constructing generalized-extreme group-covariant channels.
First {{in \S\ref{subsubsec:obtaining}}} we transform the relation between Kraus operators of a group-covariant channel to systems of linear equations. Then {{in \S\ref{subsubsec:constructinggegcchannels}}} we discuss how to construct generalized-extreme group-covariant channels. Finally, {{in \S\ref{subsubsec:equivalentextremechannels}}} we explain how defining equivalent channels helps to construct generalized-extreme group-covariant channels which only needs to be done for group-covariant channels with respect to inequivalent representations of the group. Discussions in \S\ref{subsubsec:constructinggegcchannels} and \S\ref{subsubsec:equivalentextremechannels} enables us to solve Eq.~(\ref{eq:CovariantKraus}) just for the representative of each class of group-covariant generalized extreme channels, instead of employing a brute-force approach that is constructing all group-covariant channels and then deciding which one is extreme.

\subsubsection{
Solving a system of linear equations for Kraus operators of a group-covariant channel}
\label{subsubsec:obtaining}
In this subsubsection, we convert the relations for Kraus operators representing a group-covariant channel in the finite discrete case~(\ref{eq:CovariantKraus})
and in the compact connected Lie group case (\ref{eq:CovariantLieKraus})
into systems of linear equations.
The algorithm for solving this system of linear equations can then be solved algorithmically and is amenable to expressing in pseudocode.
First,
for discrete finite groups and compact connected Lie groups,
we express relations between Kraus operators as systems of linear equations.
In both cases,
we discuss instances for which 
group-covariant channels do not exist with respect to particular representations~$D^{(1)}$,~$D^{(2)}$ and $\Omega$. 

We denote the vectorized form of a Kraus operator $A_k$ by $\bm{A}_k$,
and matrix elements convert to vector elements according to $(A_k)_{ij}=(\bm{A}_k)_{id+j}$.
Concatenation of vector representations is denoted by~$\oblong$,
i.e., $\bm{A}_k\oblong\bm{A}_{k'}$
for the concatenation of vectors representing Kraus operations~$A_k$
and~$A_{k'}$, respectively.
Concatenation of a length~$K$ sequence of vectors representing Kraus operators is expressed as
\begin{equation}
\label{eq:Abigbox}
        \bm{A}:=\bigbox_{k=1}^{K}\bm{A}_k
            \in\mathcal{M}_{Kd^2\times1}(\mathbb{C}),
\end{equation}
which represents the channel as a vector comprising all of the channel's Kraus operators.

For a group-covariant channel with a finite discrete group,
a vector $\bm A$~(\ref{eq:Abigbox})
is obtained by solving
linear equations~(\ref{eq:CovariantKraus})
for each $g_i\in S$~(\ref{eq:Sgenerators}) and then imposing the trace-preserving constraint~(\ref{eq:TP}).
We re-express the linear equations~(\ref{eq:CovariantKraus})
as

\begin{equation}
\label{eq:LinearizedCovariantKrausDis}
    P(g_i)\bm{A}=\bm0\;
    \forall \,
    i\in[r],
\end{equation}

for
\begin{equation}
\label{eq:P(g)}
    P(g_i)=\bigoplus_{k=1}^{K} \left({D^{(2)}}^{\dag}(g_i)\otimes {D^{(1)}}^\text{T}(g_i)\right)-\Omega(g_i)\otimes \mathds1_{d^2}
    \in\mathcal{M}_{Kd^2\times Kd^2}(\mathbb{C})
\end{equation}
with~$\mathds1_{d^2}$ the $d^2\times d^2$ identity matrix and $\text{T}$ denoting matrix transposition: $(\bullet)_{i,j}^\text{T}=(\bullet)_{j,i}$.
Equation~(\ref{eq:LinearizedCovariantKrausDis}) is a set of systems of homogeneous linear equations.
Each of these systems of homogeneous linear equations is labelled in terms of the same three representations of the group,
namely $D^{(1)}$, $D^{(2)}$ and $\Omega$,
as clearly seen in Eq.~(\ref{eq:P(g)}).
\begin{remark}
\label{remark:trivialcase}
In the trivial case that
${D^{(1)}}=\mathds1_d={D^{(2)}}$
and~$\Omega=1$,
then $P(g_i)\equiv0$ for all~$i$,
which implies that~$\bm A$ in Eq.~(\ref{eq:P(g)}) is unconstrained and is a vectorized version of just one single Kraus operator. 
\end{remark}

The solution $\bm{A}$~(\ref{eq:LinearizedCovariantKrausDis}) belongs to ker$(P(g_i))$ for all $i\in[r]$.
If $\bigcap_{i\in[r]}\operatorname{ker}\left(P(g_i)\right)\equiv{\{}\bm{0}{\}}$,
then the only solution to Eq.~(\ref{eq:LinearizedCovariantKrausDis}) is the trivial solution $\bm{A}\equiv0$.
Such a trivial case arises 
either if $g_i\in S$ exists such that $\det\left(P(g_i)\right)\neq 0$, which means $\text{ker}(P(g_i))={\{}\bm{0}{\}}$ for some~$i$,
or else $\text{ker}(P(g_i))\neq{\{}\bm{0}{\}}$ for all $i\in[r]$ but their intersection is zero.

This trivial solution $\bm{A}\equiv0$ 
implies non-existence of a group-covariant channel with respect to group 
representations~$D^{(1)}$,~$D^{(2)}$ and~$\Omega$. 
On the other hand,
if $\det\left(P(g_i)\right)= 0$ for all $g_i\in S$ and $\bigcap_{i\in[r]}\operatorname{ker}\left(P(g_i)\right)\neq{\{}\bm{0}{\}}$, then Eq.~(\ref{eq:LinearizedCovariantKrausDis}) yields a nontrivial solution $\bm{A}\in\bigcap_{i\in[r]}\operatorname{ker}\left(P(g_i)\right)$
with its number of parameters being less than or equal to~$\min_{i\in[r]}\{\operatorname{nullity}(P(g_i))\}$.
This non-trivial solution~$\bm{A}$
could represent valid Kraus operators of a group-covariant completely positive (CP) map with respect to given~$D^{(1)}$, ~$D^{(2)}$ and~$\Omega$
and these nontrivial solutions are candidates for solutions for a channel if the trace-preserving condition,
discussed below,
can be imposed successfully.

These algebraic equations and arguments can be understood geometrically as well,
which provides an alternative,
and valuable,
insight.
From Eq.~(\ref{eq:LinearizedCovariantKrausDis})
we know that~$\bm A$,
for each~$i$,
is in the kernel of a matrix with~$Kd^2$ rows of~$Kd^2$ elements per row.
Each nonzero row of $P(g_i)\bm{A}=\bm 0$~(\ref{eq:LinearizedCovariantKrausDis})
defines a hyperplane in $\mathbb{C}^{Kd^2}$.
Thus, the solution to Eq.~(\ref{eq:LinearizedCovariantKrausDis}) for each~$i$ is a flat, that is, the intersection of~$Kd^2$ hyperplanes obtained from rows of Eq.~(\ref{eq:LinearizedCovariantKrausDis})
with the dimension of this intersection being  nullity$(P(g_i))$.
According to  Eq.~(\ref{eq:LinearizedCovariantKrausDis}),
$\bm{A}$ belongs to the intersection
of all~$r$ flats.
If the intersection of these flats is empty, then a group-covariant channel labelled by~$D^{(1)}$, $D^{{(2)}}$ and~$\Omega$ does not exist.
If the intersection is another flat, its dimension is less than or equal to~$\min_{i\in[r]}\{\operatorname{nullity}(P(g_i))\}$,
which equals the number of parameter in the solution~$\bm{A}$. These solutions represent valid Kraus operators of a group-covariant CP map with respect to given~$D^{(1)}$, $D^{(2)}$ and~$\Omega$,
and next we constrain these solutions by the trace-preserving condition to obtain,
if possible,
solutions~$\bm A$ for a channel.

For a channel,
these Kraus operator solutions for CP maps must further satisfy the trace-preserving condition~(\ref{eq:TP}),
which involves additional equations
that restrict the parameter-space domain for Kraus operators.
To impose this trace-preserving constraint,
we reshape Kraus vectors~$\{\bm{A}_k\}$
back to Kraus matrices~$\{A_k\}$. Then we verify if~$\{A_k\}$, describe a trace preserving map for a range of complex parameters in~$\{A_k\}$.
Each diagonal element of~$\Xi$~(\ref{eq:TP})
is the sum of modulus square of complex parameters of Kraus operators
\begin{equation}
\label{eq:summodsquareKrausops}
    \bra{e_i} \Xi\ket{e_i}
        =\Xi_{ii}
        =\sum_{j=1}^d\sum_{k=1}^K
        \left|(A_k)_{ij}\right|^2,
\end{equation}
hence a non-negative real number,
for~$\{\ket{e_i}\}$ the orthonormal basis of~$\mathscr{H}$. 

To constrain solutions~$\bm A$ that yield CP maps,
by applying the trace-preserving condition,
we test if~$\Xi$ is diagonal
by checking that all off-diagonal terms for this matrix are zero. 
If the test shows that~$\Xi$ is a $d$-dimensional diagonal matrix,
then we have~$d$ equations 
that depend only on modulus square complex parameters of Kraus operators.
The number of parameters
varies depending on the particular case.
Equations that are linear in modulus square of parameters of Kraus operators
can be solved by a linear equation solver for modulus square complex parameters.
The solution describes a family of group-covariant channels.

The diagonal-$\Xi$ matrix case for imposing the trace-preserving condition admits a beautiful geometric analogy.
Recalling the geometric perspective,
Kraus operators of a group-covariant CP map belongs to a flat in parameter space.
If~$\Xi$ is diagonal,
the trace-preserving constraint~(\ref{eq:TP})
is represented geometrically by at most~$d$ hyperspheres embedded in that flat
such as a circle embedded in a plane.
Thus, the intersection of these hyperspheres
determines the parameter domain for which the CP map is trace-preserving, hence a channel.

If~$\Xi$ is not diagonal, then, in general,
we have $d(d+1)/2$ expressions quadratic in complex parameters.
Not all these $d(d+1)/2$ are necessarily mutually independent.
Furthermore,
the number of parameters can be fewer than the maximum of~$Kd^2$.
The off-diagonal case is harder than the diagonal case with respect to imposing the trace-preserving condition,
and the geometric perspective is not as helpful so we discuss the non-diagonal case  only from an algebraic perspective.
These equations can be solved algorithmically,
for example using SimPy discussed in~\S\ref{subsubsec:blibrary},
but such solvers are not guaranteed to solve nor is a solution known to exist in general.

For compact connected Lie groups, the relation between Kraus operators
of the channel~(\ref{eq:CovariantLieKraus}) 
is transformed to the set of linear equations
\begin{equation}
\label{eq:LinearizedCovariantKrausLie}
    Q(T_n)\bm{A}=\bm0\; \forall n\in[\nu],
\end{equation}
analogous to Eq.~(\ref{eq:LinearizedCovariantKrausDis}) for discrete groups,
with~$\bm{A}$~(\ref{eq:Abigbox})
and~$\nu$~(\ref{eq:Liegen}).
In Eq.~(\ref{eq:LinearizedCovariantKrausLie})
\begin{equation}
\label{eq:Q(g)}
    Q(T_n)=\bigoplus_{k=1}^{K} \left({D^{(1)}}(T_n)\otimes \mathds1_d-\mathds1_d\otimes {{D^{(2)}(T_n)}}^{\text{T}}\right)-\Omega(T_n)\otimes \mathds1_{d^2}
    \in\mathcal{M}_{Kd^2\times Kd^2}(\mathbb{C}).
\end{equation}
Equation~(\ref{eq:LinearizedCovariantKrausLie}) is a set of systems of homogeneous linear equations.
Each of these systems of homogeneous linear equations is labelled in terms of the same three representations of the group,
namely $D^{(1)}$, $D^{(2)}$ and~$\Omega$,
as clearly seen in Eq.~(\ref{eq:Q(g)}).
Algebraic and geometrical descriptions of the solution to Eq.~(\ref{eq:LinearizedCovariantKrausLie})
are similar to those for the solution of Eq.~(\ref{eq:LinearizedCovariantKrausDis}). 
As for the finite discrete-group case above,
the solution~$\bm{A}$ of Eq.~(\ref{eq:LinearizedCovariantKrausLie})
belongs to $\bigcap_{n\in[\nu]} \operatorname{ker}(Q(T_n))$.
If $\bigcap_{n\in[\nu]} \operatorname{ker}(Q(T_n))={\{}\bm{0}{\}}$,
then the system of linear equations~(\ref{eq:LinearizedCovariantKrausLie}) has only a trivial solution,
which means that the group-covariant CP map with respect to~$D^{(1)}$, $D^{(2)}$ and $\Omega$ does not exist. If, on the other hand,
$\operatorname{ker}(Q(T_n))\neq{\{}\bm{0}{\}}$,
then the solution of Eqs.~(\ref{eq:LinearizedCovariantKrausLie}) yields all Kraus operators of the group-covariant CP map.
If the trace-preserving condition can be imposed successfully,
then the group-covariant channel with respect to~$D^{(1)}$, $D^{(2)}$ and $\Omega$ is obtained following the approach discussed above for discrete groups.
\subsubsection{Constructing generalized-extreme group-covariant channels}
\label{subsubsec:constructinggegcchannels}
In this subsubsection, we explain our approach for constructing group-covariant generalized-extreme channels. 
First we establish our notation and explain how we label different group-covariant channels. Then we discuss the labels of group-covariant generalized-extreme channels. 

To construct a set of group-covariant channels 
given group~$\mathcal{G}$ and Hilbert-space dimension~$d$,
one chooses two $d$-dimensional representations~$D^{(1)}$ and~$D^{(2)}$ for~$\mathcal G$.
The set of group-covariant channels with respect to~$D^{(1)}$ and~$D^{(2)}$ is denoted by $\mathcal{W}_{\mathcal{G}, D^{(1)}, D^{(2)}}$. 
To construct~$\mathcal{W}_{\mathcal{G}, D^{(1)},D^{(2)}}$,
one selects each~$\Omega$
from the set of unitary representations for~$\mathcal{G}$
and solves the linear equations in Eq.~(\ref{eq:CovariantKraus})
to obtain matrix descriptions of Kraus operators. As explained in~\S\ref{subsubsec:obtaining},
solving Eq.~(\ref{eq:CovariantKraus}) is equivalent to solving a set of systems of homogeneous linear equations~(\ref{eq:LinearizedCovariantKrausDis})
for the case of finite discrete groups,
 which is a set of systems of homogeneous linear equations labelled by three representation of the group,
namely, $D^{(1)},D^{(2)}$ and $\Omega$.
Hence, we label the solution to this set of systems of homogeneous linear equations by the same labels and denote this solution by $\Phi_{D^{(1)},D^{(2)},\Omega}$.
This approach,
involving equivalence between Eqs.~(\ref{eq:CovariantKraus})
and~(\ref{eq:LinearizedCovariantKrausLie}) and labelling by $D^{(1)},D^{(2)}$ and $\Omega$,
pertains as well to the compact connected Lie group case.

Trivial solutions ($\Phi\equiv0$) are excluded and each non-trivial solution is denoted by $\Phi_{D^{(1)},D^{(2)},\Omega}\in\mathcal{W}_{\mathcal{G}, D^{(1)},D^{(2)}}$
the index $\Omega$ labels distinct group-covariant channels with respect to $D^{(1)}$ and $D^{(2)}$, which are general $d$-dimensional representations of $\mathcal{G}$ including the reducible case.
The set of all group-covariant channels is denoted by
\begin{equation}
\label{eq:W_G}
\mathcal{W}_{\mathcal{G}}:=\bigcup_{D^{(1)},D^{(2)}}\mathcal{W}_{\mathcal{G}, D^{(1)}, D^{(2)}},
\end{equation}
where the union is over all $d$-dimensional representations of $\mathcal{G}$. 

The set of indices labeling distinct group-covariant channels with respect to $D^{(1)}$ and $D^{(2)}$
is denoted by
\begin{equation}
    \mathcal{F}_{\mathcal{G},D^{(1)},D^{(2)}}:=\{\Omega:\Phi_{D^{(1)}, D^{(2)}, \Omega}\in\mathcal{W}_{\mathcal{G}, D^{(1)},D^{(2)}}\}.  
\end{equation}
Then $\mathcal{F}_{\mathcal{G}}:=\bigcup_{D^{(1)},D^{(2)}}\mathcal{F}_{\mathcal{G}, D^{(1)}, D^{(2)}}$ is the set of all labels of group-covariant channels. 

Unitary conjugation is an equivalence relation~$\approx$
that partitions $\mathcal{F}_{\mathcal{G}}$ to equivalence classes and each class is denoted by $[\Omega]$.
Indeed labels of classes are inequivalent representation of the group.
We show that all elements of each class~$[\Omega]$
label the same group-covariant channel;
hence,
instead of labeling distinct group representations by a unitary representation of the group, we use a more appropriate notation;
i.e., we label each group-covariant channel by an equivalence class  $\Phi_{D^{(1)},D^{(2)},[\Omega]}$.

We denote the set of group-covariant generalized-extreme channels by $\mathcal{W}_{\mathcal{G},\text{gen}}\subset\mathcal{W}_{\mathcal{G}}$
which is a subset of all generalized-extreme channels: $\mathcal{W}_{\mathcal{G},\text{gen}}\subset {S_{\Phi_{\text{gen}}}}$
and has two exclusive subsets, namely quasi-extreme group-covariant channels and extreme group-covariant channels.
This set of channels are respectively denoted by
$\mathcal{W}_{\mathcal{G},\text{qe}}\subset\mathcal{W}_{\mathcal{G},\text{gen}}$ and $\mathcal{W}_{\mathcal{G},\text{ext}}\subset\mathcal{W}_{\mathcal{G},\text{gen}}$
with $\mathcal{W}_{\mathcal{G},\text{qe}}\bigcap\mathcal{W}_{\mathcal{G},\text{ext}}=\emptyset$.
We show that the subset of  $\mathcal{F}_{\mathcal{G}}/\!\approx$,
denoted by $\mathcal{F}_{\mathcal{G},\text{gen}}$,
which includes all irreducible $[\Omega]$s with dimension not exceeding $d$, labels all channels in $\mathcal{W}_{\mathcal{G},\text{gen}}$.

Thus, to construct group-covariant generalized-extreme channels, instead of solving Eq.~(\ref{eq:CovariantKraus}) for all $\Omega$ which is a brute-force approach, we solve Eq.~(\ref{eq:CovariantKraus}) for all $[\Omega]\in\mathcal{F}_{\mathcal{G},\text{gen}}$ to obtain Kraus operators of all group-covariant generalized-extreme channels. Then by testing whether
$\mathcal S$~(\ref{eq:SAA}) is a set of linearly independent operators, we classify,
according to Theorem~\ref{theorem:Choi},
resultant non-trivial channels into extreme and quasi-extreme classes. 
\subsubsection{Equivalent extreme channels}
\label{subsubsec:equivalentextremechannels}
In this subsection we partition the set of group-covariant channels and explain that it is enough to construct one representative of each class, 
which is more efficient than the direct approach of solving every extreme channel.
We discuss that,
if an element of a class of equivalent channels as described in Definition \ref{def:channelequivalence} is an extreme group-covariant channel,
then all elements of that class have that property. 

First we show that,
according to the equivalence relation in~(\ref{eq:equivalence}), all $\Phi_{D^{(1)},D^{(2)},[\Omega]}\in\mathcal{W}_{\mathcal{G},D^{(1)},D^{(2)}}$ and all $\Phi_{D'^{(1)},D'^{(2)},[\Omega]}\in\mathcal{W}_{\mathcal{G},D'^{(1)},D'^{(2)}}$ with $D^{(i)}$ and $D'^{(i)}$
being two unitarily equivalent representation of $\mathcal{G}$~(\ref{eq:equireps}), are equivalent. We denote the set of group-covariant channels with respect to all representations of $\mathcal{G}$ that are equivalent to~$D^{(1)}$ and~$D^{(2)}$ by $\mathcal{W}_{\mathcal{G},[D^{(1)}],[D^{(2)}]}$, 
which is an equivalence class.
Then we show that if  $\Phi_{D'^{(1)},D'^{(2)},[\Omega]}\in\mathcal{W}_{\mathcal{G},[D^{(1)}],[D^{(2)}]}$ is extreme, all channels in $\mathcal{W}_{\mathcal{G},[D^{(1)}],[D^{(2)}]}$ are extreme. 
Hence, for $[\Omega]\in\mathcal{F}_{\mathcal{G},\text{gen}}$ (defined in~\S\ref{subsubsec:constructinggegcchannels})
if
\begin{equation}
    \Phi_{D'^{(1)},D'^{(2)},[\Omega]}\in\mathcal{W}_{\mathcal{G},[D^{(1)}],[D^{(2)}]}
\end{equation}
is extreme/quasi-extreme,
then all channels in $\mathcal{W}_{\mathcal{G},[D^{(1)}],[D^{(2)}]}$ 
are extreme/quasi-extreme.
Hence, to construct all generalized extreme group-covariant channels, for each $[\Omega]\in\mathcal{F}_{\mathcal{G},\text{gen}}$,
we construct~$\Phi_{D^{(1)}, D^{(2)},[\Omega]}$.
Other elements of the classes are derived according to equivalence relation among quantum channels which is any arbitrary pre-post unitary conjugation. This step is repeated for all inequivalent representations $D^{(1)}$ and $D^{(2)}$. 


\subsection{Algorithmic approach}
\label{subsec:AlgorithmicApproach}
In this subsection we explain the algorithmic approach for constructing group-covariant extreme channels for finite discrete groups and compact connected Lie groups. 
{ We commence by constructing} the algorithm for solving Problems~\ref{prob:discrete}
and~\ref{prob:Lie} in~\S\ref{subsubsec:algdiscrete}.
We discuss constructing an Algorithm for Problem~\ref{prob:decision} in~\S\ref{subsubsec:algdecision}.
We express the algorithm as input and output,
including the type of each input and output
(such as symbol, integer, real number, bit, character and so on).
In our algorithmic approach,
we do not restrict ourselves to discrete mathematics;
rather we permit symbols and real- and complex-number entries in the register as we are focused on an algorithmic approach to the problem but not issues of computability or complexity associated with various computational models~\cite{AB09}.
In each procedure,
we employ required functions from our libraries
discussed in~\S\ref{subsubsec:library}.

\subsubsection{Algorithm for solving Problems~\ref{prob:discrete} and \ref{prob:Lie}}
\label{subsubsec:algdiscrete}
In this subsubsection,
we present our approach to developing the algorithm for solving Problem~\ref{prob:discrete} and \ref{prob:Lie}.
Specifically, we state the input, output and brief description of the procedure,
with the full explanation of the algorithm in the first algorithm of~\S\ref{subsec:pseudocoding}.
\paragraph{Input:}
A binary number that flags the type of the group, which is either a finite discrete group or a compact connected Lie group.
The name of any finite discrete group or a compact connected Lie group
expressed as a character string.
Examples of finite discrete group names include~$\mathbb{Z}_2$,
which is the cyclic group of order~2,
or~$S_3$,
which is the symmetric group of degree~3.
Hilbert-space dimension $d\in\mathbb{Z}^+$
is the other input.
\paragraph{Output:}
A finite number $C\in\mathbb{Z}^+$ is the first output representing the total number of generalized extreme group-covariant channels on~$d$-dimensional Hilbert space and the second output is the set of Kraus operators for all $C$ channels.
\paragraph{Procedure:}
We import required functions from the libraries discussed in~\S\ref{subsubsec:blibrary} and~\S\ref{subsubsec:rlibrary}
and declare necessary variables for the algorithm.
Then we solve the system of linear equations~(\ref{eq:LinearizedCovariantKrausDis}) and  (\ref{eq:LinearizedCovariantKrausLie}), respectively, for finite discrete group (Problem~\ref{prob:discrete}) and compact connected Lie group (Problem~\ref{prob:Lie}). In either case we solve the corresponding set of linear equations for each instance~$D^{(1)}$,~$D^{(2)}$ (inequivalent $d$-dimensional representations of the group) and~$\Omega$ (inequivalent irreps of the group with dimension less than or equal to $d$). 
\begin{itemize}
    \item 

If a non-trivial symbolic solution exists for this instance,
then this solution is unique,
and the algorithm constructs a set of Kraus operators for this solution
and imposes the trace-preserving condition on it,
which as explained in~\S\ref{subsubsec:obtaining} is done by first constructing $\Xi$~(\ref{def:Xi}) and second solving $\Xi=\mathds 1$.
If~$\Xi$ is diagonal, then a linear-solver algorithm solves a system of linear equations for modulus squares of symbols for Kraus matrices.
If the solution exists, it is stored and the counter $C$ for successful case increments.
Then the algorithm proceeds to the next instance. 
If $\Xi$ is not diagonal, as discussed in~\S\ref{subsubsec:obtaining},
a solver such as Python's SimPy aims to solve sets of quadratic equations.
If the solution is found within the allowed time,
the solution is written to the register,
the counter~$C$ increments,
and the algorithm proceeds to the next instance.
\item If the system of linear equation does not have a non-trivial solution, the algorithm goes to the next incident without incrementing the counter~$C$.
\end{itemize}
After testing all instances,
and generating a parametric solution to a system of linear equations for all instances where this solution exists,
an ordered list (with ordering determined by \textsc{gProps} defined in~\S\ref{subsubsec:rlibrary}) of all sets of Kraus operators is returned.
The algorithm returns finite number $C$ in the output.
If $C=0$,
then acceptable channels have not been found. Fig.~(\ref{fig:schem}) shows a schematic representation of algorithm for solving Problem \ref{prob:discrete} which is useful for going through the details of pseudocode in~\S\ref{subsec:pseudocoding}.
\begin{figure}
    \centering
    \includegraphics[width=\textwidth]{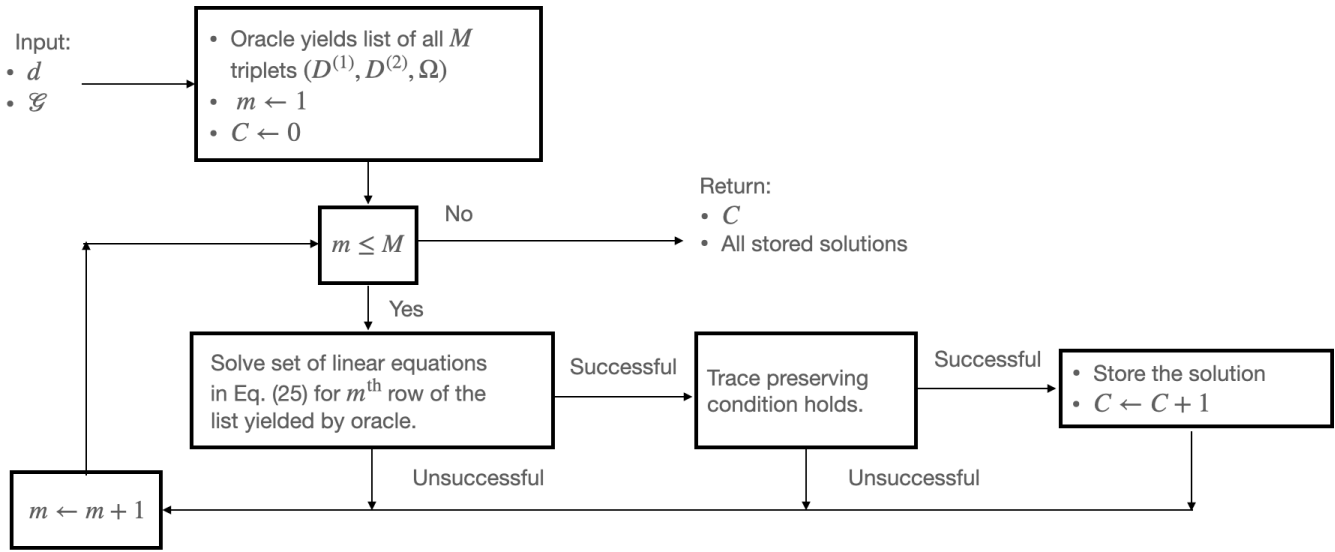}
    \caption{Schematic representation of the algorithmic approach for solving Problem~\ref{prob:discrete}}
    \label{fig:schem}
\end{figure}
\subsubsection{Algorithm for solving Problem~\ref{prob:decision}}
\label{subsubsec:algdecision}
In this subsubsection we describe inputs and outputs for the algorithm to solve Problem~\ref{prob:decision}.
When we present our procedure to determine whether a given channel is extreme or not. 
\paragraph{Input:}
A list of $K$ $d$-dimensional Kraus operators $\left\{A_{k\in[K]}\right\}$ with symbolic elements.
\paragraph{Output:}
The single-bit output is $\top$, denoting `true',
if the given Kraus operators correspond to an extreme channel;
otherwise the output is $\bot$,
denoting `false',
which means that the Kraus operators do not describe an extreme channel.
\paragraph{Procedure:}
We import required functions from the libraries~\S\ref{subsubsec:blibrary} and~\S\ref{subsubsec:rlibrary}
and declare necessary variables for the algorithm.
Then, from~$K$~$d$-dimensional symbolic matrices in the input $A_n$, $n\in[K]$, we construct a larger set of $K^2$~$d$-dimensional symbolic matrices by computing $B_{nn'}:=A_n^{\dagger} A_{n'}$ for all $n,n'\in[K]$. In the next step, the algorithm decides whether or not matrices $B_{nn'}$ are linearly independent.
Like typical algorithms for deciding linear independence,
our algorithm solves a system of linear homogeneous equations.
 If the only solution is trivial,
then the output is $\top$,
which means that the input channel is an extreme channel; otherwise the output is $\bot$ which means that the input channel is not extreme.
\section{Results}
\label{sec:results}
In this section we present our new results.
First we begin in~\S\ref{subsec:constructinggceca} with establishing our method for constructing group-covariant generalized-extreme channels  for finite discrete and compact connected Lie groups and for given finite Hilbert-space dimension.
Then, in~\S\ref{subsec:pseudocoding},
we present our pseudocode for constructing generalized extreme group-covariant channels.
Finally,
in~\S\ref{subsec:explicitexamples},
we solve and elaborate on four finite discrete-group and two compact connected Lie group examples of group-covariant generalized-extreme channels
as interesting in their own right but also excellent illustrations of the generality of our method.
\subsection{Algebraic construction of group-covariant extreme channels}
\label{subsec:constructinggceca}
In this subsection we establish a procedure for constructing group-covariant generalized-extreme channels for given finite discrete and compact connected Lie groups and also given a finite Hilbert-space dimension.
We use the equivalence relation in Definition~\ref{def:channelequivalence} to show that either all elements of an equivalence class are extreme or else none are extreme.
This property of classes simplifies the procedure for constructing all generalized extreme group-covariant channels.

\subsubsection{Labeling group-covariant channels}
\label{subsubsec:rGenExtGCovariantChannels}
In this subsubsection,
we explain how to construct
the set of group-covariant generalized-extreme channel labels~$\mathcal{F}_{\mathcal{G},\text{gen}}$,
which is explained in~\S\ref{subsubsec:constructinggegcchannels}.
An element~$[\Omega]\in\mathcal{F}_{\mathcal{G},\text{gen}}$, where $\mathcal{F}_{\mathcal{G},\text{gen}}$ is defined in~\S\ref{subsubsec:constructinggegcchannels} and discussed more in~\S\ref{subsubsec:equivalentextremechannels}, labels group-covariant generalized-extreme channels which are elements of the set $\mathcal{W}_{\mathcal{G},\text{gen}}$, defined in~\S\ref{subsubsec:constructinggegcchannels},
with~$\mathcal G$ the name of the group.
We propose two lemmas 
that are needed to construct group-covariant generalized-extreme channels.

The following lemma shows that (matrix) labels~$\Omega$ and~$\Omega'$,
which are unitarily equivalent,
label the same group-covariant channel;
i.e.,
\begin{equation}
\label{eq:PhiOmegaOmega'}
    \Phi_{D^{(1)},D^{(2)},\Omega}
        =\Phi_{D^{(1)},D^{(2)},\Omega'}.
\end{equation}
\begin{lemma}
\label{lemma:equivalentOmega}
All unitarily equivalent representations of $\mathcal{G}$
label the same group-covariant channel.
\end{lemma}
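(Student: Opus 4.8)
The plan is to show that if $\Omega' = W\Omega W^\dagger$ for some $K$-dimensional unitary $W$ (with matrix entries $W_{kl}$), then the Kraus operators solving the covariance equation~(\ref{eq:CovariantKraus}) for the label $\Omega'$ are related to those solving it for $\Omega$ by the \emph{same} unitary mixing, namely $A'_k = \sum_{l} W_{kl} A_l$. Since the Kraus representation of a channel is unique only up to an isometry (as recalled after Eq.~(\ref{eq:Choiop})), and a $K\times K$ unitary is in particular such an isometry, the two sets $\{A_k\}$ and $\{A'_k\}$ describe \emph{the same} channel, giving Eq.~(\ref{eq:PhiOmegaOmega'}). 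So the lemma reduces to a direct verification that $W$-mixing a solution of the $\Omega$-covariance equation produces a solution of the $\Omega'$-covariance equation.

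First I would set up the computation. Suppose $\{A_k\}_{k\in[K]}$ satisfies ${D^{(2)}}^\dagger(g)A_k D^{(1)}(g) = \sum_l \Omega_{kl}(g) A_l$ for all $g$, and define $A'_k := \sum_m W_{km} A_m$. Then
\begin{equation}
{D^{(2)}}^\dagger(g)A'_k D^{(1)}(g)
= \sum_m W_{km}\,{D^{(2)}}^\dagger(g)A_m D^{(1)}(g)
= \sum_{m,l} W_{km}\,\Omega_{ml}(g)\, A_l .
\end{equation}
Now I insert $\mathds1 = W^\dagger W$ on the $A_l$ index: writing $A_l = \sum_n (W^\dagger W)_{ln} A_n = \sum_{n,p} \overline{W_{pl}}\, W_{pn} A_n$... more cleanly, since $A'_n = \sum_p W_{np} A_p$ means $A_p = \sum_n \overline{W_{np}} A'_n$ (using unitarity $W^{-1}=W^\dagger$), substitute $A_l = \sum_n \overline{W_{nl}} A'_n$ to obtain
\begin{equation}
{D^{(2)}}^\dagger(g)A'_k D^{(1)}(g)
= \sum_{n}\Big(\sum_{m,l} W_{km}\,\Omega_{ml}(g)\,\overline{W_{nl}}\Big) A'_n
= \sum_n \big(W\Omega(g)W^\dagger\big)_{kn} A'_n
= \sum_n \Omega'_{kn}(g) A'_n,
\end{equation}
which is precisely the covariance equation with label $\Omega'$. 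I would also note that $W$-mixing preserves the trace-preserving constraint~(\ref{eq:TP}), since $\sum_k {A'_k}^\dagger A'_k = \sum_{k,m,n} \overline{W_{km}} W_{kn} A_m^\dagger A_n = \sum_{m,n}(W^\dagger W)_{nm}... $ i.e. $= \sum_m A_m^\dagger A_m = \mathds1$; so $\{A'_k\}$ is a genuine channel, not merely a CP map. The converse direction (starting from $\Omega'$ and mixing by $W^\dagger$) is identical, so the correspondence between solution sets is a bijection. For the compact connected Lie group case one runs the same argument on Eq.~(\ref{eq:CovariantLieKraus}) with $\Omega(T_n)$ in place of $\Omega(g)$; the conjugation $W\Omega(T_n)W^\dagger$ is exactly the statement that the Hermitian representations of $\mathfrak g$ are equivalent, so no separate work is needed.

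I do not expect a serious obstacle here; the only point requiring a little care is making explicit the logical step ``same Kraus operators up to a unitary $\Rightarrow$ same channel,'' i.e.\ recognizing that the freedom in Eq.~(\ref{eq:CovariantKraus}) under $A_k \mapsto \sum_l W_{kl} A_l$ for unitary $W$ is exactly the Kraus-representation gauge freedom, so that $\Phi_{D^{(1)},D^{(2)},\Omega}$ is genuinely well-defined on the equivalence class $[\Omega]$ rather than on the matrix $\Omega$ itself. A secondary bookkeeping point is that one should confirm the two linear systems have solution spaces of the same dimension — but this is immediate since $W$-mixing is an invertible linear map intertwining $\ker P$ (or $\ker Q$) for the two labels, as shown by the bijection above. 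Everything else is the routine index manipulation sketched above, so the proof is short.
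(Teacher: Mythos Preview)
Your proposal is correct and follows essentially the same approach as the paper: define $A'_k=\sum_m W_{km}A_m$, verify that this set satisfies the covariance equation with label $\Omega'=W\Omega W^\dagger$, and invoke the Kraus gauge freedom to conclude the channels coincide. Your write-up is in fact more thorough than the paper's---you explicitly check trace preservation, spell out the bijection via $W^\dagger$, and note the Lie-algebra case---whereas the paper simply states Eq.~(\ref{eq:ChannelC}) and immediately concludes Eq.~(\ref{eq:PhiOmegaOmega'}).
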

\begin{proof}
We being by recognizing that Eq.~(\ref{eq:CovariantKraus}) holds for
Kraus operators~$\{A_k\}_{k\in[K]}$
if 
\begin{equation}
\label{eq:OmegaOmega'}
    \Phi_{D^{(1)},D^{(2)},\Omega}\in\mathcal{W}_{\mathcal{G}, D^{(1)}, D^{(2)}}
\end{equation}
for each~$\Phi_{D^{(1)},D^{(2)},\Omega}$
described by this set~$\{A_k\}_{k\in[K]}$~(\ref{eq:Krausrep}).
If $\Omega'$ is a representation of $\mathcal{G}$ that is unitarily equivalent to~$\Omega$ through unitary conjugation by $U$,
i.e.,
\begin{equation}
\Omega'(g)=U\Omega(g) U^{\dagger},\; \forall g\in\mathcal{G}, 
\end{equation}
then,
for
\begin{equation}
\label{eq:CA}
    A'_\ell:=\sum_{k=1}^K U_{\ell k}A_k\;
    \forall\ell\in[K]
\end{equation}
we obtain
\begin{equation}
\label{eq:ChannelC}
    {D^{(2)}}^{\dagger}(g)A'_\ell D^{(1)}(g)
    =\sum_{\ell'=1}^K \Omega'_{\ell\ell'}(g)A'_{\ell'},\; \forall g\in\mathcal{G}.
\end{equation}
Equation~(\ref{eq:ChannelC}) implies that channel $\Phi_{D^{(1)},D^{(2)},\Omega'}$,
described by Kraus operators $\{A'_\ell\}_{\ell\in[K]}$, belongs to $\mathcal{W}_{\mathcal{G},D^{(1)},D^{(2)}}$. 
Therefore, from Eq.~(\ref{eq:CA}) we verify Eq.~(\ref{eq:PhiOmegaOmega'}).
\end{proof}
Equation~(\ref{eq:OmegaOmega'}) shows that the triplets $(D^{(1)},D^{(2)},\Omega)$ and $(D^{(1)},D^{(2)},\Omega')$ label the same group-covariant channel if $\Omega$ and $\Omega'$ are unitarily equivalent.
Unitary conjugation is an equivalence relation~$\!$ 
that partitions the set of labels of group-covariant channels $\mathcal{F}_{\mathcal{G}}$ (defined in \S\ref{subsubsec:constructinggegcchannels}) into equivalence classes denoted by $[\Omega]$. Distinct classes are represented by inequivalent representations of $\mathcal{G}$.
According to Lemma~\ref{lemma:equivalentOmega} all elements of each class~$[\Omega]$ label the same group-covariant channel. Hence, we label group-covariant channels by equivalence classes: $\Phi_{D^{(1)},D^{(2)},[\Omega]}$. 
As a consequence of Lemma~\ref{lemma:equivalentOmega}, for constructing group-covariant channels, instead of solving Eq.~(\ref{eq:CovariantKraus}) for all $\Omega$s,
just solving Eq.~(\ref{eq:CovariantKraus}) for inequivalent $\Omega$s,
which represent distinct classes $[\Omega]$s,
suffices.

The second lemma shows a necessary condition for $[\Omega]$ to label an extreme group-covariant channel.
\begin{lemma}
\label{lemma:NecCondOmega}
If
\begin{equation}
\label{eq:PhiClassOmega}
    \Phi_{D^{(1)},D^{(2)},[\Omega]}\in\mathcal{W}_{\mathcal{G},\text{ext}}
\end{equation}
then $[\Omega]$
is any class of irreps of $\mathcal{G}$ with dimension not exceeding~$d$. 
\end{lemma}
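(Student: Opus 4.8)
The plan is to establish the contrapositive: if $[\Omega]$ is \emph{not} a class of irreps of dimension at most $d$, then the associated group-covariant channel cannot be extreme. There are two ways this hypothesis can fail, and I would treat them separately. First, suppose $\Omega$ is reducible. By Remark~\ref{remark:reducibleOmega}, a group-covariant channel whose Kraus operators satisfy Eq.~(\ref{eq:CovariantKraus}) with a reducible $\Omega$ is a convex combination of other channels that are themselves group-covariant with respect to the same $D^{(1)}$ and $D^{(2)}$. If that convex combination is nontrivial (two genuinely distinct channels), then by Definition~\ref{def:extremechannels} the channel is not extreme, so it cannot lie in $\mathcal{W}_{\mathcal{G},\text{ext}}$.

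Second, suppose $\Omega$ is irreducible but has dimension exceeding $d$. Here I would invoke Theorem~\ref{theorem:Choi} together with the dimension bound it entails. The number $K$ of Kraus operators equals $\dim\Omega$ (stated after Eq.~(\ref{eq:CovariantKraus}), citing~\cite{KMM11}). Theorem~\ref{theorem:Choi} says an extreme channel needs $\{A_k^\dagger A_l\}_{k,l\in[K]}$ linearly independent, which forces $K^2 \le d^2$, i.e.\ $K \le d$ (inequality~(\ref{eq:KrausUpperBound})). Hence if $\dim\Omega = K > d$, the operator family $\mathcal{S}$ of Eq.~(\ref{eq:SAA}) has more than $d^2$ members living in the $d^2$-dimensional space $\mathcal{L}(\mathscr{H})$ and is therefore linearly dependent; by Theorem~\ref{theorem:Choi} the channel is not extreme. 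Combining the two cases: membership in $\mathcal{W}_{\mathcal{G},\text{ext}}$ rules out both reducibility and dimension greater than $d$, so $[\Omega]$ must be a class of irreps with $\dim\Omega \le d$, which is the claim. Since by Lemma~\ref{lemma:equivalentOmega} the channel depends only on the equivalence class $[\Omega]$, and irreducibility and dimension are class invariants, the statement is well posed at the level of classes.

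The subtle point — and the step I expect to need the most care — is the reducible case: one must argue that the convex decomposition guaranteed by Remark~\ref{remark:reducibleOmega} is genuinely \emph{nontrivial}, i.e.\ that it expresses the channel as a mixture of at least two \emph{distinct} group-covariant channels rather than trivially as a mixture of copies of itself. I would handle this by writing $\Omega \cong \Omega_1 \oplus \Omega_2$ with each summand carrying a nonempty block of Kraus operators, noting that the sub-blocks $\{A_k\}_{k \in \text{block } j}$ (suitably renormalized via the trace-preserving constraint) define channels $\Phi_1, \Phi_2$ that are each group-covariant with respect to $D^{(1)}, D^{(2)}$, and then observing that $\Phi = p\Phi_1 + (1-p)\Phi_2$ with $p \in (0,1)$ fixed by the relative weights in $\Xi = \mathds{1}$; distinctness of $\Phi_1$ and $\Phi_2$ follows because their Choi ranks (equal to the respective block sizes) differ from that of $\Phi$, or more directly because a channel equal to a trivial mixture of itself would already contradict extremality trivially. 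Should the intended proof instead route the reducible case through Theorem~\ref{theorem:Choi} as well — arguing that a reducible $\Omega$ forces a specific block structure on the $A_k^\dagger A_l$ that destroys linear independence — that would be an acceptable alternative, but I find the convexity argument via Remark~\ref{remark:reducibleOmega} cleaner and more self-contained.
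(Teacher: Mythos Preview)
Your proposal is correct and follows essentially the same route as the paper: invoke Remark~\ref{remark:reducibleOmega} to rule out reducible~$\Omega$, then use $K=\dim\Omega$ together with the bound $K\le d$ from Theorem~\ref{theorem:Choi}/inequality~(\ref{eq:KrausUpperBound}) to bound the dimension. The paper's proof is terser and does not pause over the nontriviality of the convex decomposition or the class-level well-posedness, but the logical skeleton is identical.
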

\begin{proof}
If Eq.~({\ref{eq:PhiClassOmega}}) holds, then 
according to Remark~\ref{remark:reducibleOmega},
$[\Omega]$ is a class of irreducible representation of $\mathcal{G}$ and,
due to Inequality~(\ref{eq:KrausUpperBound}),
the number of Kraus operators of $\Phi_{D^{(1)},D^{(2)},[\Omega]}$ does not exceed~$d$.
On the other hand, according to Eq.~(\ref{eq:CovariantKraus}), 
the number~$K$ of  Kraus operators of $\Phi_{D^{(1)},D^{(2)},[\Omega]}$
equals the dimension of the representation of the group denoted by~$\Omega$.
Therefore, for all  $\Phi_{D^{(1)},D^{(2)},[\Omega]}$, $[\Omega]$s are classes of irreducible representations of the group with dimension less than or equal to~$d$. 
\end{proof} 

Lemma~\ref{lemma:NecCondOmega} yields necessary but not sufficient conditions for $\Phi_{D^{(1)},D^{(2)},[\Omega]}$ to be extreme. 
If,
for $\Phi_{D^{(1)},D^{(2)},[\Omega]}\in\mathcal{W}_{\mathcal{G},\text{ext}}$, 
$\Omega$ is an irreducible representation of the group with dimension not exceeding $d$, 
and recalling that the number of Kraus operators of $\Phi_{D^{(1)},D^{(2)},[\Omega]}$ is equal to the dimension of~$\Omega$, 
one concludes that the number of Kraus operators of  $\Phi_{D^{(1)},D^{(2)},[\Omega]}$ does not exceed~$d$. 
Therefore, the Choi rank of $\Phi_{D^{(1)},D^{(2)},[\Omega]}$ is less than or equal to~$d$.
Hence, $\Phi_{D^{(1)},D^{(2)},[\Omega]}$ is a generalized-extreme channel: $\Phi_{D^{(1)},D^{(2)},[\Omega]}\in\mathcal{W}_{\mathcal{G},\text{gen}}$. 
Therefore, $\mathcal{F}_{\mathcal{G},\text{gen}}$, 
which is the set of labels of all group-covariant generalized-extreme channels, is the set of irreps of $\mathcal{G}$ with dimension not exceeding~$d$. 
Hence, solving Eq.~(\ref{eq:CovariantKraus}) for all $[\Omega]\in\mathcal{F}_{\mathcal{G},\text{gen}}$
ensures that we construct all  group-covariant generalized-extreme channels. Actually an innovation in proposing a successful algorithmic approach for constructing all group-covariant generalized extreme channels, is restriction to irreps $\Omega$  with dimension not exceeding $d$. By testing whether or not~$\mathcal{S}$ in Eq.~(\ref{eq:SAA}) is a set of linearly independent operators, $\Phi_{D^{(1)},D^{(2)},[\Omega]}\in\mathcal{W}_{\mathcal{G},\text{gen}}$ are classified as either  group-covariant extreme or group-covariant quasi-extreme channels. 
 
\subsubsection{Equivalent channels}
\label{subsubsec:EquiChannels}

For a finite discrete group or a compact connected Lie group $\mathcal{G}$ and Hilbert-space dimension~$d$,
the set of group-covariant channels is denoted $\mathcal{W}_{\mathcal{G}}$~(\ref{eq:W_G}), where the union is over all $d$-dimensional representations of $\mathcal{G}$. In this subsubsection, by using the 
equivalence relation between quantum channels, Definition~(\ref{def:channelequivalence}),  we partition the set of group-covariant quantum channels. 
Then we show that,
if an element of an equivalence class is extreme, then all channels in that class are extreme.
Thus, we show that group-covariant channels with respect to equivalent representations of~$\mathcal{G}$ belong to the same equivalence class and if an element of an equivalence class is extreme/quasi-extreme, then all elements of that class are extreme/quasi-extreme.
This observation leads us to construct  group-covariant extreme channels more efficiently. 

Equivalence relation $\sim$ in Eq.~(\ref{eq:equivalence}) partitions
the set of quantum channels
into disjoint equivalence classes $[\Phi]$.
The next lemma shows that, if one of the channels in an equivalence class  $[\Phi]$ is extreme, the other elements of that class are extreme as well. 
\begin{lemma}
\label{lemma:equiextreme}
If channel~$\Phi\in{S_{\Phi_{\text{ext}}}}$, then all channels belonging to $[\Phi]$ are extreme. 
\end{lemma}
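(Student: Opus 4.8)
The plan is to argue by contraposition using the characterization of the equivalence relation $\sim$ through pre- and post-unitary conjugation. Suppose $\Phi \in S_{\Phi_{\text{ext}}}$ and let $\Phi' \in [\Phi]$ be any channel in the same equivalence class. By Definition~\ref{def:channelequivalence} there exist $d$-dimensional unitaries $U$ and $V$ such that $\Phi' = \mathcal{T}_U \circ \Phi \circ \mathcal{T}_V$. If $\{A_k\}_{k\in[K]}$ is a Kraus representation of $\Phi$, then $\{A'_k := U A_k V\}_{k\in[K]}$ is a Kraus representation of $\Phi'$: this follows immediately from writing out $\Phi'(\bullet) = U\big(\sum_k A_k (V\bullet V^\dagger) A_k^\dagger\big) U^\dagger = \sum_k (U A_k V)\bullet (U A_k V)^\dagger$, and noting that $\sum_k A'^\dagger_k A'_k = V^\dagger\big(\sum_k A_k^\dagger A_k\big) V = V^\dagger \mathds{1} V = \mathds{1}$, so trace preservation is retained.

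First I would reduce the extremality question for $\Phi'$ to a linear-independence statement about the operator set $\mathcal{S}' = \{A'^\dagger_k A'_l\}_{k,l\in[K]}$ via Theorem~\ref{theorem:Choi}. Next I would compute $A'^\dagger_k A'_l = (U A_k V)^\dagger (U A_l V) = V^\dagger A_k^\dagger U^\dagger U A_l V = V^\dagger (A_k^\dagger A_l) V$. Thus every element of $\mathcal{S}'$ is obtained from the corresponding element of $\mathcal{S} = \{A_k^\dagger A_l\}_{k,l\in[K]}$ by the \emph{fixed} invertible congruence map $X \mapsto V^\dagger X V$. Since this map is linear and invertible on $\mathcal{L}(\mathscr{H})$ (its inverse is $X \mapsto V X V^\dagger$), it preserves linear independence of any finite collection of operators. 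Because $\Phi$ is extreme, $\mathcal{S}$ is linearly independent by Theorem~\ref{theorem:Choi}; hence $\mathcal{S}'$ is linearly independent; hence $\Phi'$ is extreme, again by Theorem~\ref{theorem:Choi}. Since $\Phi'$ was an arbitrary element of $[\Phi]$, every channel in $[\Phi]$ is extreme.

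I do not anticipate a serious obstacle here; the statement is essentially a formal consequence of Theorem~\ref{theorem:Choi} together with the fact that unitary congruence is an invertible linear map. The one point requiring mild care is the bookkeeping with Kraus representations: a given channel admits many Kraus sets, related by isometries, so strictly one should note that the specific Kraus set $\{U A_k V\}$ produced above is a legitimate representation of $\Phi'$ and that Theorem~\ref{theorem:Choi} may be applied to \emph{any} Kraus representation (its conclusion is representation-independent, since the Choi rank and hence extremality is a property of the channel, not of the chosen operators). With that remark in place the argument is complete. The same computation, read backwards, also shows the converse inclusion $\Phi' \in S_{\Phi_{\text{ext}}} \Rightarrow \Phi \in S_{\Phi_{\text{ext}}}$, so extremality is genuinely a class property; this will be convenient when we later lift the result to the group-covariant setting in~\S\ref{subsubsec:EquiChannels}.
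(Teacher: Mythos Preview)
Your proof is correct and follows essentially the same route as the paper's: both invoke Theorem~\ref{theorem:Choi} and observe that the passage from $\{A_k^\dagger A_\ell\}$ to $\{A_k'^\dagger A_\ell'\}$ is conjugation by a fixed unitary, hence preserves linear independence. The only cosmetic difference is that the paper starts from an arbitrary Kraus set for $\Phi'$ and therefore carries an extra isometric freedom $W$ relating it to $\{U^\dagger A_m V^\dagger\}$, whereas you sidestep $W$ by directly exhibiting $\{UA_kV\}$ as a Kraus set for $\Phi'$; your streamlined version is fine since the sufficiency direction of Theorem~\ref{theorem:Choi} applies to any Kraus representation.
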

\begin{proof}
Let an extreme channel~$\Phi$ be described by a set of Kraus operators $\{A_k\}_{k\in[K]}$.
If the channel $\Phi'$,
described by a set of Kraus operators $\{A'_k\}_{k\in[K]}$, belongs to the same class as~$\Phi$, then unitary operators $U$ and $V$ exist such that Eq.~(\ref{eq:equivalence}) holds. Therefore, 
\begin{equation}
    A'_k=\sum_m W_{km}U^{\dagger}A_mV^{\dagger}
\end{equation}
with $W$ being a $d$-dimensional unitary operator. To show that $\Phi'$ is an extreme channel, we show that operators $\{{A'_k}^{\dagger}A'_\ell\}$ are linearly independent (See Theorem~\ref{theorem:Choi}).
If
\begin{equation}
\sum_{k,\ell}X_{k\ell}{A'_k}^{\dagger}A'_\ell=0,
\end{equation}
then
\begin{equation}
\sum_{m,n}(W^{\dagger}XW)_{mn}A_m^{\dagger}A_n=0
\end{equation}
 is easy to see.
 As $\Phi$ is extreme, according to Theorem~\ref{theorem:Choi}, $X=0.$
Therefore, the set of operators $\{{{A'}_k}^{\dagger}A'_\ell\}$ is linearly independent.
Hence, $\Phi'\in{S_{\Phi_{\text{ext}}}}$.
\end{proof}

Per Definition~\ref{def:channelequivalence}, 
if $D^{(i)}$ and ${D'}^{(i)}$ for $i\in[2]$ are unitarily equivalent representations of the group (\ref{eq:equireps}), 
then Eq.~(\ref{eq:VPhiU}) holds. Therefore, according to Remark~\ref{remark: UniEquivReps},
channels $\Phi_{D^{(1)},D^{(2)},[\Omega]}\in\mathcal{W}_{\mathcal{G},D^{(1)},D^{(2)}}$ and $\Phi'_{D'^{(1)},D'^{(2)},[\Omega]}\in\mathcal{W}_{\mathcal{G},D'^{(1)},D'^{(2)}}$ are equivalent and belong to the same class. We denote the set of all group-covariant channels with respect to representations of $\mathcal{G}$,
which are unitarily equivalent 
to~$D^{(1)}$ and $D^{(2)}$,
by $\mathcal{W}_{\mathcal{G},[D^{(1)}],[D^{(2)]}}$, where labels of distinct classes $[D^{(i)}]$s are inequivalent representations of the group. 
\begin{remark}
\label{remark:equivalentgenextchannels}
According to Lemma \ref{lemma:equiextreme}, if 
$\Phi_{D^{(1)},D^{(2)},[\Omega]}\in\mathcal{W}_{\mathcal{G},[D^{(1)}],[D^{(2)}]}$ is extreme, then all other channels in $\mathcal{W}_{\mathcal{G},[D^{(1)}],[D^{(2)}]}$ are extreme.
Hence, for group-covariant generalized-extreme channels,
that is for all $\Phi_{D^{(1)},D^{(2)},[\Omega]}$ with $[\Omega]\in\mathcal{F}_{\mathcal{G},\text{gen}}$ (defined in \S\ref{subsubsec:constructinggegcchannels}), if 
$\Phi_{D^{(1)},D^{(2)},[\Omega]}\in\mathcal{W}_{\mathcal{G},[D^{(1)}],[D^{(2)}]}$ is extreme/quasi-extreme, then all other channels in $\mathcal{W}_{\mathcal{G},[D^{(1)}],[D^{(2)}]}$ are extreme/quasi-extreme. Therefore,
to construct $\mathcal{W}_{\mathcal{G}}$,
instead of constructing all elements of $\mathcal{W}_{\mathcal{G},D^{(1)},D^{(2)}}$ and then taking the union over $d$-dimensional representation of $\mathcal{G}$,
we just construct one channel~$\Phi_{D^{(1)},D^{(2)},[\Omega]}$ in each $\mathcal{W}_{\mathcal{G},[D^{(1)}],[D^{(2)}]}$ 
with $[\Omega]\in\mathcal{F}_{\mathcal{G},\text{gen}}$. The other extreme/quasi-extreme channels can be constructed from this representative channel by unitary conjugations before and after the action of the channel.
\end{remark}
\subsection{Pseudocoding the construction of generalized extreme group-covariant channels:}
\label{subsec:pseudocoding}
In this subsection we present our pseudocode results for constructing descriptions of group-covariant generalized-extreme channels.
Typically construction of channel descriptions are accomplished mathematically but of course not successfully in general due to difficult in solving such generic problems.
Our approach is not to write the mathematical solution in general but rather the algorithm that generates the solution on an appropriate autonomous logical machine with sufficient resources.
Our approach synthesizes techniques from computing and representation theory and thus involves knowledge from both fields. 
Here, we express our algorithmic approach for solving Problem \ref{prob:discrete}, Problem  \ref{prob:Lie} and Problem \ref{prob:decision} as pseudocodes.
For more details on pseudocode, notation and library used in the pseudocode see \S\ref{subsec:Pseudocode}, \S\ref{subsubsec:datatypes} and \S\ref{subsubsec:library}.
\begin{algorithm}[H]
\caption{Generate all group-covariant generalized-extreme channels up to unitary equivalence for a finite discrete group or a compact connected Lie group}
\label{alg:Krausoperatorsdiscgroup}
\begin{algorithmic}[1]
\Require 
\Statex \texttt{character} \textsc{gName}
\Comment Name of finite discrete or compact connected Lie group~$\mathcal{G}$
\Statex \texttt{binary} \textsc{gType}\Comment{$0$ for finite discrete group; $1$ for compact connected Lie group}
\Statex \texttt{natural} \textsc{hDim}
\Comment{Hilbert-space dimension}
\Ensure
\Statex \texttt{symjagged}[ ][ ][\textsc{hDim}][\textsc{hDim}] \textsc{kraus}\Comment {Set of Kraus operators with dimension \textsc{hDim}}
\Statex \texttt{natural} \textsc{count}\Comment{Number of generalized extreme \textsc{gName}-covariant channels}
\Procedure{\textsc{KrOpDiscG}}{\textsc{gName},\textsc{gType},\textsc{hDim}}
\State IMPORT \textsc{solveChannel}, \textsc{delete}, \textsc{dag}, \textsc{id}, \textsc{symSolve}, \textsc{props}, \textsc{transp}, \textsc{vecToMatr}, \textsc{zero}\Comment{Import from libraries~\S\ref{subsubsec:blibrary},~\S\ref{subsubsec:rlibrary}}
\State \texttt{posinteger} \textsc{numIrrep}
\State \texttt{posinteger} \textsc{numRep}
\State \texttt{posinteger}  \textsc{numGen}
\State \texttt{posinteger}[ ] \textsc{dim}
\State \texttt{symbol}[ ][ ][\textsc{hDim}][\textsc{hDim}] \textsc{rep}
\State \texttt{symjagged}[ ][ ][ ][ ] \textsc{irrep}
\State \texttt{symjagged}[ ][ ][\textsc{hDim}][\textsc{hDim}] \textsc{kraus}
\State \texttt{\texttt{natural}} \textsc{count}\Comment{Counts the number of generalized extreme \textsc{gName}-covariant channels}
\State (\textsc{numIrrep},\textsc{numRep},\textsc{numGen},\textsc{dim},\textsc{irrep},\textsc{rep})$\gets$ \textsc{props}(\textsc{gName},\textsc{gType},\textsc{hDim})
\State \texttt{symbol}[\textsc{numRep}][\textsc{numRep}][\textsc{numIrrep}][\textsc{numGen}][ ][ ]\textsc{b}\Comment{Coefficient matrix in Eq.~(\ref{eq:LinearizedCovariantKrausDis})/(\ref{eq:LinearizedCovariantKrausLie}) (\textsc{gType}=0/1)}
\State \textsc{count}$\gets 1$
\For{\textsc{m}$\gets0$ to $\textsc{numIrrep}-1$}

\For{\textsc{k}$\gets0$ to $\textsc{numRep}-1$}
   \For{\textsc{l}$\gets0$ to $\textsc{numRep}-1$}
   \If{\textsc{gType}$=0$}
   \For{\textsc{i}$\gets0$ to $\textsc{numGen}-1$}
      \State\textsc{b}[\textsc{count}][\textsc{i}]$\gets\bigoplus_{j=0}^{\textsc{dim}[\textsc{m}]-1}\left(\textsc{dag}(\textsc{gRep}[\textsc{k}][\textsc{i}])
      \otimes\textsc{transp}(\textsc{gRep}[\textsc{l}][\textsc{i}])\right)-\textsc{gIrrep}[\textsc{m}][\textsc{i}]\otimes\textsc{id}(\textsc{hDim}^2)$
   \EndFor
   \Else
      \For{\textsc{i}$\gets0$ to $\textsc{numGen}-1$}
      \State\textsc{b}[\textsc{count}][\textsc{i}]$\gets\bigoplus_{j=0}^{\textsc{dim}[\textsc{m}]-1}\left(\textsc{gRep}[\textsc{k}][\textsc{i}]\otimes\textsc{id}(\textsc{hDim})-\textsc{id}(\textsc{hDim})\otimes
   \textsc{dag}(\textsc{gRep}[\textsc{l}][\textsc{i}])\right)-\textsc{omega}[\textsc{m}][\textsc{i}]\otimes\textsc{id}(\textsc{hDim}^2)$
   \EndFor
   \EndIf
   \State \texttt{symbol}[$\textsc{dim}[\textsc{m}]*\textsc{hDim}^2$] \textsc{x}
  \State \textsc{x}$\gets$ \textsc{symSolve}(\textsc{b}[\textsc{count}])
   \If{\textsc{x}$\neq$ \textsc{zero}($\textsc{dim}[\textsc{m}]*\textsc{hDim}^2$)}\Comment{Tests if the covariant CP exists}
    \State \textsc{kraus}[\textsc{count}]$\gets$\textsc{vecToMatr}(\textsc{x},\textsc{dim}[\textsc{m}])
    \State\textsc{kraus}[\textsc{count}]$\gets$\textsc{solveChannel}(\textsc{kraus}[\textsc{count}])
    \If{\textsc{kraus}$\neq$\textsc{zero}(\textsc{dim}(\textsc{m}),\textsc{hDim},\textsc{Hdim})}\Comment{Tests if the covariant channel exists}
    \State \textsc{count}$\gets\textsc{count}+1$
    \EndIf
    \State \textsc{delete}(\textsc{x})\Comment{Delete \textsc{x} and clear space}
    \EndIf
   \EndFor
\EndFor
\EndFor
\State RETURN \textsc{count}
\State RETURN \textsc{kraus}
\EndProcedure
\end{algorithmic}
\end{algorithm}
\begin{algorithm}[H]
\caption{Test whether a given channel is extreme or not}
\label{alg:TestExtremality}
\begin{algorithmic}[1]
\Require
\texttt{symbol}[\textsc{num}][\textsc{hDim}][\textsc{hDim}] \textsc{kraus}\Comment{Kraus operators of the channel}
  \Ensure
\Statex \texttt{binary} \textsc{isExtreme}\Comment{ \textsc{isExtreme} is~0 for an extreme channel and~1 for a quasi-extreme channel}
\Procedure{testExt}{\textsc{kraus}}
\State IMPORT \textsc{dag}, \textsc{symSolve}, 
\textsc{zero} \Comment{Import from libraries~\S\ref{subsubsec:blibrary} and~\S\ref{subsubsec:rlibrary}}
\State \texttt{binary} \textsc{isExtreme}
\State \texttt{natural} \textsc{n}\Comment{Counts loops}
\State \texttt{symbol}[1][\textsc{$\textsc{hDim}^2$}][$\textsc{num}^2$] \textsc{matCoeff}\Comment{Matrix of coefficients}
\State \texttt{symbol}[$\textsc{num}^2$] \textsc{solusion}\Comment{Solution of system of linear equations}
\State \texttt{symbol}[$\textsc{num}^2$][\textsc{hDim}][\textsc{hDim}] \textsc{scratchPad}\Comment{Scratch space for calculating}
\For{\textsc{i}$\gets0$ to $\textsc{num}-1$}
    \For{\textsc{j}$\gets0$ to $\textsc{num}-1$}
      \State\textsc{scratchPad}[\textsc{n}]$\gets$\textsc{dag}(\textsc{kraus}[\textsc{i}])$\times$\textsc{kraus}[\textsc{j}]
    \EndFor
    \State \textsc{n}$\gets\textsc{n}+1$
\EndFor
\State \textsc{n}$\gets0$
\For{\textsc{i}$\gets0$ to $\textsc{hDim}^2-1$}
     \For{\textsc{j}$\gets\textsc{i}+1$} to $\textsc{hDim}^2-1$
         \For{\textsc{k}$\gets0$ to $\textsc{num}^2-1$}
            \State \textsc{matCoeff}[0][\textsc{n}][\textsc{k}]$\gets$\textsc{scratchPad}[\textsc{k}][\textsc{i}][\textsc{j}]
         \EndFor
         \State\textsc{n}$\gets\textsc{n}+1$
     \EndFor
\EndFor
\State \textsc{x}$\gets$\textsc{symSolve}(\textsc{matCoeff})
\If{$\textsc{x}\neq\textsc{zero}\left(\textsc{num}^2\right)$} 
   \State\textsc{isExtreme}$\gets1$
   \Comment{\textsc{isExtreme} indicates quasiextreme channel; else extreme (0)}
\EndIf
\State RETURN \textsc{isExtreme}
\EndProcedure
\end{algorithmic}
\end{algorithm}
\subsection{Explicit examples:}
\label{subsec:explicitexamples}
In this subsection, we present explicit examples for constructing group-covariant extreme or quasi-extreme channels for $Z_2$, $S_3$, $A_4$ and $D_5$ as finite discrete groups~\cite{CoxeterMoser}
and $SO(3)$ and $SU(2)$ for compact connected Lie groups~\cite{BarutRaczka}.
We recall from~\S\ref{subsec:formalapproach}
and~\S\ref{subsec:mathematical} that for any given group, group-covariant generalized-extreme channels are denoted by
$\Phi_{[D^{(1)]},[D^{(2)}],[\Omega]}$, where~$D^{(1)}$ and~$D^{(2)}$ are representative of distinct classes of $d$-dimensional representations of the group and~$\Omega$ is representative of class of irreducible representations of the group with dimension less than or equal to~$d$. 
In making choices for~$\Omega$,
we use the theorem that says that the number of inequivalent irreps for a finite discrete group equals the number of conjugacy classes~\cite{Herstein}.
\subsubsection{$Z_2$-covariant extreme qubit channels}
\label{subsec:Z2}
Our first example is the group~$Z_2$,
which is abelian,
hence has just one-dimensional irreps.
As the number of Kraus operators equals the dimension of the irrep,
a $Z_2$-covariant channel has just one Kraus operator and,
due to the trace-preserving constraint~(\ref{eq:TP}),
this channel is strictly a unitary channel. Therefore, the constructed channel is certainly extreme~\cite{Cho75}.

The $Z_2$ group is explicitly
\begin{equation}
    Z_2:=\Braket{g}/\{g^2
        =e\}=\{e,g\}
\end{equation}
with~$e$ being the identity element of the group
and the notation~$\Braket{}$ denoting the generating set~(\ref{eq:Sgenerators}).
This group has two inequivalent irreducible representations, namely,
$\Omega_\pm^{(1)}(g)=\pm1$,
which are one-dimensional. 
Thus, we have two options for~$\Omega$ in Eq.~(\ref{eq:LinearizedCovariantKrausDis}).
For a single-qubit $Z_2$-covariant channel,
$D^{(1)}$ and~$D^{(2)}$ are two-dimensional representations of the group,
and uncountably many two-dimensional representations for each of~$D^{(1)}$ and~$D^{(2)}$ are allowed.
According to Lemma~\ref{lemma:equiextreme} and Remark~\ref{remark:equivalentgenextchannels},
we just consider inequivalent two-dimensional representations.
These two cases are constructed as direct sums of inequivalent irreps of $Z_2$, i.e., $\Omega^{(1)}_+\oplus\Omega_\pm^{(1)}$,
Therefore, we have two cases for $D^{(1)}$, two cases for~$D^{(2)}$ and two cases for~$\Omega$ so we have eight cases to consider.


Now we proceed to investigate the first of these eight cases.
The unitary Kraus operator representing $\Phi_{\left[\Omega_+^{(1)}\oplus\Omega_-^{(1)}\right],\left[\Omega_+^{(1)}\oplus\Omega_-^{(1)}\right],\left[\Omega_+^{(1)}\right]}$
is represented by~(\ref{eq:Krausrep})
    \begin{equation}
     A=\operatorname{diag}(1, \text{e}^{\text{i}\alpha}),\; \alpha\in\mathbb{R}.
\end{equation}
This matrix~$A$ is evidently unitary and satisfies the trace-preserving constraint~(\ref{eq:TP}).
From the trivial linear independence of the single-element set
$\mathcal{S}=\{A^\dagger A\}$~(\ref{eq:SAA}),
$\Phi_{\left[\Omega_+^{(1)}\oplus\Omega_-^{(1)}\right],\left[\Omega_+^{(1)}\oplus\Omega_-^{(1)}\right],\left[\Omega_+^{(1)}\right]}$ must be extreme.

Now we move to the second case.
The unitary Kraus operator of $\Phi_{\left[\Omega_+^{(1)}\oplus\Omega_-^{(1)}\right],\left[\Omega_+^{(1)}\oplus\Omega_-^{(1)}\right],\left[\Omega_-^{(1)}\right]}$ is
    \begin{equation}
    A=\begin{pmatrix}
    0&\text{e}^{\text{i}\beta}\\
        1&0
    \end{pmatrix},\; \beta\in\mathbb{R}.
\end{equation}      
Evidently, this matrix is unitary, thus is trace preserving and extreme as expected.

Now we consider the case of $\Phi_{\left[\Omega_+^{(1)}\oplus\Omega_+^{(1)}\right],\left[\Omega_+^{(1)}\oplus\Omega_+^{(1)}\right],\left[\Omega_+^{(1)}\right]}$.
In this case,
following from Remark~\ref{remark:trivialcase},
the group-covariant property does not constrain the single Kraus operator.
Hence, the only constraint is the trace-preserving condition,
which results in the map $\Phi_{\left[\Omega_+^{(1)}\oplus\Omega_+^{(1)}\right],\left[\Omega_+^{(1)}\oplus\Omega_+^{(1)}\right],\left[\Omega_+^{(1)}\right]}$ being a general unitary evolution. 
In the fourth case, $\Phi_{\left[\Omega_+^{(1)}\oplus\Omega_+^{(1)}\right],\left[\Omega_+^{(1)}\oplus\Omega_+^{(1)}\right],\left[\Omega_-^{(1)}\right]}$,
the Kraus operator is zero,
which means that such a group-covariant channel does not exist. 

Finally, we consider the last four cases, which are 
\begin{equation}
\label{eq:last4Z2cases}
    \Phi_{\left[\Omega_+^{(1)}\oplus\Omega_-^{(1)}\right],\left[\Omega_+^{(1)}\oplus\Omega_+^{(1)}\right],\left[\Omega_+^{(1)}\right]},\;
    \Phi_{\left[\Omega_+^{(1)}\oplus\Omega_-^{(1)}\right],\left[\Omega_+^{(1)}\oplus\Omega_+^{(1)}\right],\left[\Omega_-^{(1)}\right]},\;
    \Phi_{\left[\Omega_+^{(1)}\oplus\Omega_+^{(1)}\right],\left[\Omega_+^{(1)}\oplus\Omega_-^{(1)}\right],\left[\Omega_+^{(1)}\right]},\;
    \Phi_{\left[\Omega_+^{(1)}\oplus\Omega_+^{(1)}\right],\left[\Omega_+^{(1)}\oplus\Omega_-^{(1)}\right],\left[\Omega_-^{(1)}\right]}.
\end{equation}
All four of these maps~(\ref{eq:last4Z2cases})
fail to satisfy the trace-preserving constraint.
Thus, these cases do not occur as outputs from the algorithm. 

This simple example of $Z_2$-covariant channels illustrates how our method leads to successful results, 
which are already known for the case of qubit channels~\cite{FA98,RSW02}.
To go beyond unitary group covariant channels, we need to consider cases of non-abelian groups.
Hence, in~\S\ref{subsubsec:S3},
we study the permutation group.
\subsubsection{$S_3$-covariant qubit and qutrit  extreme and quasi-extreme}
\label{subsubsec:S3}
In this subsubsection we consider $S_3$-covariant channels with~$S_3$ the symmetric group, which is a set of all permutations of three objects.
For all abelian groups,
we always obtain unitary channels.
Hence,
to obtain nonunitary channels,
we begin by choosing~$S_3$, which is the smallest nonabelian group.

The order of the group~$S_3$, is six that is $|S_3|=6$ with $|G|$ denoting the order of the group $G$. It has two generators denoted by $\sigma_1$ and $\sigma_2$
such that
\begin{equation}
    S_3=\Braket{\sigma_1,\sigma_2}/\{\sigma_1^2=\sigma_2^2=e, \sigma_1\sigma_2\sigma_1=\sigma_2\sigma_1\sigma_2\}.
\end{equation}
$S_3$ has three inequivalent irreps,
specifically, two one-dimensional irreps
\begin{equation}
\label{eq:Irrep1S3}
\Omega^{(1)}_+(\sigma_1)=1=\Omega^{(1)}_+(\sigma_2),\;
\Omega_-^{(1)}(\sigma_1)=1=-\Omega^{(1)}_-(\sigma_2),
\end{equation}
and one two-dimensional irrep
\begin{equation}
\label{eq:Irrep2S3}
    \Omega^{(2)}(\sigma_1)
        =\operatorname{diag}(1,-1),\;
    \Omega^{(2)}(\sigma_2)
        =\frac12
        \begin{pmatrix}
        -1&\sqrt{3}\text{e}^{\text{i}\phi}\\
        \sqrt{3}\text{e}^{-\text{i}\phi}&1\\
        \end{pmatrix}, \phi\in[0,2\pi].
\end{equation}
Representations of~$S_3$ with dimension larger than two are indeed reducible and can be constructed as unitary conjugations of direct sums of its irreps. 
\paragraph{Qubit channels:}
\label{par:S3qubit}
For qubit channels,
$d=2$,
$D^{(1)}$ and~$D^{(2)}$ are two-dimensional representations of $S_3$.
According to Lemma~{\ref{lemma:equiextreme}} and Remark~\ref{remark:equivalentgenextchannels}, among uncountable number of two-dimensional representations for $S_3$, it is sufficient to consider just inequivalent two-dimensional reps of~$S_3$ to solve Eq.~(\ref{eq:LinearizedCovariantKrausDis}). In fact there are just four inequivalent two-dimensional representations for~$S_3$ given by \begin{equation}
    \Omega^{(2)}, \Omega_+^{(1)}\oplus\Omega_+^{(1)},\; \Omega_+^{(1)}\oplus\Omega_-^{(1)},\; \Omega_-^{(1)}\oplus\Omega_-^{(1)}.
\end{equation}
For~$\Omega$,
we first note that one-dimensional irreps of the groups
cannot lead to non-unitary channels. Hence, to go beyond unitary channels, we have just one option for~$\Omega$, that is $\Omega^{(2)}$~(\ref{eq:Irrep2S3}). 
Thus,
given that,
for each~$D^{(1)}$ and~$D^{(2)}$,
we have four options, 
the total number of cases to be solved for non-unitary $S_3$-covariant channels is sixteen. 

Among these sixteen cases, here we just report one case, that is  $\Phi_{[\Omega^{(2)}],[\Omega^{(2)}],[\Omega^{(2)}]}$. For this case the only solution to set of Eqs.~(\ref{eq:LinearizedCovariantKrausDis}), is zero. Thus, $\Phi_{[\Omega^{(2)}],[\Omega^{(2)}],[\Omega^{(2)}]}$ does not exist. 
Although we have only treated one of sixteen cases here,
this case illustrates the feasibility of our method and the qubit case is fully understood~\cite{RSW02} so we move on to the unexplored $d=3$ cases instead.

\paragraph{Qutrit channels:}
\label{par:S3qutrit}
For qutrit channels, $d=3$,
and the dimension of all irreps of~$S_3$ satisfies $d<3$;
hence, the necessary condition in Lemma~\ref{lemma:NecCondOmega} is satisfied for~$\Omega$ being any~$S_3$ irrep.
Among these three choices for~$\Omega$, to go beyond unitary channels,
we choose $\Omega=\Omega^{(2)}$~(\ref{eq:Irrep2S3})
as other irreps of $S_3$, namely $\Omega_\pm^{(1)}$,
are one-dimensional~(\ref{eq:Irrep1S3}) and label channels that are unitary.
According to Remark~\ref{remark:equivalentgenextchannels},
candidates for~$D^{(1)}$ and~$D^{(2)}$ are all three-dimensional inequivalent representations of~$S_3$ which we construct by direct sum of irreps of $S_3$.
Direct sums of one-dimensional irreps yield four inequivalent irreps,
namely
\begin{equation}
\label{eq:4ineqirreps}
\Omega_+^{(1)}\oplus\Omega_+^{(1)}\oplus\Omega_+^{(1)},\; 
\Omega_+^{(1)}\oplus\Omega_+^{(1)}\oplus\Omega_-^{(1)},\;
\Omega_+^{(1)}\oplus\Omega_-^{(1)}\oplus\Omega_-^{(1)},\;
\Omega_-^{(1)}\oplus\Omega_-^{(1)}\oplus\Omega_-^{(1)},
\end{equation}
which are direct sums of one-dimensional and two-dimensional irreps 
and also yield two more inequivalent cases
\begin{equation}
\Omega_\pm^{(1)}\oplus\Omega^{(2)}.
\end{equation}
Hence,
for each~$D^{(1)}$ and~$D^{(2)}$,
we have six candidates.

In total for strictly non-unitary $S_3$-covariant channels,
36 cases need to be solved.
Among all these 36 cases,
we just solve one case as an illustration.
One case suffices to confirm our method before proceeding to test our method on different groups.

We now solve Eqs.~(\ref{eq:LinearizedCovariantKrausDis}) for Kraus operators with
\begin{equation}
    D^{(1)}=D^{(2)}=\Omega_+^{(1)}\oplus\Omega^{(2)},\;
    \Omega=\Omega^{(2)},
\end{equation}
which yields the Kraus-operator family
\begin{equation}
\label{eq:KrausS3d3}
    A_1=\begin{pmatrix}
                0& \alpha&0 \\
                 \beta& \gamma&0\\
                 0&0&-\gamma
            \end{pmatrix},\;
            A_2=\begin{pmatrix}
                0&0&\alpha \\
                0&0&-\gamma\\
                \beta \text{e}^{-2\text{i}\phi}&-\gamma \text{e}^{-2\text{i}\phi}&0\\
            \end{pmatrix},\;
    \alpha,\beta,\gamma\in\mathbb{C},
\end{equation}
which represent an $S_3$-covariant completely positive map. 
By imposing the trace-preserving constraint~(\ref{eq:TP}),
we obtain the relations
\begin{equation}
\label{eq:systemS3d3}
    2|\beta|^2=1,\;
    |\alpha|^2+2|\gamma|^2=1.
\end{equation}
between the parameters in Eq.~(\ref{eq:KrausS3d3}).
Hence, Kraus operators of $S_3$-covariant channel with respect to, $\Omega_+^{(1)}\oplus\Omega^{(2)}$ and labelled by $\Omega^{(2)}$, that is
$\Phi_{[\Omega_+^{(1)}\oplus\Omega^{(2)}],[\Omega_+^{(1)}\oplus\Omega^{(2)}],[\Omega^{(2)}]}$
, are given in Eq.~(\ref{eq:KrausS3d3}) with the constraint in Eq.~(\ref{eq:systemS3d3}).

To see which parameter values correspond to extreme channels,
we construct the set~$\mathcal{S}$~(\ref{eq:SAA}) for Kraus operators~(\ref{eq:KrausS3d3}).
The obtained set is a set of linearly independent operators for all values of parameters in Eq.~(\ref{eq:KrausS3d3}) except for the instance
\begin{equation}
\label{eq:S3d3GE}
        |\alpha|^2=\frac12,\; |\gamma|^2=\frac14.
\end{equation}
That is,
the set of Kraus operators~(\ref{eq:KrausS3d3}) with constraint~(\ref{eq:systemS3d3}) represents $S_3$-covariant extreme channels for the entire range of parameters except at the point~(\ref{eq:S3d3GE}), which represents a $S_3$-covariant quasi-exteme channel. 
\subsubsection{$A_4$-covariant extreme qutrit channel:}
\label{subsubsec:A4}
Our next example is for the alternating group $A_4$,
which is a normal subgroup of $S_4$,
and~$A_4$ consists of all even permutations of a four-object set.
Although~$S_4$ is the natural next case after~$S_3$ in~\S\ref{subsubsec:S3},
$S_4$ has three generates so we restrict to the more manageable case of~$A_4$ here for our generalization.
The group~$A_4$ is nonabelian and of order~$12$,
and~$A_4$ has two generators,
which we denote by $g_1$ and $g_2$.
Furthermore $A_4$ has four inequivalent irreducible representations.
Three of these representations are one-dimensional and are denoted by~$\Omega_{+,\circ,-}^{(1)}$. 
The fourth case is the three-dimensional irreducible representation
\begin{equation}
\label{eq:A4Omegag12}
    \Omega^{(3)}(g_1)=\operatorname{diag}(1,\omega,
    \omega^2),\;
    \Omega^{(3)}(g_2)
    =-\frac13\begin{pmatrix}
                1 & -2\omega^2 & 2\omega\\
                -2 &\omega^2 & 2\omega\\
                2& 2\omega^2&\omega
                \end{pmatrix},
    \omega:=\text{e}^{\frac{2i\pi}{3}}.
\end{equation}

For a qutrit channel $d=3$,
which determines the dimension of both~$D^{(1)}$ and $D^{(2)}$.
Based on Remark~\ref{remark: UniEquivReps} for~$D^{(1)}$ and $D^{(2)}$,
we only consider inequivalent three-dimensional cases from amongst the uncountable number of available three-dimensional representations of $A_4$.
Hence, for each~$D^{(1)}$ and $D^{(2)}$,
11 candidates
\begin{align}
    &\Omega_+^{(1)}\oplus\Omega_+^{(1)}\oplus\Omega_+^{(1)},\;
    \Omega_\circ^{(1)}\oplus\Omega_\circ^{(1)}\oplus\Omega_\circ^{(1)},\;
    \Omega_-^{(1)}\oplus\Omega_-^{(1)}\oplus\Omega_-^{(1)},\;\cr
    &\Omega_+^{(1)}\oplus\Omega_+^{(1)}\oplus\Omega_\circ^{(1)},\;
    \Omega_+^{(1)}\oplus\Omega_\circ^{(1)}\oplus\Omega_\circ^{(1)},\;
    \Omega_+^{(1)}\oplus\Omega_+^{(1)}\oplus\Omega_-^{(1)},\;\cr
    &\Omega_+^{(1)}\oplus\Omega_-^{(1)}\oplus\Omega_-^{(1)},\;
    \Omega_\circ^{(1)}\oplus\Omega_\circ^{(1)}\oplus\Omega_-^{(1)},\;
    \Omega_\circ^{(1)}\oplus\Omega_-^{(1)}\oplus\Omega_-^{(1)},\;\cr
    &\Omega_+^{(1)}\oplus\Omega_\circ^{(1)}\oplus\Omega_-^{(1)},\; \Omega^{(3)}
\end{align}
exist.
To select a proper representation for $\Omega$, we note that its dimension should be less than or equal to three.
Therefore, according to Lemma~\ref{lemma:NecCondOmega}, all irreps of $A_4$ satisfy the necessary condition for labeling an $A_4$-covariant generalized-extreme channel.
The three one-dimensional irreps of $A_4$ label unitary channels, if they exist. Hence, the only candidate for~$\Omega$ to label a non-unitary $A_4$-covariant generalized-extreme channel is $\Omega^{(3)}$.
Thus, for non-unitary $A_4$-covariant generalized-extreme channel in total,
there are 121 cases to study due to 11 choices we have for each $D^{(1)}$ and $D^{(2)}$. Among these we focus on one of them just to illustrate how our method works.

Solving Eq.~(\ref{eq:LinearizedCovariantKrausDis}) for channel $\Phi_{[\Omega^{(3)}],[\Omega^{(3)}],[\Omega^{(3)}]}$
yields Kraus operators
\begin{equation}
\label{eq:A4constantmatrices}
    A_1=\frac{1}{\sqrt{2}}\operatorname{diag}(0, 1, -1),\;
    A_2=\frac{1}{\sqrt{2}}\begin{pmatrix}
    0& -1 &0\\
    0&0&0\\
    \omega&0&0\\
    \end{pmatrix},\;
    A_3=\frac{1}{\sqrt{2}}\begin{pmatrix}
      0&0&1\\
      -\omega&0&0\\
      0&0&0
    \end{pmatrix},
\end{equation}
which is a set of constant-valued matrices rather than a parameter family~(\ref{eq:KrausS3d3}).
These Kraus operators satisfy the trace-preserving condition (\ref{eq:TP}).
Also the set of operators $\mathcal{S}=\{A_i^\dagger A_j\}$ with $i,j=1,2,3$ is linearly independent.
Hence, $\Phi_{[\Omega^{(3)}],[\Omega^{(3)}],[\Omega^{(3)}]}$ is an extreme quantum channel.
\subsubsection{$D_5$-covariant extreme qutrit channel}
\label{subsubsec:D_5}
For $S_3$ in~\S\ref{subsubsec:S3},
we obtain Kraus operators~(\ref{eq:KrausS3d3}) of a family of extreme channels.
Instead of extending to $S_4$, for tractability reasons we extending to $A_4$ which is a restriction of $S_4$ to even permutations. That yields only constant matrices~(\ref{eq:A4constantmatrices})
as the solution.
Now we consider another `small' group,
dihedral group~$D_5$,
which is the group of symmetries for a regular pentagon. 
By studying~$D_5$,
we can see whether the $S_3$-covariant and $A_4$-covariant extreme channels are also extreme channels for a $D_5$-covariant channel or not
despite $D_5\nleq S_3$
and $D_5\nleq A_4$.

The group
\begin{equation}
\label{eq:D5}
    D_5=\Braket{g_1,g_2}/\{g_1^2=g_2^5=(g_1g_2)^2=e\}.
\end{equation}
is of order ten with two generators, namely $g_1$ and $g_2$,
which generate reflection and rotation, respectively.
$D_5$ has four inequivalent irreps.
Two of the inequivalent irreps are one-dimensional,
\begin{equation}
    \Omega_+^{(1)}(g_1)=1=\Omega_+^{(1)}(g_2)=1,\;
    \Omega_-^{(1)}(g_1)=-1=-\Omega_-^{(1)}(g_2),
\end{equation}
and the other two are two-dimensional,
\begin{equation}
    \Omega_+^{(2)}(g_1)=\begin{pmatrix}
    1 &0\\
    0& -1 \\
    \end{pmatrix},\;
    \Omega_+^{(2)}(g_2)
        =\begin{pmatrix}
    \cos\omega&-\sin\omega\\
    \sin\omega& \cos\omega\\
    \end{pmatrix},\;
    \Omega_-^{(2)}(g_1)=\begin{pmatrix}
    1 &0\\
    0& -1 \\
    \end{pmatrix},\;
    \Omega^{(2)}_-(g_2)=\begin{pmatrix}
    \cos2\omega&-\sin2\omega\\
    \sin2\omega& \cos2\omega\\
    \end{pmatrix}.
\end{equation}
for $\omega:=\text{e}^{\frac{2i\pi}{5}}$.
Indeed representations of $D_5$ in dimensions other than one and two are reducible and are constructed by unitary conjugation of direct sums of its irreps. 

In this example we are interested in qutrit $D_5$-covariant generalized-extreme channels.
Hence, $d=3$,
which determines the dimension of~$D^{(1)}$ and $D^{(2)}$.
According to Remark~(\ref{remark:equivalentgenextchannels}),
among the uncountable number of three-dimensional representations of~$D_5$, we only consider inequivalent three-dimensional representations for each~$D^{(1)}$ and $D^{(2)}$.
These eight cases are
\begin{equation}
\label{eq:3dRepsD5}
    \Omega_+^{(1)}\oplus\Omega_+^{(1)}\oplus\Omega_+^{(1)},\;
    \Omega_+^{(1)}\oplus\Omega_+^{(1)}\oplus\Omega_-^{(1)},\;
    \Omega_+^{(1)}\oplus\Omega_-^{(1)}\oplus\Omega_-^{(1)},\;
    \Omega_-^{(1)}\oplus\Omega_-^{(1)}\oplus\Omega_-^{(1)},\;
    \Omega_+^{(1)}\oplus\Omega_+^{(2)},\;
    \Omega_+^{(1)}\oplus\Omega_-^{(2)},\;
    \Omega_-^{(1)}\oplus\Omega_+^{(2)},\;
    \Omega_-^{(1)}\oplus\Omega_-^{(2)}.
\end{equation}
As the dimension of all irreps of $D_5$ satisfies the necessary condition in Lemma \ref{lemma:NecCondOmega}, they are all acceptable candidates to label $D_5$-covariant generalized-extreme channels.
However, only irreps of dimension greater than one can label non-unitary maps.
Hence, for non-unitary $D_5$-covariant generalized-extreme channels,
the two candidate for $\Omega$
are~$\Omega_+^{(2)}$ and~$\Omega_-^{(2)}$. Therefore
for non-unitary $D_5$-covariant generalized-extreme channels, we have 128 cases to study. 
\begin{remark}
The number of non-unitary  group-covariant generalized-extreme channel candidates for group~$D_5$ exceeds the number for~$A_4$~\S\ref{subsubsec:A4}
despite $|A_4|>|D_5|$
and despite the number of inequivalent irreps for $A_4$ equals the number of irreps for~$D_5$.
\end{remark}
Among all these cases we just consider two cases as an example,
namely,
\begin{equation}
    \Phi_{[\Omega_+^{(1)}\oplus\Omega_+^{(2)}],[\Omega_+^{(1)}\oplus\Omega_+^{(2)}],[\Omega_+^{(2)}]},\; \Phi_{[\Omega_+^{(1)}\oplus\Omega_-^{(2)}],[\Omega_+^{(1)}\oplus\Omega_-^{(2)}],[\Omega_-^{(2)}]}.
\end{equation}
Solving Eq.~(\ref{eq:LinearizedCovariantKrausDis}) for these instances, yields the same Kraus operators 
\begin{equation}
\label{eq:KrausD5d3}
    A_1=\frac{1}{\sqrt{2}}\begin{pmatrix}
    0& \sqrt{2}&0\\
    1 &0&0\\
    0&0&0\\
    \end{pmatrix}\;
    A_2=\frac{1}{\sqrt{2}}\begin{pmatrix}
    0&0& \sqrt{2}\\
    0&0&0\\
    1 &0&0
    \end{pmatrix}
\end{equation}
for both cases.
Although $D_5\nleq S_3$, 
by comparing Eqs.~(\ref{eq:KrausD5d3}) and (\ref{eq:KrausS3d3})
we see that the Kraus operators 
for the $D_5$-covariant channels discussed in this example
are a special case of the family of $S_3$-covariant extreme channels with Kraus operators given in Eq.~(\ref{eq:KrausS3d3}). 
\subsubsection{$SO(3)$-covariant channels:}
\label{subsubsec:SO3}
Thus far, 
examples of group-covariant extreme channels have been studied for finite discrete groups.
In this example, we focus on SO(3)-group covariant channels
with SO(3)
generated by the Lie algebra~$\mathfrak{so}(3)$.
The generating set for~$\mathfrak{so}(3)$ 
comprises the ladder operators~$L_\pm$ and the Cartan operator $L_z$,
and~$\{\ket{l,m}\}$
are the weight states labelled by Casimir and Cartan operators,
$L^2$ and~$L_z$,
respectively \cite{BarutRaczka}.
The Casimir invariant has
spectrum~$\{l(l+1);l\in\mathbb{N}\}$ for~$\mathbb{N}$ denoting the set of natural number discussed in~\S\ref{subsubsec:datatypes}.
Thus, unitary irreps of~$\mathfrak{so}(3)$ have odd dimensions 
$\{2l+1;l\in\mathbb{N}\}$ 
and are given by 
\begin{equation}
\label{eq:OmegaSO(3)}
    \Omega^{(2l+1)}(L_\pm)=\sum_{m=-l}^lC^{\pm}_{lm}\ket{l,m\pm1}\bra{l,m},\;
    \Omega^{(2l+1)}(L_z)=\sum_{m=-l}^l m\ket{l,m}\bra{l,m},
\end{equation}
for Clebsch-Gordan coefficient
\begin{equation}
\label{eq:clebschgordan}
       C^{\pm}_{lm}
    :=\sqrt{l(l+1)-m(m\pm1)}.
\end{equation}
The case $l=0$ yields the trivial representation for~$\Omega^{(1)}$
in Eq.~(\ref{eq:OmegaSO(3)}).
\paragraph{Qudit SO(3)-covariant generalized-extreme channels:}
For qudit channels, the dimension of both~$D^{(1)}$ and~$D^{(2)}$ is~$d$.
According to Remark~\ref{remark: UniEquivReps},
amongst uncountable $d$-dimensional unitary representations of SO(3), we only consider one of  many inequivalent representations of SO(3).
Only a finite number of inequivalent representations is possible as the number of inequivalent representations
equals the number of partitions of~$d$
into odd-number components,
such as~$1,1,1$ and~$3$ for $d=3$.
According to Lemma~\ref{lemma:NecCondOmega}, all $SO(3)$ irreps with dimension less than or equal to $d$
label a generalized-extreme channel.
Hence, for $\Omega$,
we have $[\frac{d+1}{2}]$ candidates.
\paragraph{Special case:}
Among the many candidates for $D^{(1)}$,~$D^{(2)}$ and $\Omega$,
we choose one candidate as an example of an SO(3)-covariant generalized-extreme channel.
Our arbitrary choice for study is $\Phi_{\left[\Omega^{(d)}\right],\left[\Omega^{(d)}\right],\left[\Omega^{(d)}\right]}$

In this example~$D^{(1)}$ and~$D^{(2)}$ are chosen to be $\Omega^{(d)}$,
which is an SO(3) irrep.
As SO(3) irreps are odd-dimensional, bearing in mind that the dimension of~$D^{(1)}$ and~$D^{(2)}$ equals the Hilbert-space dimension~$d$, this example is valid just for odd~$d$. 
For the case of $\Phi_{\left[\Omega^{(d)}\right],\left[\Omega^{(d)}\right],\left[\Omega^{(d)}\right]}$,
following Remark~\ref{remark:commutator}, Eq.~(\ref{eq:CovariantLieKraus}) simplifies to the commutators~(\ref{eq:CovariantLieKrauscommutator})
\begin{equation}
\label{eq:commutator_so3}
    \left[\Omega^{(d)}(L_z),A_m\right]
        =mA_m,\;
   \left[\Omega^{(d)}(L_{\pm}),A_m\right]
        =C^{\pm}_{lm}A_{ m\pm1},\;
    l\equiv\frac{d-1}2,
\end{equation}
which yields Kraus operators of $\Phi_{\left[\Omega^{(d)}\right],\left[\Omega^{\left(d\right)}\right],\left[\Omega^{\left(d\right)}\right]}$
being rank-$l$ irreducible spherical tensors
as candidates for generalized extreme channels.
Furthermore, as in this example~$D^{(1)}$ is chosen to be an irrep of $SO(3)$,~$\Xi$ as defined in Eq.~(\ref{def:Xi}) is proportional to the identity~\cite{KMM11}.
To see if the trace-preserving condition~(\ref{eq:TP}) is satisfied,
and to determine if $\Phi_{\left[\Omega^{(d)}\right],\left[\Omega^{(d)}\right],\left[\Omega^{(d)}\right]}$ is extreme or quasiextreme,
we restrict our attention to the low-dimensional cases of $d=3$ and $d=5$ 
for which the problem is tractable. 

\paragraph{Qutrit SO(3)-covariant extreme channels:}
\label{par:SO3qutri}
We consider the special case of odd-dimensional Hilbert space for $d=3$.
The number~3 is partitioned into $(3)$,
$(2,1)$ and $(1,1,1)$,
which can be represented by Young diagrams~\cite{Young}
$\Yvcentermath1\Yboxdim{4pt}\yng(3)$,
$\Yvcentermath1\Yboxdim{4pt}\yng(2,1)$,
and
$\Yvcentermath1\Yboxdim{4pt}\yng(1,1,1)$,
respectively.
For our case, only partitions of $d=3$ into odd-number components~$\Yvcentermath1\Yboxdim{4pt}\yng(3)$
and~$\Yvcentermath1\Yboxdim{4pt}\yng(1,1,1)$ are required.
Thus, only two
candidates for each~$D^{(1)}$ and $D^{(2)}$ are possible,
namely three-dimensional representations
\begin{equation}
\Omega^{\Yvcentermath1\Yboxdim{4pt}\yng(1)}\oplus\Omega^{\Yvcentermath1\Yboxdim{4pt}\yng(1)}\oplus\Omega^{\Yvcentermath1\Yboxdim{4pt}\yng(1)},\;\Omega^{\Yvcentermath1\Yboxdim{4pt}\yng(3)}.
\end{equation}
For~$\Omega$ two candidates arise,
namely, a one-dimensional irrep necessarily labeling a unitary channel and a three-dimensional irrep.
Hence, the case we study here,
\begin{equation}
    \Phi_{\left[\Omega^{\Yvcentermath1\Yboxdim{4pt}\yng(3)}\right],\left[\Omega^{\Yvcentermath1\Yboxdim{4pt}\yng(3)}\right],\left[\Omega^{\Yvcentermath1\Yboxdim{4pt}\yng(3)}\right]},
\end{equation}
is one among eight possibilities. According to Eq.~(\ref{eq:commutator_so3}),
Kraus operators are irreducible spherical tensors of rank one, namely,
\begin{equation}
\label{eq:qutritSO3channel}
    A_1=\begin{pmatrix}
    0&-a&0\\
    0&0&-a\\
    0&0&0
\end{pmatrix}
    =-A_{-1}^\dagger,\;
    A_0=\operatorname{diag}\left(a,0,-a\right),\; a\in\mathbb{C}.
\end{equation}
It is straightforward to show that~$\Xi$,
defined in Eq.~(\ref{def:Xi}),
is given by
\begin{equation}
\label{eq:qutritSO3channelsum}
    \Xi=\sum_{k=-1}^1A_k^\dagger A_k
        =2|a|^2\mathds1,
\end{equation}
which satisfies the trace-preserving condition~(\ref{eq:TP}) for $|a|=\frac{1}{\sqrt{2}}$. Furthermore, the set of operators $\mathcal{S}$~(\ref{eq:SAA}) in this example is a set of linearly independent operators; hence, 
according to Theorem~\ref{theorem:Choi},
the qutrit channel $\Phi_{\left[\Omega^{\Yvcentermath1\Yboxdim{4pt}\yng(3)}\right],\left[\Omega^{\Yvcentermath1\Yboxdim{4pt}\yng(3)}\right],\left[\Omega^{\Yvcentermath1\Yboxdim{4pt}\yng(3)}\right]}$ described by Kraus operators~(\ref{eq:qutritSO3channel}) with $|a|=\frac{1}{\sqrt{2}}$ is extreme. 
\paragraph{Ququint SO(3)-covariant extreme} channels:
For ququint channels,
the three candidate for each~$D^{(1)}$ and~$D^{(2)}$ are
\begin{equation}
\Omega^{\Yvcentermath1\Yboxdim{4pt}\yng(1)}\oplus\Omega^{\Yvcentermath1\Yboxdim{4pt}\yng(1)}\oplus\Omega^{\Yvcentermath1\Yboxdim{4pt}\yng(1)}\oplus\Omega^{\Yvcentermath1\Yboxdim{4pt}\yng(1)}\oplus\Omega^{\Yvcentermath1\Yboxdim{4pt}\yng(1)},\;
\Omega^{\Yvcentermath1\Yboxdim{4pt}\yng(1)}\oplus\Omega^{\Yvcentermath1\Yboxdim{4pt}\yng(1)}\oplus\Omega^{\Yvcentermath1\Yboxdim{4pt}\yng(3)},\;\Omega^{\Yvcentermath1\Yboxdim{4pt}\yng(5)}.
\end{equation}
For~$\Omega$ there are three candidates, one-, three- and five-dimensional irreps of $\mathfrak{so}(3)$,
namely, $\Omega^{\Yvcentermath1\Yboxdim{4pt}\yng(1)}$, $\Omega^{\Yvcentermath1\Yboxdim{4pt}\yng(3)}$  and $\Omega^{\Yvcentermath1\Yboxdim{4pt}\yng(5)}$. 
Amongst all 27 cases, we focus on the one case 
$\Phi_{\left[\Omega^{\Yvcentermath1\Yboxdim{4pt}\yng(5)}\right],\left[\Omega^{\Yvcentermath1\Yboxdim{4pt}\yng(5)}\right],\left[\Omega^{\Yvcentermath1\Yboxdim{4pt}\yng(4)}\right]}$.
Following the same steps as for qutrit SO(3)-covariant channels,
but here for $d=5$, then $\Phi_{\left[\Omega^{\Yvcentermath1\Yboxdim{4pt}\yng(5)}\right],\left[\Omega^{\Yvcentermath1\Yboxdim{4pt}\yng(5)}\right],\left[\Omega^{\Yvcentermath1\Yboxdim{4pt}\yng(4)}\right]}$ is described by
five Kraus operators that are irreducible rank-two spherical tensors
\begin{equation}
\label{eq:A2A1}
    A_2=\frac12\begin{pmatrix}
0&0&2a&0&0\\
0&0&0&\sqrt{6}a&0\\
0&0&0&0&2a\\
0&0&0&0&0\\
0&0&0&0&0
\end{pmatrix},\;
A_1=\frac12\begin{pmatrix}
0&-\sqrt{6}a&0&0&0\\
0&0&-a&0&0\\
0&0&0&a&0\\
0&0&0&0&\sqrt{6}a\\
0&0&0&0&0
\end{pmatrix}
\end{equation}
and
\begin{equation}
\label{eq:A0A-1A-2}
    A_0=\frac{a}{2}
        \operatorname{diag}\left(2,-1,-2,-1,2\right),\;
    A_{-1}=-A_1^\text{T},\;
    A_{-2}=A_2^\text{T},\;a\in\mathbb{C},
\end{equation}
which satisfy the trace-preserving condition~(\ref{eq:TP}) for  $|a|=\sqrt{2/7}$.  
The operators in $\mathcal{S}$~(\ref{eq:SAA}) are linearly independent;
hence, the ququint channel $\Phi_{\left[\Omega^{\Yvcentermath1\Yboxdim{4pt}\yng(5)}\right],\left[\Omega^{\Yvcentermath1\Yboxdim{4pt}\yng(5)}\right],\left[\Omega^{\Yvcentermath1\Yboxdim{4pt}\yng(4)}\right]}$ is extreme.  
\subsubsection{$SU(2)$-covariant channels}
\label{subsubsec:SU(2)}
In this subsubsection we consider SU(2)-covariant channels.
As~$\mathfrak{su}(2)\simeq\mathfrak{so}(3)$
discussed in~\S\ref{subsubsec:SO3},
much of the analysis here is similar,
but SU(2) admits both even- and odd-dimensional irreps
whereas SO(3) only admits odd-dimensional irreps.
The ladder and Cartan operators are~$J_\pm$
and~$J_z$, respectively,
and the Casimir invariant is~$J^2$
with spectrum~$\{j(j+1);2j\in\mathbb{N}\}$. Unitary irreps of $\mathfrak{su}(2)$
are given by 
\begin{equation}
\label{eq:OmegaSU(2)}
    \Omega^{(2j+1)}(J_\pm)=\sum_{m=-j}^{j}C^{\pm}_{jm}\ket{j,m\pm1}\bra{j,m},\;
    \Omega^{(2j+1)}(J_z)=\sum_{m=-j}^{j} m\ket{j,m}\bra{j,m}
\end{equation}
with~$\{\ket{j,m}\}$ the weight states labelled by Casimir and Cartan operators, respectively,
and
\begin{equation}
       C^{\pm}_{jm}
    =\sqrt{j(j+1)-m(m\pm1)}
\end{equation}
the Clebsch-Gordan coefficients. 
As in~\S\ref{subsubsec:SO3},
the case $j=0$ yields the trivial representation for~$\Omega^{(1)}$
in Eq.~(\ref{eq:OmegaSU(2)}).
\paragraph{Qudit SU(2)-covariant generalized-extreme channels:}
For qudit channels, the dimension of both~$D^{(1)}$ and~$D^{(2)}$ is~$d$.
According to Remark~\ref{remark: UniEquivReps},
amongst uncountable $d$-dimensional unitary representations of SU(2), we only consider one of  many inequivalent representations of SU(2). Similar to the case of SO(3)-covariant qudit channels, the number of inequivalent $d$-dimensional representation of SU(2) is finite. For SO(3) the total number equals the number of partitions of~$d$ into odd-number components, but,
for SU(2), it equals to the number of partitions of~$d$ into even- and odd-number components because irreps of~$\mathfrak{su}(2)$ have both even- and odd-dimensional irreps.  
As an instance $\Yvcentermath1\Yboxdim{4pt}\yng(3)$,
$\Yvcentermath1\Yboxdim{4pt}\yng(2,1)$,
and
$\Yvcentermath1\Yboxdim{4pt}\yng(1,1,1)$ are partitions for $d=3$.
According to Lemma~\ref{lemma:NecCondOmega}, all SU(2)-covariant generalized-extreme channels are labelled by irreps of $\mathfrak{su}(2)$ with dimension less than or equal to~$d$.
Hence, for $\Omega$,
we have~$d$ candidates.
\paragraph{Two cases:}
For $d$-dimensional Hilbert space~$\mathscr{H}_d$ amongst all candidates, we focus on two cases to illustrate salient points.
In the first case we clarify how the special case studied for SO(3)-covariant channels are also SU(2)-covariant channels.
In the second case we present SU(2)-covariance,
which we prove to be extreme for any dimension~$d$. 
\begin{kase}
\label{case:su2first}
Amongst all possible candidate,
we pick
$\Phi_{\left[\Omega^{(d)}\right],\left[\Omega^{(d)}\right],\left[\Omega^{(d)}\right]}$
For this case,
following Remark~\ref{remark:commutator}, Eq.~(\ref{eq:CovariantLieKraus}) simplifies to the commutators~(\ref{eq:CovariantLieKrauscommutator})
\begin{equation}
\label{eq:commutator_su2}
   [\Omega^{(d)}(J_z),A_m]
        =mA_m,\;
   [\Omega^{(d)}(J_{\pm}),A_m]
        =C^{\pm}_{jm}A_{ m\pm1}.
\end{equation}
For odd values of $d$, that is for $j\in\mathbb{N}$, Eq.~(\ref{eq:commutator_so3}) yields Kraus operators of $\Phi_{\left[\Omega^{(d)}\right],\left[\Omega^{\left(d\right)}\right],\left[\Omega^{\left(d\right)}\right]}$
being rank-$j$ irreducible spherical tensors, which is exactly the same as what we have for SO(3)-covariant channels $\Phi_{\left[\Omega^{(d)}\right],\left[\Omega^{\left(d\right)}\right],\left[\Omega^{\left(d\right)}\right]}$. 
When~$d$ is even, that is $2j\in\mathbb{N}$, Eq.~(\ref{eq:commutator_su2}) yields $A_{-j\leq m\leq j}=0$. 
Thus, if,
among all candidates,
we restrict our attention to $\Phi_{\left[\Omega^{(d)}\right],\left[\Omega^{\left(d\right)}\right],\left[\Omega^{\left(d\right)}\right]}$ we get nothing more than the case we studied for SO(3)-covariant channels on odd-dimensional Hilbert space. $\blacksquare$
\end{kase}
\begin{kase}
\label{ex:su2second}
Amongst the candidates,
we pick
$\Phi_{\left[\Omega^{(d-1)}\oplus \Omega^{(1)}\right],\left[\Omega^{(d-1)}\oplus \Omega^{(1)}\right],\left[\Omega^{(d-1)}\right]}$ for
$\Omega^{(d-1)}$ defined in~(\ref{eq:OmegaSU(2)}) with $j=\frac{d-2}2$.
Following Remark~\ref{remark:commutator}, Eq.~(\ref{eq:CovariantLieKraus}) simplifies to the commutators~(\ref{eq:CovariantLieKrauscommutator})
\begin{equation}
\label{eq:commutator_su2D}
   \left[\Omega^{(d-1)}(J_z)\oplus \Omega^{(1)}(J_z),A_m\right]
        =mA_m,\;
    \left[\Omega^{(d-1)}(J_\pm)\oplus \Omega^{(1)}(J_\pm),A_m\right]
        =C^{\pm}_{jm}A_{ m\pm1}.
\end{equation}
The Kraus-operator solution to Eq.~(\ref{eq:commutator_su2D}) is~\cite{KM08}
\begin{equation}
\label{eq:Asm}
    A_m=\frac{1}{\sqrt{d-2}}\ket{j=(d-2)/2,m}\bra{e}
    +(-1)^{\frac{d-2}2-m}\ket{e}\langle j=(d-2)/2,-m|,
\end{equation}
represented in the orthonormal basis of Hilbert space
\begin{equation}
\mathscr{H}=\mathscr{H}_{d-1}\oplus\mathscr{H}_1,\;
\mathscr{H}_{d-1}
    =\operatorname{span}
    \{\ket{j=(d-1)/2,m}\},\;
\mathscr{H}_1
    =\operatorname{span}\{\ket{e}\}.
\end{equation}
The trace-preserving condition (\ref{eq:TP}) is evidently satisfied:
\begin{equation}
\Xi=\sum_{m=-j}^j   
    A_m^{\dag}A_m
        =\mathds1_{d-1}\oplus\ket{e}
            \bra{e}
                =\mathds1_d.
\end{equation}
To show that $\Phi_{\left[\Omega^{(d-1)}\oplus \Omega^{(1)}\right],\left[\Omega^{(d-1)}\oplus \Omega^{(1)}\right],\left[\Omega^{(d-1)}\right]}$ is extreme, we prove that the operators in $\mathcal{S}$~(\ref{eq:SAA})
are linearly independent.
Hence, we assume that 
\begin{equation}
\sum_{m,n=-j}^j\alpha_{mn}A_m^{\dag}A_{n}=0,\;
\alpha_{mn}\in\mathbb{C},
\end{equation}
and our task is to prove that $\alpha_{mn}\equiv0$
for all~$m$ and~$n$.
By replacing $A^{\dagger}_m$ and $A_n$ from Eq.~(\ref{eq:Asm}),
we have
\begin{equation}
    \sum_{m,n=-j}^j\alpha_{mn}A_m^{\dag}A_{n}
    =\sum_{m=-j}^j\frac{\alpha_{mm}}{d-1}\ket{e}\bra{e}
    +\sum_{m,n=-j}^j\alpha_{mn}(-1)^{m+n}\ket{j,-m}\bra{j,-n}=0.
\end{equation}
Therefore, $\alpha_{mn}=0$ for all $m$ and $n$,
which implies that 
the channel $\Phi_{\left[\Omega^{(d-1)}\oplus \Omega^{(1)}\right],\left[\Omega^{(d-1)}\oplus \Omega^{(1)}\right],\left[\Omega^{(d-1)}\right]}$ described by Kraus operators $A_m$~(\ref{eq:Asm}) is an extreme channel
for all~$d$.
We now illustrate explicitly for three subexamples.

\paragraph{Qubit channel:}
Now we consider $d=2$,
hence, $j=\tfrac12$,
for qubits and we continue the reasoning above. Channel $\Phi_{\left[\Omega^{(1)}\oplus \Omega^{(1)}\right],\left[\Omega^{(1)}\oplus \Omega^{(1)}\right],\left[\Omega^{(1)}\right]}$, is labelled by
one-dimensional irrep of SU(2).
Hence, it has just one Kraus operator and is a unitary channel. Following Eq.~(\ref{eq:Asm}) its single Kraus operator is given by
\begin{equation}
    A_0=\ket{0,0}\bra{e}+\ket{e}\bra{0,0},
\end{equation}
which is one of the extreme channels discussed in $Z_2$-covariant channels in \S\ref{subsec:Z2} and is already discussed in the literature~\cite{R00, Cho75}.
\paragraph{Qutrit channel:}
Continuing with cases of the second example,
this time for qutrits,
we let $d=3$
and thus set
$j=\nicefrac12$.
According to Eq.~(\ref{eq:Asm}), the extreme channel $\Phi_{\left[\Omega^{(2)}\oplus \Omega^{(1)}\right],\left[\Omega^{(2)}\oplus \Omega^{(1)}\right],\left[\Omega^{(2)}\right]}$ 
is described by a pair of Kraus operators
\begin{equation}
    A_{\pm \nicefrac{1}{2}}
        =\frac{1}{\sqrt2}
        \ket{\nicefrac12,\pm\nicefrac12}\bra{e}\pm\ket{e}\bra{\nicefrac12,\mp\nicefrac12}.
\end{equation}
\paragraph{Ququart channel:}
Continuing with the second example,
for $d=4$ we set $j=1$
and the number of Kraus operators $K=3$.
According to Eq.~(\ref{eq:Asm}),
Kraus operators of the extreme ququart channel $\Phi_{\left[\Omega^{(3)}\oplus \Omega^{(1)}\right],\left[\Omega^{(3)}\oplus \Omega^{(1)}\right],\left[\Omega^{(3)}\right]}$ are
\begin{equation}
A_{\pm 1}=\frac13\ket{1,\pm1}\bra{e}+\ket{e}\bra{1,\mp1},\;
A_0=\frac13\ket{1,0}\bra{e}-\ket{e}\bra{1,0}.
\end{equation}
These illustrations show how our formulation applies for all~$d$.
$\blacksquare$
\end{kase}
\section{Discussion}
\label{sec:discussion}
Now we summarize and discuss our results.
We begin with discussing group-covariant generalized-extreme channels.
Then we explain the nature and importance of our pseudocode method and results.
Finally, we discuss our explicit examples and their generalizations.

Our first results concerned establishing the mathematical framework for constructing group-covarariant extreme and quasi-extreme channels.
Our approaches leverages off the well understood case of qubit channels,
but only a few extreme cases are known for dimension $d>2$
without providing insight into how to extend beyond these examples.
The full problem of characterizing and constructing extreme channels is too daunting
so we restrict our attention to a subset of extreme channels that are group-covariant
for the group either being a finite discrete group or else a compact connected Lie group.
By studying this subset,
we make some of the hard problems concerning constructing extreme and quasi-extreme channels tractable and, furthermore,
group-covariant channels are useful and valuable in their own right.

Although previous results concern specific examples of generalized-extreme channels,
we introduce a systematic method for constructing all generalized-extreme channels if they are covariant with respect to finite discrete or compact connected Lie groups.
Our method labels every group-covariant channel with three unitary representations of the channel:
the channel is group-covariant with respect to the first two labels and the third label removes multiplicity.
Multiplicity is removed by uniquely labelling each channel adding a label whose purpose is to distinguish between different channels that are identically labelled with respect to the first two unitary representations. 
After uniquely labelling channels,
we prove that any group-covariant channel is a generalized-extreme channel
if and only if its third label is a group irrep whose dimension does not exceed the Hilbert-space dimension.
With these results,
we have set the stage for a systematic method,
which we formalize as algorithms expressed in pseudocode.

As algorithms and pseudocode are not common in studies of quantum information theory,
we carefully developed the relevant concepts and explained how our pseudocode works in formally presenting algorithms.
Importantly,
our algorithms are built,
not a Turing-type computational model but rather on a computational model involving complex-number arithmetic at the foundational level.
Our choice of computational model reflects that our results are aimed not at solving generalized-extreme channels using computers but rather formalizing the problems and their solutions as logical steps for mathematical physics.

Thus, our systematic method for solving generalized-extreme channels is about constructing solutions for every conceivable group-covariant channel with groups of particular classes, namely, finite discrete groups and compact connected Lie groups. Specifically, we establish a procedure to construct group-covariant generalized-extreme channels methodically given the name of the group and the Hilbert-space dimension.
To this end,
we formalize existing knowledge about groups in terms of a (hypothetical but plausible) library,
which is the repository of information about groups and is called upon in our pseudocode.
Our systematic method guarantees,
for any Hilbert-space dimension and any valid group name,
an output comprising the set of all group-covariant generalized-extreme channels,
which,
in the empty-set case,
implies non-existence of any group-covariant generalized-extreme channels for that group at that dimension.
Furthermore, we present pseudocode for deciding whether a given channel is extreme or quasi-extreme. Therefore, starting with a given dimension and a group name, 
and employing our first and second algorithms expressed in pseudocode,
we show how to methodically construct the entire set of generalized extreme channels with an additional label conveying whether this generalized-extreme channel is extreme or quasi-extreme.

We then present examples showing the application of our systematic method to these instances.
These examples illustrate our methods to elucidate how our techniques work and furthermore validate our approach by showing that known results,
specifically for two-dimensional $\mathbb{Z}_2$-covariant channels,
are obtained using our technique.
We also obtain novel results for new group-covariant examples, which show interesting results such as, for three-dimensional $S_3$-covariant channels,
we obtain a continuously parametrized family of extreme channels instead of a finite number of extreme channels.
Another intriguing result we find is that the extreme channel output is never empty for the group SU(2) and for any dimension greater than equal to two.

Furthermore, we observe that,
as the group becomes larger, the number of candidates for non-unitary generalized extreme channels tends to increase,
but, importantly, 
exceptions exist to this rule of thumb such as that shown in our results for the~$D_5$ case.
This general rule and its exceptions highlights the importance of using pseudocode for constructing group-covariant generalized-extreme channels,
as studying this problem analytically is infeasible.
Another important point raised in the last example concerning SU(2)-covariance
shows that
our algorithm yields a non-empty set of extreme channels
for all Hilbert-space dimensions without exception.
\section{Conclusions}
\label{sec:conclusions}
We have addressed the problem of constructing the set of extreme channels for $d$-dimensional Hilbert-space. 
This well known problem is hard because the set of channels at Hilbert-space dimension $d>2$
has not been parameterized.
Therefore, the detailed structure of the set of channnels
and a parameterized description of the boundary of the set are unknown,
which makes it impossible currently to construct directly the extreme points of the set of channels.
By considering symmetry,
we can construct a subset of extreme channels,
and convex combinations of these extreme channels enable parametrization of an important subset of channels. 

Here we have restricted our attention to a subset of extreme channels that are covariant with respect to a finite discrete group or a compact connected Lie group.
By exploiting knowledge about group and representation theory,
we are able to develop a systematic approach to construct those extreme and quasi-extreme channels that are group-covariant. 
Our systematic approach
is represented by pseudocode, 
which makes our procedure especially clear.
We present a variety of elucidating and instructive examples,
which includes the proof that an extreme group-covariant channel exists for every choice of Hilbert-space dimension.

Our results extend significantly knowledge of extreme channels by going beyond the usual restriction to unital channels for $d>2$.
Furthermore,
our approach reveals that the problem of constructing group-covariant generalized-extreme channels
reduces to the well-studied problems of solving a system of linear and quadratic equations. 

At this stage,
our theory does not yet consider tensor products of Hilbert spaces.
Our results could be generalized in nontrivial ways by dealing with tensor products,
perhaps restricting to the same groups considered here to study group-covariant extreme channels involving this added structure.
Incorporating tensor-product structure could be useful for solving problems of correlated channels~\cite{DuanGuo1998, MacchiavelloPalma2002}.

One interesting application of our algorithmic approach could be to the problem of searching for increasingly large Holevo-capacity additivity violations;
regarding the conjecture that Holevo capacity of quantum channels is additive~\cite{Hol06},
Hastings disproves this conjecture  by introducing a counterexample but leaves open how to find all channels that are non-additive and how large of an additivity violation is possible~\cite{Has09}.
Here we suggest an algorithmic approach,
based on our methods,
to addressing this problem of increasing the violation.

Now we introduce additivity violation as a superadditivity computational problem of discovering channels with greater additivity violation compared to what is known now.
Additivity violation of Holevo capacity for a pair of quantum channels~$\Phi_{1,2}$ is quantified by
\begin{equation}
\label{eq:v}
    v(\Phi_i,\Phi_2):=\chi(\Phi_1\otimes\Phi_2)-\chi(\Phi_1)-\chi(\Phi_2),\,
    \chi(\Phi)
    =\sup_{\{p_i,\rho_i\}}\left(\mathscr{S}\left(\sum_i p_i\Phi(\rho_i))-\sum_i p_i \mathscr{S}(\Phi(\rho_i))\right)\right)
\end{equation}
if~$v$ is positive,
with $\rho_i\in\mathcal{T}(\mathscr{H})$, $p_i>0,\;\forall i$, $\sum_i p_i=1$ and $\mathscr{S}(\rho)$ the von~Neumann entropy of $\rho$.
An algorithm for computing $v$~(\ref{eq:v})
would accept the descriptions of two channels~$\Phi_{1,2}$
and yield~$v$ as output:
a value of~$v$ greater than the best-known~$v$ to date would be flagged as a success.
Our algorithm yields descriptions of generalized extreme channels as outputs so the algorithm for computing~$v$ would call our algorithm as an oracle to obtain pairs of channels,
which could be for different Hilbert-space dimension~$d$ and different group names~$\{\mathcal{G}\}$.

Using this approach, we compute the additivity violation $v$~(\ref{eq:v}) for pairs of channels drawn from the set of group-covariant generalized extreme channels. 
This set is smaller than the set of all channel pairs that should be searched for obtaining the largest possible~$v$,
but this restricted search is a good start to search algorithmically for the largest~$v$ over all possible channel pairs.

We emphasize some caveats on our approach to discovering increasingly large~$v$ algorithmically.
As our algorithm is designed for the Blum-Shub-Smale machine~\cite{BSS},
adapting to a Turing type of discrete computer is needed:
this adaptation is achieved by working with floating numbers with a consequence that computations are then approximate rather than exact.
Second,
our algorithm does not generate all possible pairs of channels but rather just a restricted set of channels, so our additivity-violation algorithm would not be performed exhaustively over all pairs of channels but,
if an algorithm could be devised to generate all possible channel pairs, that algorithm would supplant our own generalized extreme channel generator and then permit an exhaustive search.
We also emphasize that our algorithm could be infeasible on current computers.

Another promising direction to follow would be to extend beyond pseudocodes to writing actual computer code and implementing on a computer to solve for new group-covariant extreme channels.
Our results could be used to construct circuits for simulating extreme channels,
which could be useful for quantum-channel simulation theoretically~\cite{WBOS13,ItenColbeckChristandl2017, BGNPZ13} and experimentally~\cite{Jeong2013, Shaham12, Sciarrino2004, Lu2017}.
Finally,
our theory could help to study complexity considerations associated with quantum-circuit simulation of channel-construction problems~\cite{WBOS13,BGNPZ13,ItenColbeckChristandl2017,WS15}.
\acknowledgments
L.\ M.\ acknowledges financial support by Sharif University of Technology, Office of Vice President for research under Grant No.\ G930209 and hospitality by the University of Calgary where parts of this work were completed.
B.\ C.\ S.\ appreciates financial support from NSERC.
Both L.\ M.\ and B.\ C.\ S.\ acknowledge financial support from the American Physical Society through their International Research Travel Award Program,
which supported a research visit to the University of Calgary.
We acknowledge valuable discussions with Camila Suarez Viltres concerning pseudocode and data-type formalities.
\appendix

\section{Formal approach}
\label{subsec:formalapproach}
In this appendix we review the formal approach and main ingredients for specifying a formal problem, 
as well as basics for writing a formal problem as a pseudocode. We first discuss,
in~\S\ref{subsubsec:formalproblems},
how we specify problems
formally as such formal specifications are needed for our algorithmic approach,
and we make clear our model for solving formal problems.
Second we explain pseudocode, data types, and the notation we use. Next we provide background on the role of a library in writing algorithms to solve the problem. Finally, we explain the structure of any algorithm and its essentials.
\subsection{Formal problems}
\label{subsubsec:formalproblems}
In this subsection,
we provide principles for specifying formal problems.
Then we explain our model for stating and solving formal problems,
with our model described in computational terms.
A problem should be stated in a form that the reader can understand,
which requires that the language for the problem be clear and the terms employed are understood by the intended audience,
which can be sentient (e.g., a human) or autonomous (e.g., a computer).
In our case we would like our hard channel problems to be so clear that a machine can understand and potentially solve the problem.
For this goal of autonomous solving to be met,
we adopt principles from computing,
which requires us to specify various structures such as
data types, which could be integer or real or complex numbers or symbols
as we explain below.
A formal problem has well-defined arithmetic operations such as addition and multiplication.
The formal statement is lucid regarding input and output and whether the solution is obtained deterministically or probabilistically and whether exact or approximate.
The procedure to map input to output is specified in terms of universal primitive operations.
Library functions are permitted;
such library functions offer known algorithms that solve specific problems and are in a library because these algorithms are popular for multiple applications.

Typically the Turing machine~\cite{Turing1937},
or perhaps its extension to quantum Turing machines~\cite{Yao1993, MolinaWatrous2018},
would suffice as the starting point for the foundational model on which to base formal problems and their algorithmic solutions.
However, we are not devising ways to solve group-covariant channels on a computer per se but rather seeking to formalize the mathematical problems.
As these quantum channels are defined over the complex number field,
we choose as our starting point the Blum-Shub-Smale machine~\cite{BSS},
which modifies the Turing machine to allow for an uncountable alphabet corresponding to real numbers.
Although we are employing complex numbers,
rather than real numbers,
the Blum-Shub-Smale machine can be extended to complex numbers by treating complex numbers as pairs of real numbers with complex conjugation and multiplication rules suitably incorporated.
In addition to the alphabet including the full uncountable set of complex numbers,
our alphabet also includes standard alphanumeric characters.


Various types of problems can be specified
such as search, optimization, function inversion and decision.
In our work,
we deal with just two types of computational problems,
namely function problems and decision problems.
A function problem maps each input to some output according to a set of rules,
such as exponentiating or taking a square root.
Of course a well posed function problem can fail to have a solution that occurs from dividing by zero.
A decision problem is a special case of a function problem in that the output can 
only be binary, such as `true' denoted by $\top$ or `false' which is denoted by $\bot$.

A problem statement requires specification of a clear task and needs well defined inputs and outputs including their data types.
The algorithm for solving the problem is expressed as inputs, outputs, and a step-by-step procedure to map the input to the output with these steps specified in terms of operations that are themselves constructed from a universal set of primitives,
and some operations can be drawn from the library whereas others are explicit logical steps in terms of the computational model.

\subsection{Essentials of algorithms}
\label{subsubsec:essentialsalgorithms}
An algorithm is designed to solve a specific, well posed problem, in terms of instructions that the machine can understand.
Specifically,
the algorithm accepts inputs,
processes these inputs by a procedure expressed as a sequence of instructions, and yields the desired output.
Below we describe what inputs, outputs and procedures are as we use this terminology and employ these principles throughout our work.
\paragraph{Inputs:}
The first component of an algorithm is its inputs.
We require, in our approach to algorithms,
the value of the input and also type.
Type refers to the nature of this input, which implies rules such as arithmetic or concatenation.
In some programming languages,
such as Python to which we refer often,
type is not needed but then an interpreter is required;
as we are focused on an algorithmic approach but not on actual programming or implementations of computation,
we prefer to keep all types explicitly stated.

\paragraph{Outputs:}
The outputs of the algorithm are the solutions of the problem.
In our approach we specify type of each output to be clear, and the format is the same as for inputs.
Outputs can be exact or approximate solutions and obtained deterministically or probabilistically.
Flags can be binary outputs that indicate whether a valid solution was found or not.
\paragraph{Procedures:}
The algorithmic procedure's purpose is to map the inputs to the outputs in a logical way following steps that the machine can understand.
Procedures can be represented by flow charts or by pseudocode;
we prefer using pseudocode,
which includes both instructions and comments following $\triangleright$ symbols that explain briefly the instructions.
Procedure statements include explicit declaration of variables, including their types,
and return statements for sending variable values to outputs and end statements that terminate the procedure.
We typically do not include initialization statements as we assume all numbers and numerical arrays are initialized to zero~$0$ or arrays of zeros and symbolic variables are initialized to blank~$\flat$.
\subsection{Pseudocode}
\label{subsec:Pseudocode}
In this subsubsection, we explain pseudocode. We have thus far explained formal problems and essentials of algorithms,
and now
we explain the transition from algorithms to pseudocode.

Pseudocode appears similar to how a formal program looks,
but pseudocode is not meant to be compiled.
Rather pseudocode formalizes the logic of how we are solving the problem and it is an alternative to a flow-chart representation.
We employ pseudocode to ensure that our algorithms are complete.
Breaking what could be one large algorithm into multiple subalgorithms ensures that what would be a complicated algorithm is modularized and thus fathomable.

Our pseudocode for each algorithm and subalgorithms (treated as an algorithm) is introduced by first presenting the description of the algorithm and then presenting the name of the algorithm.
Then we present sequentially the input and then the output and then the procedure.
The procedure is named on formal line~1 of each pseudocode and is followed by the formal name of the algorithm and then its arguments in parentheses.

In the procedure, the pseudocode begins with importing functions from the library or other algorithms.
Variables that arise during the procedure are declared as they arise.
Requisite variables are returned for output before the procedure ends.

\section{Notation and datatypes}
\label{subsubsec:datatypes}
In this appendix, we  establish concepts and notation for those data types~\cite{DataType} that are pertinent to our study and used in our pseudocode.  Our two key additions to the typical notation in physics pseudocode are the use of multi-dimensional arrays with hyperrows and using a jagged structure. We describe these new notational results at the end of this appendix.

We write data types in lower-case \texttt{typewriter} font
and allow for both static and dynamic arrays.

Names of variables are expressed in small-capital \textsc{camelCase}
with our pseudocode presentation and notational convention closely following the qudit benchmarking case~\cite{JWSS20}.
As our model includes real and complex numbers, we use these data types in that context rather than real and complex data types representing floating point numbers as in languages such as FORTRAN.

Now we proceed to introduce and explain specific data types.
In some cases we introduce new data types for use in our studies;
although some terms and types are new,
these novelties are based on established concepts.

\paragraph{Numbers:}
\label{par:DataTypeNumbers}
First we define numbers,
which can be a countable set such as integers~$\mathbb{Z}$,
a finite subset of integers such as~$[n]$~(\ref{eq:Akij}) for~$n\in\mathbb{Z}^+$
(positive integers
denoted \texttt{posinteger})
and natural numbers~$\mathbb{N}$ (\texttt{natural})
comprising positive integers and zero
or the uncountably infinite numbers such as
real~$\mathbb{R}$ (\texttt{real})
and complex~$\mathbb{C}$ (\texttt{complex}) numbers
or their floating-point representations.
A binary digit (bit) has type \texttt{binary} with two values~$0$ and~$1$.
All numbers are subject to arithmetic operations such as addition/subtraction~$\pm$,
multiplication~$*$
and exponentiation~$\wedge$,
with these operations and their character representations regarded as being defined deep in the computer architecture stack~\cite{Knu77}
in a natural way for each kind of number.
\paragraph{Non-numerical:}
Variables are represented by symbols
(denoted \texttt{symbol}),
which serve akin to variables in symbolic expressions.
These symbolic variables are expressed as an alphabetical string of Latin and Greek letters in our model.
The type \texttt{string} is a concatenation of alphanumeric characters, brackets and punctuation marks,
but not arithmetic operation symbols,
to produce a literal constant such as ``SO(3)'' to denote the special orthogonal group of dimension three.
\paragraph{Array:}
\label{para:array}
An array is a list of elements drawn from the same data type
(e.g.,
\texttt{natural}, \texttt{symbol})
and can be multidimensional.
We introduce the term ``hyperrow'' here to refer to coordinates (labelled by indices) of the array,
with a hyperrow of first order referring to an array's row, a
hyperrow of second order referring to a column, a hyperrow of third order referring to, say, a sheet (following Microsoft Excel terminology) and so on. A one-dimensional array is a vector
whose coordinates are row numbers.
A matrix is a two-dimensional array,
with coordinates given by rows and columns.
This concept can be generalized to multi-dimensional arrays with multiple arrays.
An array need not have the same number of columns for each row:
instead the number of columns could vary for each row,
in which case this array is ``jagged''.
In our pseudocode,
we declare 
 arrays following the common convention, that is, expressing first the type of array and its elements data type, second, the dimension of the array represented by the number of brackets, and, finally, the name of the array. For example,
\begin{equation}
\texttt{symjagged}[~][~]~\textsc{arrayOne}    
\end{equation}
declares a two dimensional symbolic jagged array called \textsc{arrayOne}.
As another example,
\begin{equation}
\texttt{complexarray}[2][~]~\textsc{arrayTwo}
\end{equation}
declares a two dimensional complex array called \textsc{arrayTwo}
with its row having two elements and the number of
elements in its column not determined at the declaration step.
We address array elements using the notation
$\textsc{array}[\text{row number}][\text{column number}]$ for an element of the two-dimensional array \textsc{array} with row number and column number. 
This notation is extended in a straightforward way for higher-dimensional arrays.
Generalized arithmetic operations used for arrays include tensor product~$\otimes$
and direct sum~$\oplus$.

Previously,
two-dimensional arrays have been treated in physics pseudocode,
but higher-dimensional arrays,
although employed in programming,
have not previously been introduced in physics pseudocode.
Here we produce a straightforward but useful notation for higher-dimensional arrays,
exemplified by the following case.
A three-dimensional array,
denoted~\textsc{array},
with elements being of any type, say, \texttt{symbol} for symbol is declared as \texttt{symbol}[\textsc{size0}][\textsc{size1}][\textsc{size2}]\textsc{array},
where \textsc{size0}, \textsc{size1} and \textsc{size2}, respectively are the sizes of the first-order hyperrrow, second-order hyperrow, 
and third-order hyperrow,
respectively,
for~\textsc{array}.
By convention, the index of each array starts from zero.
Then $\textsc{array}[\textsc{i}][\textsc{j}][\textsc{k}]$ is an element of \textsc{array} with sheet (third-order hyperrow) number \textsc{k}, column (second-order hyperrow) number \textsc{j} and row (first-order hyperrow) number \textsc{i},
and this notation convention extends in an obvious way to arrays of higher dimensions.

Sometimes we need to work with a lower-dimensional array from a higher-dimensional array,
which we explain here by example.
Given three-dimensional array $\textsc{array}$
defined above,
we construct the `projected' one-dimensional array
$\textsc{oneDArray}\gets\textsc{array}[\textsc{i}][\textsc{j}]$,
which is a one-dimensional array extracted from \textsc{array} with \textsc{size2}.
The~$\textsc{k}^\text{th}$ element of \textsc{oneDArray}, \textsc{oneDArray}[\textsc{k}], is $\textsc{array}[\textsc{i}][\textsc{j}][\textsc{k}]$.
Similarly, $\textsc{twoDArray}\gets\textsc{array}[\textsc{i}]$, is a two-dimensional array extracted from \textsc{array}
with the first hyperrow of size \textsc{size1} and the second hyperrow of size \textsc{size2}. 

We find necessary the use of jagged arrays,
which are used in programming but have not been introduced into pseudocode within physics papers yet
so we do so here.
Hence, we define a multi-dimensional jagged array
as an array whose order~$n$ hyperrow has a size that depends on the index of the hyperrow of order $n-1$.
For example, if \textsc{array} is a three-dimensional jagged array with elements of type symbol,
its declaration is
\texttt{symjagged}[\textsc{size0}][\textsc{size1}][ ]\textsc{array},
and \textsc{array}[\textsc{i}][\textsc{j}]
is a one-dimensional array 
whose size depends on the chosen indices \textsc{i} and \textsc{j}. 

\section{Library}
\label{subsubsec:library}
In this appendix,
we paraphrase the concept and role of libraries~\cite{Library} in solving computational problems algorithmically.
Then we briefly describe the functions we use for solving group-covariant channels in  \S\ref{subsec:AlgorithmicApproach} and  \S\ref{subsec:pseudocoding}. In \S\ref{subsubsec:blibrary} we explain the common functions available in public libraries that are required in our pseudocode. Then, in \S\ref{subsubsec:rlibrary} we discuss a convenient library of our new functions used in our pseudocode.
A library is vital in computer programming and more generally in software development~\cite{Library}.
Libraries appear in a variety of forms,
but our use of libraries here is restricted to subroutines,
which are a packaged sequence of instructions that perform a well defined task~\cite{Knu77}.
Such library elements are known as functions,
and we use this terminology throughout.
A function is called by invoking the name of the function and passing parameters and then receiving new parameter values after execution.
\subsection{Functions available in public library}
\label{subsubsec:blibrary}
In this subsection we describe the common well known functions that are useful for our algorithm. Specifically,
we describe the matrix functions that prove useful later
and 
then we proceed to describe the function we use to solve simultaneously sets of equations.
Finally, we describe the functions that identify symbols in expressions and simplifies algebraic expressions.
Our approach to introducing library functions is to make use of concepts existing in the literature or from actual libraries such as Python's SymPy~\cite{SymPy} or NumPy~\cite{NumPy}
or alternatives such as NAG~\cite{NagLibrary}.

As solving linear equations is vital to our analysis,
and array manipulation is germane to solving such systems of equations,
we introduce here basic array library functions that prove to be useful in our algorithm for solving generalized group-covariant channels.
The zero and identity matrices are especially useful, so we introduce library functions \textsc{zero} and \textsc{id}
for creating zero and identity matrices of specific size,
and they can be two-dimensional arrays or beyond to multidimensional cases.
If a matrix has complex entries,
the transpose and the Hermitian conjugate of a matrix can also
be valuable, for which we use the functions \textsc{transp} and \textsc{dag}, respectively.
Another important function for our purposes is \textsc{reshape},
which converts an array of some dimension to an array of another dimension,
with the easiest nontrivial examples being conversion of a vector to a matrix and vice versa.
Although SymPy includes an ordering option,
we only ever use one ordering here, so we do not include this input option.

Finally, we consider useful functions for solving systems of equations.
The first of these library functions is \textsc{solve}, which solves a given system of equations in terms of symbols,
which are imported or given as input.
Thus, \textsc{solve} yields symbolic solutions to this set of equation in terms of specified allowed symbols.
Of course these expressions can be complicated so rules are applied to simplify these expressions,
and this simplifying function is called \textsc{simplify}.
The role of \textsc{simplify} is to simplify algebraic expressions by recognizing and simplifying common or specified arithmetic expressions for symbols being both numbers and arithmetic operations,
for example the replacements $5+2$ by 7 or replacement of $1+2x+x^2-(1+x)^2$ by 0
with the end result of \textsc{simplify}
still being a symbol, albeit simpler in form.
Lastly,
inspired by a SymPy library function \textsc{free\_sym},
we introduce another equivalent function in our library,
which we call \textsc{symIdentifier};
this function returns all symbols found in expression entries of a given multidimensional array.

\subsection{New functions for the computer library}
\label{subsubsec:rlibrary}
In this appendix we discuss the computer library of our new functions which augments the set of known functions in public libraries discussed in~\S\ref{subsubsec:blibrary}. We briefly describe each new built-in function
that is used in our algorithms in \S\ref{subsec:pseudocoding}, and then we present the detailed information about these functions,
including types of input and output.
These functions are presented as their name in \textsc{camelCase} followed by a plain description of the function, then their inputs and outputs.

\begin{enumerate}
\item{Oracle for group/algebra properties:}
Our library includes one oracular function, 
accepting a binary input that determines the type of the group,
which is either a finite discrete group or a compact connected Lie group.
Next the input of this function is a name of a finite discrete group 
or a compact connected Lie group.
This oracular function also accepts an integer input corresponding to Hilbert-space dimension~$d$.
This function returns pertinent information about the properties of the input finite discrete group or the input compact connected Lie group, depending on the type of the group determined by the first input of the function.
This oracular function yields six outputs. 
Depending on the type of the group in the input, the first output is the number of inequivalent
irreps of the finite discrete group or the number of inequivalent irreps of Lie algebra corresponding to the compact connected Lie group with dimension less than or equal to~$d$.
The second output is the 
number of inequivalent representations of the finite discrete group or the number of inequivalent representations of the Lie algebra corresponding to the compact connected Lie group, with respect to the group type in the input,
for given~$d$.
The third output is the rank of the finite discrete group or the number of generators of the Lie algebra corresponding 
to the compact connected Lie group.
The fourth output is the dimension of the inequivalent irreps of the finite discrete group 
or the dimension of all inequivalent irreps of the compact connected Lie algebra corresponding to the compact connected Lie group, with dimension less than or equal to~$d$.
Next we have the fifth output,
which is all inequivalent unitary irreps of the finite discrete group or all inequivalent unitary irreps of the Lie algebra corresponding to the compact connected Lie group with dimension less than or equal to~$d$.
Finally,
the sixth output is all unitary representations of the finite discrete group or all unitary representations of the 
Lie algebra corresponding to the compact connected Lie group, for the given~$d$. 
\begin{itemize}
\item \textsc{props}
    $\triangleright$
        Properties and representations of a discrete group or a compact connected Lie group \textsc{gName}
    \begin{itemize}
        \item[] INPUT:
        \begin{itemize}
            \item [] \texttt{binary}[ ] \textsc{gType}
                $\triangleright$
                    Flag:
                    $0$ for finite discrete groups and $1$ for compact connected Lie groups.     
            \item [] \texttt{character}[ ] \textsc{gName}
                $\triangleright$
                    Name of finite discrete group or compact connected Lie group.
            \item [] \texttt{posinteger} \textsc{hDim}          $\triangleright$
                Hilbert-space dimension.
        \end{itemize}
        \item[] OUTPUT:
        \begin{itemize}
            \item [] \texttt{posinteger} \textsc{numIrrep}
                $\triangleright$ Number of inequivalent irreps of the \textsc{gName} for \textsc{gType}=0 and number of inequivalent irreps of algebra generating \textsc{gName} for \textsc{gType=1}
                with dimension less than or equal to \textsc{hDim}.
            \item [] \texttt{posinteger} \textsc{numRep}
                 $\triangleright$ Number of inequivalent reps of the \textsc{gName} for \textsc{gType}=0 and number of inequivalent irreps of algebra generating \textsc{gName} for \textsc{gType=1}
                with dimension \textsc{hDim}.
            \item [] \texttt{posinteger} \textsc{numGen}
                $\triangleright$ rank of \textsc{gName} for \textsc{gType}=0 and number of generators of the Lie algebra generating \textsc{gName} for \textsc{gType}=1.
            \item [] \texttt{posinteger}[ ] \textsc{dim} 
                $\triangleright$ Dimension of \textsc{numIrrep} inequivalent irreps of \textsc{gName} with dimension less than or equal to \textsc{hDim} for \textsc{gType=0} and dimension of \textsc{numIrrep} inequivalent irreps of Lie algebra generating \textsc{gName} with dimension less than or equal to \textsc{hDim}.
            \item [] \texttt{symjagged}[ ][ ][ ][ ]  \textsc{irrep} $\triangleright$ For \textsc{gType}=0
        entries of the first hyperrow label the \textsc{numIrrep} inequivalent irreps of \textsc{gName} with dimension less than or equal to \textsc{hDim}.
        Entries of the second hyperrow label \textsc{numGen} generators of \textsc{gName}.
        The third and fourth hyperrows are matrix elements for each irrep of \textsc{gName} with dimension less than or equal to \textsc{hDim}.  
        For \textsc{gType}=1 entries of the first hyperrow label the \textsc{numIrrep} inequivalent irreps of Lie algebra generating \textsc{gName} with dimension less than or equal to \textsc{hDim}.
        Entries of the second hyperrow label \textsc{numGen} generators of the Lie algebra generating \textsc{gName}.
        The third and fourth hyperrows are matrix elements for each irrep of Lie algebra generating \textsc{gName} with dimension less than or equal to \textsc{hDim}.  
         \item [] \texttt{symbol}[ ][ ][ ][ ]  \textsc{rep} $\triangleright$
        For $\textsc{gType}=0/1$ entries of the first hyperrow label the \textsc{numRep} inequivalent reps of \textsc{gName}/Lie algebra generating \textsc{gName} with dimension \textsc{hDim}.
        Entries of the second hyperrow label \textsc{numGen} generators of \textsc{gName}/Lie algebra generating \textsc{gName}.
        The third and fourth hyperrows are matrix elements for each rep of \textsc{gName}/Lie algebra generating \textsc{gName} with dimension \textsc{hDim}.   
        \end{itemize}
        \end{itemize}
\end{itemize}
\item {}Imposing a trace preserving condition on completely positive maps: Our library requires a function for accepting $N$ number of $d\times d$ Kraus operators of a completely positive map and returning $N$ number of $d\times d$ Kraus operators of the same completely positive map that satisfy the trace preserving condition in Eq.~(\ref{eq:TP}). Given the set of Kraus operators, this function constructs  $\Xi$ in Eq.~(\ref{def:Xi}) and solves the matrix equation in Eq.~(\ref{eq:TP}) for parameters in Kraus operators of the completely positive map. If the solution exists, it employs the solution to simplify the Kraus operators and returns 1 for \textsc{TP}, indicating that the trace-preserving condition is satisfied in addition to the set of Kraus operators satisfying the trace-preserving condition~(\ref{eq:TP}). If the solution does not exist, the function returns 0 for \textsc{TP}, indicating that the solution does not exist. 
\begin{itemize}
    \item\textsc{solveChannel}
    $\triangleright$
    Apply the trace-preserving condition to the input CP map to yield either a channel or no solution.
 \begin{itemize}
      \item[] INPUT:
      \begin{itemize}
          \item[] 
     \texttt{symbol}[\textsc{n}][\textsc{d}][\textsc{d}]~\textsc{kraus}
      $\triangleright$ \textsc{n} 
      $\textsc{d}\times \textsc{d}$
      Kraus matrices for CP map.
       \end{itemize}
      \item[] OUTPUT:
      \begin{itemize}
          \item [] \texttt{binary}[ ] \textsc{TP}
                $\triangleright$
                    Flag:
                    $0$ for no solution and $1$ for existence of the solution.
         \item[]\texttt{symbol}[\textsc{n}][\textsc{d}][\textsc{d}]~\textsc{kraus}
      $\triangleright$ \textsc{n} 
      $\textsc{d}\times \textsc{d}$
      Kraus matrices for channel.
       \end{itemize}
      \item[] PROCEDURE:
        \begin{enumerate}
            \item \texttt{symbol}[\textsc{d}][\textsc{d}]~\textsc{xi}
            $\triangleright$ Represents $\Xi$~(\ref{def:Xi}).
            \item \texttt{symbol}[$\frac{\textsc{d}(\textsc{d}-1)}{2}$] \textsc{eqs}
            $\triangleright$ Constrained expressions due to the trace-preserving condition.
            \item \texttt{symbol}[ ] \textsc{symbols}
                $\triangleright$
                Yields parameters obtained from \textsc{kraus}. 
            \item \texttt{symbol}[ ] \textsc{sol}
            $\triangleright$ Relations between parameters in \textsc{kraus} due to the trace-preserving constraint;
            same size as~\textsc{symbols}.
            \item \texttt{binary} \textsc{flag}
                $\triangleright$
                    TRUE if solutions exist.
            \item Compute \textsc{xi}
                $\triangleright$ From Eq.~(\ref{def:Xi})
           \item For all $\textsc{i}\in[\textsc{d}]$,  \textsc{eqs}[\textsc{i}]$\gets\textsc{xi}[\textsc{i}][\textsc{i}]-1$
           $\triangleright$ Apply trace-preserving condition for diagonal elements.
            \item For \textsc{i} from $\textsc{d}+1$ to $\frac{\textsc{d}(\textsc{d}-1)}{2}$, \textsc{eqs} gets upper-diagonal elements of \textsc{xi}
            $\triangleright$ Apply
                trace-preserving condition for upper-diagonal elements.
            \item \textsc{symbols}$\gets$\textsc{symIdentifier}(\textsc{kraus})
                $\triangleright$ Extracts symbols from~\textsc{kraus}.
                \item \textsc{sol}$\gets$\textsc{solve}(\textsc{eqs},\textsc{params}) $\triangleright$
            Assigns \textsc{eqs}=0,
            then solves for \textsc{params};
            \textsc{flag}$\gets$FALSE if a solution is not found.
            \item If \textsc{flag}, proceed to the next step.
            \item RETURN \textsc{kraus}$\gets$ \textsc{simplify}(\textsc{kraus},\textsc{sol})
                $\triangleright$
                    Employ expressions in \textsc{sol} to simplify \textsc{kraus}
        \end{enumerate}
 \end{itemize}
\end{itemize}
\item{Solving systems of linear equations:}
The next function solves systems of homogeneous linear equations symbolically;
this system is $A_n\bm{x}=\bm0$ for $(d_1\times d_2)$-dimensional symbolic matrices $A_n$s 
with $n\in[N]$.
Solving this system of equations is accomplished by computing the intersection between kernels of $\left\{A_n\right\}$,
which is accomplished by solving a system of linear homogeneous equations.
The matrix of coefficients in this system of coupled linear equations
is an $Nd_1\times d_2$ matrix constructed by stacking $N$ instances of $A_n$ matrices.
This system of homogeneous linear equations is solved by standard methods such as, singular value decomposition.
\begin{itemize}
\item\textsc{symSolve}
    $\triangleright$ 
        Solves symbolic~$\bm{x}$
        for $A_n\bm{x}=\bm0\forall n\in[N]$,
        with each $A_n$ a $\textsc{k}\times \textsc{l}$ symbolic matrix,
        and \textsc{simplify}
        employed to simplify all algebraic expressions.
  \begin{itemize}
      \item[] INPUT: \texttt{symbol}[\textsc{n}][\textsc{k}][\textsc{l}] \textsc{a}
      \item[] OUTPUT: \texttt{symbol}[\textsc{l}] \textsc{x}
 \end{itemize}
\end{itemize}
   
\item{Reshape:}
Our library incorporates a function that reshapes the input symbolic vector into a square matrix,
where the input vector has a length
that is a squared integer~$d^2$ and the matrix has size $d\times d$.
Reshaping is accomplished by writing in order each element of the vector into each element of the first row of the matrix until that row is full.
Then we continue by writing the next elements of the vector into the next row of the matrix until that row is filled.
This procedure is complete when all rows of the vector are written into all elements of the square matrix,
and we have ensured that our matrix is exactly the right size for this transcription from a length~$d^2$ vector to work properly.
\begin{itemize}
    \item 
\textsc{reshape}
    $\triangleright$
        Reshape symbolic \textsc{vector}~\S\ref{subsubsec:blibrary} 
            but here only for a vector of squared-integer length to a square matrix,
            converted according to the rule that the first row fills the matrix, then the second row and so on.
\begin{itemize}
    \item [] INPUT:
        \texttt{symbol}$\left[\textsc{Dim}^2\right]$ \textsc{vector}
    \item [] OUTPUT:
        \texttt{symbol}[\textsc{Dim}][\textsc{Dim}] \textsc{matrix}
\end{itemize}
\end{itemize}
\item {Convert a vector to multiple matrices:}
Our library furthermore requires a function 
that converts a given vector to a set of square matrices,
which generalizes the previous function
that maps a vector to a single square matrix.
The input is a symbolic vector of size $Kd^2\times 1$ yielding~$K$ number of $d\times d$ matrices at the output. 
This function chops an input vector of size $Kd^2$ into~$K$  length~$d^2$ vectors and then reshapes each of these vectors into a square $d\times d$ matrix. 
\begin{itemize}
    \item 
\textsc{vecToMatr}
       $\triangleright$ Chops a length-$KD^2$  symbolic vector into~$K$ length~$D^2$ vectors and \textsc{reshape} each vector to a square~$D$-dimensional symbolic.
       \begin{itemize}
           \item [] INPUT: 
           \begin{itemize}
           \item [ ]\texttt{symbol}$\left[\textsc{k}*\textsc{d}^2\right]$ \textsc{vector}, \item[ ]\texttt{posinteger} \textsc{k}
           \end{itemize}
           \item [] OUTPUT: \texttt{symbol}[\textsc{k}][\textsc{d}][\textsc{d}] \textsc{matrixSeq}
       \end{itemize}
       \end{itemize}
\end{enumerate}

\bibliography{qchannels}

\begin{thebibliography}{67}%
\makeatletter
\providecommand \@ifxundefined [1]{%
 \@ifx{#1\undefined}
}%
\providecommand \@ifnum [1]{%
 \ifnum #1\expandafter \@firstoftwo
 \else \expandafter \@secondoftwo
 \fi
}%
\providecommand \@ifx [1]{%
 \ifx #1\expandafter \@firstoftwo
 \else \expandafter \@secondoftwo
 \fi
}%
\providecommand \natexlab [1]{#1}%
\providecommand \enquote  [1]{``#1''}%
\providecommand \bibnamefont  [1]{#1}%
\providecommand \bibfnamefont [1]{#1}%
\providecommand \citenamefont [1]{#1}%
\providecommand \href@noop [0]{\@secondoftwo}%
\providecommand \href [0]{\begingroup \@sanitize@url \@href}%
\providecommand \@href[1]{\@@startlink{#1}\@@href}%
\providecommand \@@href[1]{\endgroup#1\@@endlink}%
\providecommand \@sanitize@url [0]{\catcode `\\12\catcode `\$12\catcode
  `\&12\catcode `\#12\catcode `\^12\catcode `\_12\catcode `\%12\relax}%
\providecommand \@@startlink[1]{}%
\providecommand \@@endlink[0]{}%
\providecommand \url  [0]{\begingroup\@sanitize@url \@url }%
\providecommand \@url [1]{\endgroup\@href {#1}{\urlprefix }}%
\providecommand \urlprefix  [0]{URL }%
\providecommand \Eprint [0]{\href }%
\providecommand \doibase [0]{http://dx.doi.org/}%
\providecommand \selectlanguage [0]{\@gobble}%
\providecommand \bibinfo  [0]{\@secondoftwo}%
\providecommand \bibfield  [0]{\@secondoftwo}%
\providecommand \translation [1]{[#1]}%
\providecommand \BibitemOpen [0]{}%
\providecommand \bibitemStop [0]{}%
\providecommand \bibitemNoStop [0]{.\EOS\space}%
\providecommand \EOS [0]{\spacefactor3000\relax}%
\providecommand \BibitemShut  [1]{\csname bibitem#1\endcsname}%
\let\auto@bib@innerbib\@empty
\bibitem [{\citenamefont {Ludwig}(1968)}]{Ludwig1968}%
  \BibitemOpen
  \bibfield  {author} {\bibinfo {author} {\bibfnamefont {G.}~\bibnamefont
  {Ludwig}},\ }\bibfield  {title} {\enquote {\bibinfo {title} {Attempt of an
  axiomatic foundation of quantum mechanics and more general theories. iii.}}\
  }\href {\doibase https://doi.org/10.1007/BF01654027} {\bibfield  {journal}
  {\bibinfo  {journal} {Commun. \ Math. \ Phys}\ }\textbf {\bibinfo {volume}
  {9}},\ \bibinfo {pages} {1--12} (\bibinfo {year} {1968})}\BibitemShut
  {NoStop}%
\bibitem [{\citenamefont {Hellwig}\ and\ \citenamefont
  {Kraus}(1969)}]{Hellwig1969}%
  \BibitemOpen
  \bibfield  {author} {\bibinfo {author} {\bibfnamefont {K.}~\bibnamefont
  {Hellwig}}\ and\ \bibinfo {author} {\bibfnamefont {K.}~\bibnamefont
  {Kraus}},\ }\bibfield  {title} {\enquote {\bibinfo {title} {Pure operations
  and measurements},}\ }\href {\doibase https://doi.org/10.1007/BF01645807}
  {\bibfield  {journal} {\bibinfo  {journal} {Commun. \ Math. \ Phys}\ }\textbf
  {\bibinfo {volume} {11}},\ \bibinfo {pages} {214--220} (\bibinfo {year}
  {1969})}\BibitemShut {NoStop}%
\bibitem [{\citenamefont {Doplicher}\ \emph {et~al.}(1971)\citenamefont
  {Doplicher}, \citenamefont {Hagg},\ and\ \citenamefont
  {Roberts}}]{Doplicher1971}%
  \BibitemOpen
  \bibfield  {author} {\bibinfo {author} {\bibfnamefont {S.}~\bibnamefont
  {Doplicher}}, \bibinfo {author} {\bibfnamefont {R.}~\bibnamefont {Hagg}}, \
  and\ \bibinfo {author} {\bibfnamefont {J.~E.}\ \bibnamefont {Roberts}},\
  }\bibfield  {title} {\enquote {\bibinfo {title} {Local observables and
  particle statistics i},}\ }\href {\doibase
  https://doi.org/10.1007/BF01877742} {\bibfield  {journal} {\bibinfo
  {journal} {Commun.Math. Phys.}\ }\textbf {\bibinfo {volume} {23}},\ \bibinfo
  {pages} {199--230} (\bibinfo {year} {1971})}\BibitemShut {NoStop}%
\bibitem [{\citenamefont {Holevo}(2001)}]{Holevo2001}%
  \BibitemOpen
  \bibfield  {author} {\bibinfo {author} {\bibfnamefont {A.~S.}\ \bibnamefont
  {Holevo}},\ }\href@noop {} {\emph {\bibinfo {title} {Statistical Structure of
  Quantum Theory}}},\ \bibinfo {series} {Lecture Notes in Physics Monographs},
  Vol.~\bibinfo {volume} {67}\ (\bibinfo  {publisher} {Springer-Verlag},\
  \bibinfo {address} {Berlin},\ \bibinfo {year} {2001})\BibitemShut {NoStop}%
\bibitem [{\citenamefont {{Shannon}}(1948{\natexlab{a}})}]{Shannon1948-1}%
  \BibitemOpen
  \bibfield  {author} {\bibinfo {author} {\bibfnamefont {C.~E.}\ \bibnamefont
  {{Shannon}}},\ }\bibfield  {title} {\enquote {\bibinfo {title} {A
  mathematical theory of communication},}\ }\href@noop {} {\bibfield  {journal}
  {\bibinfo  {journal} {Bell System Technical Journal}\ }\textbf {\bibinfo
  {volume} {27}},\ \bibinfo {pages} {379--423} (\bibinfo {year}
  {1948}{\natexlab{a}})}\BibitemShut {NoStop}%
\bibitem [{\citenamefont {{Shannon}}(1948{\natexlab{b}})}]{Shannon1948-2}%
  \BibitemOpen
  \bibfield  {author} {\bibinfo {author} {\bibfnamefont {C.~E.}\ \bibnamefont
  {{Shannon}}},\ }\bibfield  {title} {\enquote {\bibinfo {title} {A
  mathematical theory of communication},}\ }\href@noop {} {\bibfield  {journal}
  {\bibinfo  {journal} {Bell System Technical Journal}\ }\textbf {\bibinfo
  {volume} {27}},\ \bibinfo {pages} {623--656} (\bibinfo {year}
  {1948}{\natexlab{b}})}\BibitemShut {NoStop}%
\bibitem [{\citenamefont {Landau}\ and\ \citenamefont {Streater}(1993)}]{LS93}%
  \BibitemOpen
  \bibfield  {author} {\bibinfo {author} {\bibfnamefont {L.~J.}\ \bibnamefont
  {Landau}}\ and\ \bibinfo {author} {\bibfnamefont {R.~F.}\ \bibnamefont
  {Streater}},\ }\bibfield  {title} {\enquote {\bibinfo {title} {On birkhoff's
  theorem for doubly stochastic completely positive maps of matrix algebras},}\
  }\href {\doibase 10/.1016/0024-3795(93)90274-R} {\bibfield  {journal}
  {\bibinfo  {journal} {Linear Algebra Its Appl.}\ }\textbf {\bibinfo {volume}
  {193}},\ \bibinfo {pages} {107--127} (\bibinfo {year} {1993})}\BibitemShut
  {NoStop}%
\bibitem [{\citenamefont {Choi}(1975)}]{Cho75}%
  \BibitemOpen
  \bibfield  {author} {\bibinfo {author} {\bibfnamefont {Man-Duen}\
  \bibnamefont {Choi}},\ }\bibfield  {title} {\enquote {\bibinfo {title}
  {Completely positive linear maps on complex matrices},}\ }\href {\doibase
  10.1016/0024-3795(75)90075-0} {\bibfield  {journal} {\bibinfo  {journal}
  {Linear Algebra Its Appl.}\ }\textbf {\bibinfo {volume} {10}},\ \bibinfo
  {pages} {285--290} (\bibinfo {year} {1975})}\BibitemShut {NoStop}%
\bibitem [{\citenamefont {{Fujiwara}}\ and\ \citenamefont
  {{Algoet}}(1998)}]{FA98}%
  \BibitemOpen
  \bibfield  {author} {\bibinfo {author} {\bibfnamefont {A.}~\bibnamefont
  {{Fujiwara}}}\ and\ \bibinfo {author} {\bibfnamefont {P.}~\bibnamefont
  {{Algoet}}},\ }\bibfield  {title} {\enquote {\bibinfo {title} {Affine
  parameterization of quantum channels},}\ }in\ \href {\doibase
  10.1109/ISIT.1998.708673} {\emph {\bibinfo {booktitle} {Proc.\ 1998 IEEE
  International Symposium on Information Theory}}}\ (\bibinfo {year} {1998})\
  p.~\bibinfo {pages} {87}\BibitemShut {NoStop}%
\bibitem [{\citenamefont {Fujiwara}\ and\ \citenamefont {Algoet}(1999)}]{FA99}%
  \BibitemOpen
  \bibfield  {author} {\bibinfo {author} {\bibfnamefont {Akio}\ \bibnamefont
  {Fujiwara}}\ and\ \bibinfo {author} {\bibfnamefont {Paul}\ \bibnamefont
  {Algoet}},\ }\bibfield  {title} {\enquote {\bibinfo {title} {One-to-one
  parametrization of quantum channels},}\ }\href {\doibase
  10.1103/PhysRevA.59.3290} {\bibfield  {journal} {\bibinfo  {journal} {Phys.\
  Rev.\ A}\ }\textbf {\bibinfo {volume} {59}},\ \bibinfo {pages} {3290--3294}
  (\bibinfo {year} {1999})}\BibitemShut {NoStop}%
\bibitem [{\citenamefont {Ruskai}\ \emph {et~al.}(2002)\citenamefont {Ruskai},
  \citenamefont {Szarek},\ and\ \citenamefont {Werner}}]{RSW02}%
  \BibitemOpen
  \bibfield  {author} {\bibinfo {author} {\bibfnamefont {M.~B.}\ \bibnamefont
  {Ruskai}}, \bibinfo {author} {\bibfnamefont {S.}~\bibnamefont {Szarek}}, \
  and\ \bibinfo {author} {\bibfnamefont {E.}~\bibnamefont {Werner}},\
  }\bibfield  {title} {\enquote {\bibinfo {title} {An analysis of completely
  positive trace-preserving maps on $m_2$},}\ }\href {\doibase
  10/.1016/S0024-3795(01)00547-X} {\bibfield  {journal} {\bibinfo  {journal}
  {Linear Algebra Its Appl.}\ }\textbf {\bibinfo {volume} {347}},\ \bibinfo
  {pages} {159--187} (\bibinfo {year} {2002})}\BibitemShut {NoStop}%
\bibitem [{\citenamefont {Braun}\ \emph {et~al.}(2014)\citenamefont {Braun},
  \citenamefont {Giraud}, \citenamefont {Nechita}, \citenamefont {Pellegrini},\
  and\ \citenamefont {{\v{Z}}nidari{\v{c}}}}]{BGNPZ13}%
  \BibitemOpen
  \bibfield  {author} {\bibinfo {author} {\bibfnamefont {Daniel}\ \bibnamefont
  {Braun}}, \bibinfo {author} {\bibfnamefont {Olivier}\ \bibnamefont {Giraud}},
  \bibinfo {author} {\bibfnamefont {Ion}\ \bibnamefont {Nechita}}, \bibinfo
  {author} {\bibfnamefont {Cl{\'{e}}ment}\ \bibnamefont {Pellegrini}}, \ and\
  \bibinfo {author} {\bibfnamefont {Marko}\ \bibnamefont
  {{\v{Z}}nidari{\v{c}}}},\ }\bibfield  {title} {\enquote {\bibinfo {title} {A
  universal set of qubit quantum channels},}\ }\href {\doibase
  10.1088/1751-8113/47/13/135302} {\bibfield  {journal} {\bibinfo  {journal}
  {J.\ Phys.\ A}\ }\textbf {\bibinfo {volume} {47}},\ \bibinfo {pages} {135302}
  (\bibinfo {year} {2014})}\BibitemShut {NoStop}%
\bibitem [{\citenamefont {Mendl}\ and\ \citenamefont {Wolf}(2009)}]{MW09}%
  \BibitemOpen
  \bibfield  {author} {\bibinfo {author} {\bibfnamefont {Christian~B.}\
  \bibnamefont {Mendl}}\ and\ \bibinfo {author} {\bibfnamefont {Michael~M.}\
  \bibnamefont {Wolf}},\ }\bibfield  {title} {\enquote {\bibinfo {title}
  {Unital quantum channels -- convex structure and revivals of birkhoff's
  theorem},}\ }\href {\doibase 10.1007/s00220-009-0824-2} {\bibfield  {journal}
  {\bibinfo  {journal} {Commun. Math. Phys.}\ }\textbf {\bibinfo {volume}
  {289}},\ \bibinfo {pages} {1057--1086} (\bibinfo {year} {2009})}\BibitemShut
  {NoStop}%
\bibitem [{\citenamefont {Haagerup}\ \emph {et~al.}(2020)\citenamefont
  {Haagerup}, \citenamefont {Musat},\ and\ \citenamefont
  {Ruskai}}]{Haagerup2020}%
  \BibitemOpen
  \bibfield  {author} {\bibinfo {author} {\bibfnamefont {U.}~\bibnamefont
  {Haagerup}}, \bibinfo {author} {\bibfnamefont {M.}~\bibnamefont {Musat}}, \
  and\ \bibinfo {author} {\bibfnamefont {M.~B.}\ \bibnamefont {Ruskai}},\
  }\bibfield  {title} {\enquote {\bibinfo {title} {Extreme points and
  factorizability for new classes of unital quantum channels},}\ }\href
  {https://arxiv.org/abs/2006.03414} {\bibfield  {journal} {\bibinfo  {journal}
  {arXiv:2006.03414}\ } (\bibinfo {year} {2020})}\BibitemShut {NoStop}%
\bibitem [{\citenamefont {P\'erez-Garc\'ia}\ \emph {et~al.}(2006)\citenamefont
  {P\'erez-Garc\'ia}, \citenamefont {Wolf}, \citenamefont {Petz},\ and\
  \citenamefont {Ruskai}}]{PWPR2006}%
  \BibitemOpen
  \bibfield  {author} {\bibinfo {author} {\bibfnamefont {David}\ \bibnamefont
  {P\'erez-Garc\'ia}}, \bibinfo {author} {\bibfnamefont {Michael~M.}\
  \bibnamefont {Wolf}}, \bibinfo {author} {\bibfnamefont {Denes}\ \bibnamefont
  {Petz}}, \ and\ \bibinfo {author} {\bibfnamefont {Mary~Beth}\ \bibnamefont
  {Ruskai}},\ }\bibfield  {title} {\enquote {\bibinfo {title} {Contractivity of
  positive and trace-preserving maps under lp norms},}\ }\href@noop {}
  {\bibfield  {journal} {\bibinfo  {journal} {J. Math. Phys.}\ }\textbf
  {\bibinfo {volume} {47}},\ \bibinfo {pages} {083506} (\bibinfo {year}
  {2006})}\BibitemShut {NoStop}%
\bibitem [{\citenamefont {King}(2002)}]{King2002}%
  \BibitemOpen
  \bibfield  {author} {\bibinfo {author} {\bibfnamefont {Christopher}\
  \bibnamefont {King}},\ }\bibfield  {title} {\enquote {\bibinfo {title}
  {Additivity for unital qubit channels},}\ }\href {\doibase 10.1063/1.1500791}
  {\bibfield  {journal} {\bibinfo  {journal} {J. Math. Phys.}\ }\textbf
  {\bibinfo {volume} {43}},\ \bibinfo {pages} {4641--4653} (\bibinfo {year}
  {2002})},\ \Eprint
  {http://arxiv.org/abs/https://aip.scitation.org/doi/pdf/10.1063/1.1500791}
  {https://aip.scitation.org/doi/pdf/10.1063/1.1500791} \BibitemShut {NoStop}%
\bibitem [{\citenamefont {Fukuda}(2007)}]{Fukuda2007}%
  \BibitemOpen
  \bibfield  {author} {\bibinfo {author} {\bibfnamefont {M.}~\bibnamefont
  {Fukuda}},\ }\bibfield  {title} {\enquote {\bibinfo {title} {Simplification
  of additivity conjecture in quantum information theory},}\ }\href@noop {}
  {\bibfield  {journal} {\bibinfo  {journal} {Quantum Inf. Process.}\ }\textbf
  {\bibinfo {volume} {6}} (\bibinfo {year} {2007})}\BibitemShut {NoStop}%
\bibitem [{\citenamefont {Feynman}(1982)}]{Feynman1982}%
  \BibitemOpen
  \bibfield  {author} {\bibinfo {author} {\bibfnamefont {R.~P.}\ \bibnamefont
  {Feynman}},\ }\bibfield  {title} {\enquote {\bibinfo {title} {Simulating
  physics with computers},}\ }\href {\doibase
  https://doi.org/10.1007/BF02650179} {\bibfield  {journal} {\bibinfo
  {journal} {Int. J. Theor. Phys.}\ }\textbf {\bibinfo {volume} {21}},\
  \bibinfo {pages} {467--488} (\bibinfo {year} {1982})}\BibitemShut {NoStop}%
\bibitem [{\citenamefont {Lloyd}(1996)}]{Lloyd1996}%
  \BibitemOpen
  \bibfield  {author} {\bibinfo {author} {\bibfnamefont {Seth}\ \bibnamefont
  {Lloyd}},\ }\bibfield  {title} {\enquote {\bibinfo {title} {Universal quantum
  simulators},}\ }\href {\doibase 10.1126/science.273.5278.1073} {\bibfield
  {journal} {\bibinfo  {journal} {Science}\ }\textbf {\bibinfo {volume}
  {273}},\ \bibinfo {pages} {1073--1078} (\bibinfo {year} {1996})}\BibitemShut
  {NoStop}%
\bibitem [{\citenamefont {Buluta}\ and\ \citenamefont
  {Nori}(2009)}]{BulutaNori2009}%
  \BibitemOpen
  \bibfield  {author} {\bibinfo {author} {\bibfnamefont {I.}~\bibnamefont
  {Buluta}}\ and\ \bibinfo {author} {\bibfnamefont {F.}~\bibnamefont {Nori}},\
  }\bibfield  {title} {\enquote {\bibinfo {title} {Quantum simulators},}\
  }\href {\doibase 10.1126/science.1177838} {\bibfield  {journal} {\bibinfo
  {journal} {Science}\ }\textbf {\bibinfo {volume} {326}},\ \bibinfo {pages}
  {108--111} (\bibinfo {year} {2009})}\BibitemShut {NoStop}%
\bibitem [{\citenamefont {Aharonov}\ and\ \citenamefont
  {Ta-Shma}(2003)}]{AharanovTa-shma}%
  \BibitemOpen
  \bibfield  {author} {\bibinfo {author} {\bibfnamefont {Dorit}\ \bibnamefont
  {Aharonov}}\ and\ \bibinfo {author} {\bibfnamefont {Amnon}\ \bibnamefont
  {Ta-Shma}},\ }\bibfield  {title} {\enquote {\bibinfo {title} {Adiabatic
  quantum state generation and statistical zero knowledge},}\ }in\ \href
  {\doibase 10.1145/780542.780546} {\emph {\bibinfo {booktitle} {Proc.
  Thirty-Fifth Annual ACM Symposium on Theory of Computing}}},\ \bibinfo
  {series and number} {STOC '03}\ (\bibinfo  {publisher} {Association for
  Computing Machinery},\ \bibinfo {address} {New York, USA},\ \bibinfo {year}
  {2003})\ pp.\ \bibinfo {pages} {20--29}\BibitemShut {NoStop}%
\bibitem [{\citenamefont {Berry}\ \emph {et~al.}(2007)\citenamefont {Berry},
  \citenamefont {Ahokas}, \citenamefont {Cleve},\ and\ \citenamefont
  {Sanders}}]{BerryAhokasSanders2007}%
  \BibitemOpen
  \bibfield  {author} {\bibinfo {author} {\bibfnamefont {D.~W.}\ \bibnamefont
  {Berry}}, \bibinfo {author} {\bibfnamefont {G.}~\bibnamefont {Ahokas}},
  \bibinfo {author} {\bibfnamefont {R.}~\bibnamefont {Cleve}}, \ and\ \bibinfo
  {author} {\bibfnamefont {B.~C.}\ \bibnamefont {Sanders}},\ }\bibfield
  {title} {\enquote {\bibinfo {title} {Efficient quantum algorithms for
  simulating sparse hamiltonians},}\ }\href@noop {} {\bibfield  {journal}
  {\bibinfo  {journal} {Commun. Math. Phys.}\ }\textbf {\bibinfo {volume}
  {270}},\ \bibinfo {pages} {359--371} (\bibinfo {year} {2007})}\BibitemShut
  {NoStop}%
\bibitem [{\citenamefont {Childs}(2010)}]{childs2010}%
  \BibitemOpen
  \bibfield  {author} {\bibinfo {author} {\bibfnamefont {A.~M.}\ \bibnamefont
  {Childs}},\ }\bibfield  {title} {\enquote {\bibinfo {title} {On the
  relationship between continuous- and discrete-time quantum walk},}\
  }\href@noop {} {\bibfield  {journal} {\bibinfo  {journal} {Commun. Math.
  Phys}\ }\textbf {\bibinfo {volume} {294}},\ \bibinfo {pages} {581--603}
  (\bibinfo {year} {2010})}\BibitemShut {NoStop}%
\bibitem [{\citenamefont {Wiebe}\ \emph {et~al.}(2011)\citenamefont {Wiebe},
  \citenamefont {Berry}, \citenamefont {H{\o}yer},\ and\ \citenamefont
  {Sanders}}]{WiebeBerryHoyerSanders2011}%
  \BibitemOpen
  \bibfield  {author} {\bibinfo {author} {\bibfnamefont {Nathan}\ \bibnamefont
  {Wiebe}}, \bibinfo {author} {\bibfnamefont {Dominic~W}\ \bibnamefont
  {Berry}}, \bibinfo {author} {\bibfnamefont {Peter}\ \bibnamefont {H{\o}yer}},
  \ and\ \bibinfo {author} {\bibfnamefont {Barry~C}\ \bibnamefont {Sanders}},\
  }\bibfield  {title} {\enquote {\bibinfo {title} {Simulating quantum dynamics
  on a quantum computer},}\ }\href {\doibase 10.1088/1751-8113/44/44/445308}
  {\bibfield  {journal} {\bibinfo  {journal} {J. Phys. A}\ }\textbf {\bibinfo
  {volume} {44}},\ \bibinfo {pages} {445308} (\bibinfo {year}
  {2011})}\BibitemShut {NoStop}%
\bibitem [{\citenamefont {Sanders}(2013)}]{Sanders2013}%
  \BibitemOpen
  \bibfield  {author} {\bibinfo {author} {\bibfnamefont {B.~C.}\ \bibnamefont
  {Sanders}},\ }\bibfield  {title} {\enquote {\bibinfo {title} {Efficient
  algorithms for universal quantum simulation},}\ }in\ \href {\doibase
  https://doi.org/10.1007/978-3-642-38986-3_1} {\emph {\bibinfo {booktitle}
  {Reversible Computation. RC 2013}}},\ \bibinfo {series} {Lecture Notes in
  Computer Science: Programming and Software Engineering}, Vol.\ \bibinfo
  {volume} {7948},\ \bibinfo {editor} {edited by\ \bibinfo {editor}
  {\bibfnamefont {Gerhard~W.}\ \bibnamefont {Dueck}}\ and\ \bibinfo {editor}
  {\bibfnamefont {D.~Michael}\ \bibnamefont {Miller}}}\ (\bibinfo  {publisher}
  {Springer},\ \bibinfo {year} {2013})\ pp.\ \bibinfo {pages}
  {1--10}\BibitemShut {NoStop}%
\bibitem [{\citenamefont {Dawson}\ and\ \citenamefont
  {Nielsen}(2006)}]{DawsonNielsen2006}%
  \BibitemOpen
  \bibfield  {author} {\bibinfo {author} {\bibfnamefont {Christopher~M.}\
  \bibnamefont {Dawson}}\ and\ \bibinfo {author} {\bibfnamefont {Michael~A.}\
  \bibnamefont {Nielsen}},\ }\bibfield  {title} {\enquote {\bibinfo {title}
  {The solovay-kitaev algorithm},}\ }\href@noop {} {\bibfield  {journal}
  {\bibinfo  {journal} {Quantum Info. Comput.}\ }\textbf {\bibinfo {volume}
  {6}},\ \bibinfo {pages} {81--95} (\bibinfo {year} {2006})}\BibitemShut
  {NoStop}%
\bibitem [{\citenamefont {Nielsen}\ and\ \citenamefont {Chuang}(2010)}]{NCh10}%
  \BibitemOpen
  \bibfield  {author} {\bibinfo {author} {\bibfnamefont {Michael~A.}\
  \bibnamefont {Nielsen}}\ and\ \bibinfo {author} {\bibfnamefont {Isaac~L.}\
  \bibnamefont {Chuang}},\ }\href {\doibase 10.1017/CBO9780511976667} {\emph
  {\bibinfo {title} {Quantum Computation and Quantum Information}}}\ (\bibinfo
  {publisher} {Cambridge University Press},\ \bibinfo {address} {Cambridge},\
  \bibinfo {year} {2010})\BibitemShut {NoStop}%
\bibitem [{\citenamefont {Wang}\ \emph {et~al.}(2013)\citenamefont {Wang},
  \citenamefont {Berry}, \citenamefont {de~Oliveira},\ and\ \citenamefont
  {Sanders}}]{WBOS13}%
  \BibitemOpen
  \bibfield  {author} {\bibinfo {author} {\bibfnamefont {Dong-Sheng}\
  \bibnamefont {Wang}}, \bibinfo {author} {\bibfnamefont {Dominic~W.}\
  \bibnamefont {Berry}}, \bibinfo {author} {\bibfnamefont {Marcos~C.}\
  \bibnamefont {de~Oliveira}}, \ and\ \bibinfo {author} {\bibfnamefont
  {Barry~C}\ \bibnamefont {Sanders}},\ }\bibfield  {title} {\enquote {\bibinfo
  {title} {Solovay-kitaev decomposition strategy for single-qubit channels},}\
  }\href {\doibase 10.1103/PhysRevLett.111.130504} {\bibfield  {journal}
  {\bibinfo  {journal} {Phys.\ Rev.\ Lett.}\ }\textbf {\bibinfo {volume}
  {111}},\ \bibinfo {pages} {130504} (\bibinfo {year} {2013})}\BibitemShut
  {NoStop}%
\bibitem [{\citenamefont {Lu}\ \emph {et~al.}(2017)\citenamefont {Lu},
  \citenamefont {Liu}, \citenamefont {Wang}, \citenamefont {Chen},
  \citenamefont {Li}, \citenamefont {Yao}, \citenamefont {Li}, \citenamefont
  {Liu}, \citenamefont {Peng}, \citenamefont {Sanders}, \citenamefont {Chen},\
  and\ \citenamefont {Pan}}]{Lu2017}%
  \BibitemOpen
  \bibfield  {author} {\bibinfo {author} {\bibfnamefont {He}~\bibnamefont
  {Lu}}, \bibinfo {author} {\bibfnamefont {Chang}\ \bibnamefont {Liu}},
  \bibinfo {author} {\bibfnamefont {Dong-Sheng}\ \bibnamefont {Wang}}, \bibinfo
  {author} {\bibfnamefont {Luo-Kan}\ \bibnamefont {Chen}}, \bibinfo {author}
  {\bibfnamefont {Zheng-Da}\ \bibnamefont {Li}}, \bibinfo {author}
  {\bibfnamefont {Xing-Can}\ \bibnamefont {Yao}}, \bibinfo {author}
  {\bibfnamefont {Li}~\bibnamefont {Li}}, \bibinfo {author} {\bibfnamefont
  {Nai-Le}\ \bibnamefont {Liu}}, \bibinfo {author} {\bibfnamefont {Cheng-Zhi}\
  \bibnamefont {Peng}}, \bibinfo {author} {\bibfnamefont {Barry~C.}\
  \bibnamefont {Sanders}}, \bibinfo {author} {\bibfnamefont {Yu-Ao}\
  \bibnamefont {Chen}}, \ and\ \bibinfo {author} {\bibfnamefont {Jian-Wei}\
  \bibnamefont {Pan}},\ }\bibfield  {title} {\enquote {\bibinfo {title}
  {Experimental quantum channel simulation},}\ }\href {\doibase
  10.1103/PhysRevA.95.042310} {\bibfield  {journal} {\bibinfo  {journal} {Phys.
  Rev. A}\ }\textbf {\bibinfo {volume} {95}},\ \bibinfo {pages} {042310}
  (\bibinfo {year} {2017})}\BibitemShut {NoStop}%
\bibitem [{\citenamefont {Wang}\ and\ \citenamefont {Sanders}(2015)}]{WS15}%
  \BibitemOpen
  \bibfield  {author} {\bibinfo {author} {\bibfnamefont {Dong-Sheng}\
  \bibnamefont {Wang}}\ and\ \bibinfo {author} {\bibfnamefont {Barry~C.}\
  \bibnamefont {Sanders}},\ }\bibfield  {title} {\enquote {\bibinfo {title}
  {Quantum circuit design for accurate simulation of qudit channels},}\ }\href
  {\doibase 10.1088/1367-2630/17/4/043004} {\bibfield  {journal} {\bibinfo
  {journal} {New J. Phys.}\ }\textbf {\bibinfo {volume} {17}},\ \bibinfo
  {pages} {043004} (\bibinfo {year} {2015})}\BibitemShut {NoStop}%
\bibitem [{\citenamefont {Wang}(2016)}]{Wang2016}%
  \BibitemOpen
  \bibfield  {author} {\bibinfo {author} {\bibfnamefont {Dong-Sheng}\
  \bibnamefont {Wang}},\ }\bibfield  {title} {\enquote {\bibinfo {title}
  {Convex decomposition of dimension-altering quantum channels},}\ }\href
  {\doibase 10.1142/S0219749916500453} {\bibfield  {journal} {\bibinfo
  {journal} {Int. J. Quantum Inform.}\ }\textbf {\bibinfo {volume} {14}},\
  \bibinfo {pages} {1650045} (\bibinfo {year} {2016})}\BibitemShut {NoStop}%
\bibitem [{\citenamefont {Iten}\ \emph {et~al.}(2017)\citenamefont {Iten},
  \citenamefont {Colbeck},\ and\ \citenamefont
  {Christandl}}]{ItenColbeckChristandl2017}%
  \BibitemOpen
  \bibfield  {author} {\bibinfo {author} {\bibfnamefont {Raban}\ \bibnamefont
  {Iten}}, \bibinfo {author} {\bibfnamefont {Roger}\ \bibnamefont {Colbeck}}, \
  and\ \bibinfo {author} {\bibfnamefont {Matthias}\ \bibnamefont
  {Christandl}},\ }\bibfield  {title} {\enquote {\bibinfo {title} {Quantum
  circuits for quantum channels},}\ }\href {\doibase
  10.1103/PhysRevA.95.052316} {\bibfield  {journal} {\bibinfo  {journal} {Phys.
  Rev. A}\ }\textbf {\bibinfo {volume} {95}},\ \bibinfo {pages} {052316}
  (\bibinfo {year} {2017})}\BibitemShut {NoStop}%
\bibitem [{\citenamefont {Scutaru}(1979)}]{Scutaru1979}%
  \BibitemOpen
  \bibfield  {author} {\bibinfo {author} {\bibfnamefont {H.}~\bibnamefont
  {Scutaru}},\ }\bibfield  {title} {\enquote {\bibinfo {title} {Some remarks on
  covariant completely positive linear maps on c$^*$-algebras},}\ }\href
  {\doibase https://doi.org/10.1016/0034-4877(79)90040-5} {\bibfield  {journal}
  {\bibinfo  {journal} {Rep. Math. Phys.}\ }\textbf {\bibinfo {volume} {16}},\
  \bibinfo {pages} {79--87} (\bibinfo {year} {1979})}\BibitemShut {NoStop}%
\bibitem [{\citenamefont {Ruskai}(2000)}]{R00}%
  \BibitemOpen
  \bibfield  {author} {\bibinfo {author} {\bibfnamefont {M.~B}\ \bibnamefont
  {Ruskai}},\ }\bibfield  {title} {\enquote {\bibinfo {title} {Some open
  problems in quantum information theory},}\ }\href
  {https://arxiv.org/abs/0708.1902} {\bibfield  {journal} {\bibinfo  {journal}
  {arXiv:0708.1902}\ } (\bibinfo {year} {2000})}\BibitemShut {NoStop}%
\bibitem [{\citenamefont {Holevo}(2002)}]{Holevo2002}%
  \BibitemOpen
  \bibfield  {author} {\bibinfo {author} {\bibfnamefont {A.~S.}\ \bibnamefont
  {Holevo}},\ }\bibfield  {title} {\enquote {\bibinfo {title} {Remarks on the
  classical capacity of quantum channel},}\ }\href
  {arxiv.org/abs/quant-ph/0212025} {\bibfield  {journal} {\bibinfo  {journal}
  {arXiv:quant-ph/0212025}\ } (\bibinfo {year} {2002})}\BibitemShut {NoStop}%
\bibitem [{\citenamefont {K\"onig}\ and\ \citenamefont
  {Wehner}(2009)}]{KoenigWehner2009}%
  \BibitemOpen
  \bibfield  {author} {\bibinfo {author} {\bibfnamefont {Robert}\ \bibnamefont
  {K\"onig}}\ and\ \bibinfo {author} {\bibfnamefont {Stephanie}\ \bibnamefont
  {Wehner}},\ }\bibfield  {title} {\enquote {\bibinfo {title} {A strong
  converse for classical channel coding using entangled inputs},}\ }\href
  {\doibase 10.1103/PhysRevLett.103.070504} {\bibfield  {journal} {\bibinfo
  {journal} {Phys. Rev. Lett.}\ }\textbf {\bibinfo {volume} {103}},\ \bibinfo
  {pages} {070504} (\bibinfo {year} {2009})}\BibitemShut {NoStop}%
\bibitem [{\citenamefont {Datta}\ \emph {et~al.}(2016)\citenamefont {Datta},
  \citenamefont {Tomamichel},\ and\ \citenamefont
  {Wilde}}]{DattaTomamichelWilde2016}%
  \BibitemOpen
  \bibfield  {author} {\bibinfo {author} {\bibfnamefont {N.}~\bibnamefont
  {Datta}}, \bibinfo {author} {\bibfnamefont {M.}~\bibnamefont {Tomamichel}}, \
  and\ \bibinfo {author} {\bibfnamefont {M.~M.}\ \bibnamefont {Wilde}},\
  }\bibfield  {title} {\enquote {\bibinfo {title} {On the second-order
  asymptotics for entanglement-assisted communication},}\ }\href {\doibase
  10.1007/s11128-016-1272-5} {\bibfield  {journal} {\bibinfo  {journal}
  {Quantum Inf. Process.}\ }\textbf {\bibinfo {volume} {15}},\ \bibinfo {pages}
  {2569--2591} (\bibinfo {year} {2016})}\BibitemShut {NoStop}%
\bibitem [{\citenamefont {{Wilde}}\ \emph {et~al.}(2017)\citenamefont
  {{Wilde}}, \citenamefont {{Tomamichel}},\ and\ \citenamefont
  {{Berta}}}]{WildeTomamichelBerta2017}%
  \BibitemOpen
  \bibfield  {author} {\bibinfo {author} {\bibfnamefont {M.~M.}\ \bibnamefont
  {{Wilde}}}, \bibinfo {author} {\bibfnamefont {M.}~\bibnamefont
  {{Tomamichel}}}, \ and\ \bibinfo {author} {\bibfnamefont {M.}~\bibnamefont
  {{Berta}}},\ }\bibfield  {title} {\enquote {\bibinfo {title} {Converse bounds
  for private communication over quantum channels},}\ }\href@noop {} {\bibfield
   {journal} {\bibinfo  {journal} {IEEE Trans. Inf. Theory}\ }\textbf {\bibinfo
  {volume} {63}},\ \bibinfo {pages} {1792--1817} (\bibinfo {year}
  {2017})}\BibitemShut {NoStop}%
\bibitem [{\citenamefont {Das}\ \emph {et~al.}(2020)\citenamefont {Das},
  \citenamefont {B\"auml},\ and\ \citenamefont
  {Wilde}}]{SiddharthaBaumlWilde2020}%
  \BibitemOpen
  \bibfield  {author} {\bibinfo {author} {\bibfnamefont {Siddhartha}\
  \bibnamefont {Das}}, \bibinfo {author} {\bibfnamefont {Stefan}\ \bibnamefont
  {B\"auml}}, \ and\ \bibinfo {author} {\bibfnamefont {Mark~M.}\ \bibnamefont
  {Wilde}},\ }\bibfield  {title} {\enquote {\bibinfo {title} {Entanglement and
  secret-key-agreement capacities of bipartite quantum interactions and
  read-only memory devices},}\ }\href {\doibase 10.1103/PhysRevA.101.012344}
  {\bibfield  {journal} {\bibinfo  {journal} {Phys. Rev. A}\ }\textbf {\bibinfo
  {volume} {101}},\ \bibinfo {pages} {012344} (\bibinfo {year}
  {2020})}\BibitemShut {NoStop}%
\bibitem [{\citenamefont {Karimipour}\ \emph {et~al.}(2011)\citenamefont
  {Karimipour}, \citenamefont {Mani},\ and\ \citenamefont
  {Memarzadeh}}]{KMM11}%
  \BibitemOpen
  \bibfield  {author} {\bibinfo {author} {\bibfnamefont {Vahid}\ \bibnamefont
  {Karimipour}}, \bibinfo {author} {\bibfnamefont {Azam}\ \bibnamefont {Mani}},
  \ and\ \bibinfo {author} {\bibfnamefont {Laleh}\ \bibnamefont {Memarzadeh}},\
  }\bibfield  {title} {\enquote {\bibinfo {title} {Characterization of qutrit
  channels in terms of their covariance and symmetry properties},}\ }\href
  {\doibase 10.1103/PhysRevA.84.012321} {\bibfield  {journal} {\bibinfo
  {journal} {Phys.\ Rev.\ A}\ }\textbf {\bibinfo {volume} {84}},\ \bibinfo
  {pages} {012321} (\bibinfo {year} {2011})}\BibitemShut {NoStop}%
\bibitem [{\citenamefont {Mozrzymas}\ \emph {et~al.}(2017)\citenamefont
  {Mozrzymas}, \citenamefont {Studziński},\ and\ \citenamefont
  {Datta}}]{MozrzymasStudzinskiDatta2017}%
  \BibitemOpen
  \bibfield  {author} {\bibinfo {author} {\bibfnamefont {Marek}\ \bibnamefont
  {Mozrzymas}}, \bibinfo {author} {\bibfnamefont {Michał}\ \bibnamefont
  {Studziński}}, \ and\ \bibinfo {author} {\bibfnamefont {Nilanjana}\
  \bibnamefont {Datta}},\ }\bibfield  {title} {\enquote {\bibinfo {title}
  {Structure of irreducibly covariant quantum channels for finite groups},}\
  }\href@noop {} {\bibfield  {journal} {\bibinfo  {journal} {J. Math. Phys.}\
  }\textbf {\bibinfo {volume} {58}},\ \bibinfo {pages} {052204} (\bibinfo
  {year} {2017})},\ \Eprint
  {http://arxiv.org/abs/https://doi.org/10.1063/1.4983710}
  {https://doi.org/10.1063/1.4983710} \BibitemShut {NoStop}%
\bibitem [{\citenamefont {Siudzi\'{n}ska}\ and\ \citenamefont
  {Chru\'{s}ci\'{n}ski}(2018)}]{SiudzinskaChruscinski2018}%
  \BibitemOpen
  \bibfield  {author} {\bibinfo {author} {\bibfnamefont {Katarzyna}\
  \bibnamefont {Siudzi\'{n}ska}}\ and\ \bibinfo {author} {\bibfnamefont
  {Dariusz}\ \bibnamefont {Chru\'{s}ci\'{n}ski}},\ }\bibfield  {title}
  {\enquote {\bibinfo {title} {Quantum channels irreducibly covariant with
  respect to the finite group generated by the weyl operators},}\ }\href@noop
  {} {\bibfield  {journal} {\bibinfo  {journal} {J. Math. Phys.}\ }\textbf
  {\bibinfo {volume} {59}},\ \bibinfo {pages} {033508} (\bibinfo {year}
  {2018})}\BibitemShut {NoStop}%
\bibitem [{\citenamefont {Datta}\ \emph {et~al.}(2006)\citenamefont {Datta},
  \citenamefont {Fukuda},\ and\ \citenamefont
  {Holevo}}]{DattaFukudaHolevo2006}%
  \BibitemOpen
  \bibfield  {author} {\bibinfo {author} {\bibfnamefont {N.}~\bibnamefont
  {Datta}}, \bibinfo {author} {\bibfnamefont {M.}~\bibnamefont {Fukuda}}, \
  and\ \bibinfo {author} {\bibfnamefont {A.~S.}\ \bibnamefont {Holevo}},\
  }\bibfield  {title} {\enquote {\bibinfo {title} {Complementarity and
  additivity for covariant channels},}\ }\href {\doibase
  https://doi.org/10.1007/s11128-006-0021-6} {\bibfield  {journal} {\bibinfo
  {journal} {Quantum Inf. Process.}\ }\textbf {\bibinfo {volume} {5}},\
  \bibinfo {pages} {179--207} (\bibinfo {year} {2006})}\BibitemShut {NoStop}%
\bibitem [{\citenamefont {Arora}\ and\ \citenamefont {Barak}(2009)}]{AB09}%
  \BibitemOpen
  \bibfield  {author} {\bibinfo {author} {\bibfnamefont {S.}~\bibnamefont
  {Arora}}\ and\ \bibinfo {author} {\bibfnamefont {B.}~\bibnamefont {Barak}},\
  }\href@noop {} {\emph {\bibinfo {title} {Computational Complexity: A Modern
  Approach}}}\ (\bibinfo  {publisher} {Cambridge University Press},\ \bibinfo
  {address} {Cambridge},\ \bibinfo {year} {2009})\BibitemShut {NoStop}%
\bibitem [{\citenamefont {Coxeter}\ and\ \citenamefont
  {Moser}(1972)}]{CoxeterMoser}%
  \BibitemOpen
  \bibfield  {author} {\bibinfo {author} {\bibfnamefont {H.~S.~M.}\
  \bibnamefont {Coxeter}}\ and\ \bibinfo {author} {\bibfnamefont {W.~O.~J.}\
  \bibnamefont {Moser}},\ }\href {\doibase 10.1007/978-3-662-21946-1} {\emph
  {\bibinfo {title} {Generators and Relations for Discrete Groups}}},\
  Vol.~\bibinfo {volume} {14}\ (\bibinfo  {publisher} {Springer-Verlag Berlin
  Heidelberg},\ \bibinfo {year} {1972})\BibitemShut {NoStop}%
\bibitem [{\citenamefont {Barut}\ and\ \citenamefont
  {Raczka}(1986)}]{BarutRaczka}%
  \BibitemOpen
  \bibfield  {author} {\bibinfo {author} {\bibfnamefont {A.}~\bibnamefont
  {Barut}}\ and\ \bibinfo {author} {\bibfnamefont {R.}~\bibnamefont {Raczka}},\
  }\href {\doibase 10.1142/0352} {\emph {\bibinfo {title} {Theory of Group
  Representations and Applications}}}\ (\bibinfo  {publisher} {WORLD
  SCIENTIFIC},\ \bibinfo {year} {1986})\BibitemShut {NoStop}%
\bibitem [{\citenamefont {Herstein}(1975)}]{Herstein}%
  \BibitemOpen
  \bibfield  {author} {\bibinfo {author} {\bibfnamefont {I.~N.}\ \bibnamefont
  {Herstein}},\ }\href {https://books.google.co.uk/books?id=lgfvAAAAMAAJ}
  {\emph {\bibinfo {title} {Topics in Algebra}}}\ (\bibinfo  {publisher} {John
  Wiley \& Sons},\ \bibinfo {year} {1975})\BibitemShut {NoStop}%
\bibitem [{\citenamefont {Young}(1900)}]{Young}%
  \BibitemOpen
  \bibfield  {author} {\bibinfo {author} {\bibfnamefont {A.}~\bibnamefont
  {Young}},\ }\bibfield  {title} {\enquote {\bibinfo {title} {On quantitative
  substitutional analysis},}\ }\href {\doibase 10.1112/plms/s1-33.1.97}
  {\bibfield  {journal} {\bibinfo  {journal} {Proc. London Mathematical
  Society}\ }\textbf {\bibinfo {volume} {33}},\ \bibinfo {pages} {97--145}
  (\bibinfo {year} {1900})}\BibitemShut {NoStop}%
\bibitem [{\citenamefont {Karimipour}\ and\ \citenamefont
  {Memarzadeh}(2008)}]{KM08}%
  \BibitemOpen
  \bibfield  {author} {\bibinfo {author} {\bibfnamefont {Vahid}\ \bibnamefont
  {Karimipour}}\ and\ \bibinfo {author} {\bibfnamefont {Laleh}\ \bibnamefont
  {Memarzadeh}},\ }\bibfield  {title} {\enquote {\bibinfo {title} {Matrix
  product representations for all valence bond states},}\ }\href {\doibase
  10.1103/PhysRevB.77.094416} {\bibfield  {journal} {\bibinfo  {journal}
  {Phys.\ Rev.\ B}\ }\textbf {\bibinfo {volume} {77}},\ \bibinfo {pages}
  {094416} (\bibinfo {year} {2008})}\BibitemShut {NoStop}%
\bibitem [{\citenamefont {Duan}\ and\ \citenamefont {Guo}(1998)}]{DuanGuo1998}%
  \BibitemOpen
  \bibfield  {author} {\bibinfo {author} {\bibfnamefont {L.~M.}\ \bibnamefont
  {Duan}}\ and\ \bibinfo {author} {\bibfnamefont {G.~C.}\ \bibnamefont {Guo}},\
  }\bibfield  {title} {\enquote {\bibinfo {title} {Reducing decoherence in
  quantum-computer memory with all quantum bits coupling to the same
  environment},}\ }\href {\doibase 10.1103/PhysRevA.57.737} {\bibfield
  {journal} {\bibinfo  {journal} {Phys. Rev. A}\ }\textbf {\bibinfo {volume}
  {57}},\ \bibinfo {pages} {737--741} (\bibinfo {year} {1998})}\BibitemShut
  {NoStop}%
\bibitem [{\citenamefont {Macchiavello}\ and\ \citenamefont
  {Palma}(2002)}]{MacchiavelloPalma2002}%
  \BibitemOpen
  \bibfield  {author} {\bibinfo {author} {\bibfnamefont {C.}~\bibnamefont
  {Macchiavello}}\ and\ \bibinfo {author} {\bibfnamefont {G.~M.}\ \bibnamefont
  {Palma}},\ }\bibfield  {title} {\enquote {\bibinfo {title}
  {Entanglement-enhanced information transmission over a quantum channel with
  correlated noise},}\ }\href {\doibase 10.1103/PhysRevA.65.050301} {\bibfield
  {journal} {\bibinfo  {journal} {Phys. Rev. A}\ }\textbf {\bibinfo {volume}
  {65}},\ \bibinfo {pages} {050301(R)} (\bibinfo {year} {2002})}\BibitemShut
  {NoStop}%
\bibitem [{\citenamefont {Holevo}(2006)}]{Hol06}%
  \BibitemOpen
  \bibfield  {author} {\bibinfo {author} {\bibfnamefont {A.~S.}\ \bibnamefont
  {Holevo}},\ }\bibfield  {title} {\enquote {\bibinfo {title} {The additivity
  problem in quantum information theory},}\ }in\ \href@noop {} {\emph {\bibinfo
  {booktitle} {Proc. International Congress of Mathematicians}}}\ (\bibinfo
  {publisher} {European Mathematical Society},\ \bibinfo {year} {2006})\ pp.\
  \bibinfo {pages} {999--1018}\BibitemShut {NoStop}%
\bibitem [{\citenamefont {Hastings}(2009)}]{Has09}%
  \BibitemOpen
  \bibfield  {author} {\bibinfo {author} {\bibfnamefont {M.~B.}\ \bibnamefont
  {Hastings}},\ }\bibfield  {title} {\enquote {\bibinfo {title}
  {Superadditivity of communication capacity using entangled inputs},}\ }\href
  {\doibase 10.1038/nphys1224} {\bibfield  {journal} {\bibinfo  {journal} {Nat.
  Phys.}\ }\textbf {\bibinfo {volume} {5}},\ \bibinfo {pages} {255--257}
  (\bibinfo {year} {2009})}\BibitemShut {NoStop}%
\bibitem [{\citenamefont {Blum}\ \emph {et~al.}(1989)\citenamefont {Blum},
  \citenamefont {Shub},\ and\ \citenamefont {Smale}}]{BSS}%
  \BibitemOpen
  \bibfield  {author} {\bibinfo {author} {\bibfnamefont {L.}~\bibnamefont
  {Blum}}, \bibinfo {author} {\bibfnamefont {M.}~\bibnamefont {Shub}}, \ and\
  \bibinfo {author} {\bibfnamefont {S.}~\bibnamefont {Smale}},\ }\bibfield
  {title} {\enquote {\bibinfo {title} {On a theory of computation and
  complexity over the real numbers: Np-completeness, recursive functions and
  universal machines},}\ }\href@noop {} {\bibfield  {journal} {\bibinfo
  {journal} {Bull. Am. Math. Soc.}\ }\textbf {\bibinfo {volume} {21}},\
  \bibinfo {pages} {1--46} (\bibinfo {year} {1989})}\BibitemShut {NoStop}%
\bibitem [{\citenamefont {Jeong}\ \emph {et~al.}(2013)\citenamefont {Jeong},
  \citenamefont {Lee},\ and\ \citenamefont {Kim}}]{Jeong2013}%
  \BibitemOpen
  \bibfield  {author} {\bibinfo {author} {\bibfnamefont {Youn-Chang}\
  \bibnamefont {Jeong}}, \bibinfo {author} {\bibfnamefont {Jong-Chan}\
  \bibnamefont {Lee}}, \ and\ \bibinfo {author} {\bibfnamefont {Yoon-Ho}\
  \bibnamefont {Kim}},\ }\bibfield  {title} {\enquote {\bibinfo {title}
  {Experimental implementation of a fully controllable depolarizing quantum
  operation},}\ }\href {\doibase 10.1103/PhysRevA.87.014301} {\bibfield
  {journal} {\bibinfo  {journal} {Phys. Rev. A}\ }\textbf {\bibinfo {volume}
  {87}},\ \bibinfo {pages} {014301} (\bibinfo {year} {2013})}\BibitemShut
  {NoStop}%
\bibitem [{\citenamefont {Shaham}\ and\ \citenamefont
  {Eisenberg}(2012)}]{Shaham12}%
  \BibitemOpen
  \bibfield  {author} {\bibinfo {author} {\bibfnamefont {Assaf}\ \bibnamefont
  {Shaham}}\ and\ \bibinfo {author} {\bibfnamefont {Hagai~S.}\ \bibnamefont
  {Eisenberg}},\ }\bibfield  {title} {\enquote {\bibinfo {title} {Realizing a
  variable isotropic depolarizer},}\ }\href {\doibase 10.1364/OL.37.002643}
  {\bibfield  {journal} {\bibinfo  {journal} {Opt. Lett.}\ }\textbf {\bibinfo
  {volume} {37}},\ \bibinfo {pages} {2643--2645} (\bibinfo {year}
  {2012})}\BibitemShut {NoStop}%
\bibitem [{\citenamefont {Sciarrino}\ \emph {et~al.}(2004)\citenamefont
  {Sciarrino}, \citenamefont {Sias}, \citenamefont {Ricci},\ and\ \citenamefont
  {De~Martini}}]{Sciarrino2004}%
  \BibitemOpen
  \bibfield  {author} {\bibinfo {author} {\bibfnamefont {F.}~\bibnamefont
  {Sciarrino}}, \bibinfo {author} {\bibfnamefont {C.}~\bibnamefont {Sias}},
  \bibinfo {author} {\bibfnamefont {M.}~\bibnamefont {Ricci}}, \ and\ \bibinfo
  {author} {\bibfnamefont {F.}~\bibnamefont {De~Martini}},\ }\bibfield  {title}
  {\enquote {\bibinfo {title} {Realization of universal optimal quantum
  machines by projective operators and stochastic maps},}\ }\href {\doibase
  10.1103/PhysRevA.70.052305} {\bibfield  {journal} {\bibinfo  {journal} {Phys.
  Rev. A}\ }\textbf {\bibinfo {volume} {70}},\ \bibinfo {pages} {052305}
  (\bibinfo {year} {2004})}\BibitemShut {NoStop}%
\bibitem [{\citenamefont {Turing}(1937)}]{Turing1937}%
  \BibitemOpen
  \bibfield  {author} {\bibinfo {author} {\bibfnamefont {A.~M.}\ \bibnamefont
  {Turing}},\ }\bibfield  {title} {\enquote {\bibinfo {title} {On computable
  numbers, with an application to the entscheidungsproblem},}\ }in\ \href
  {\doibase https://doi.org/10.1112/plms/s2-42.1.230} {\emph {\bibinfo
  {booktitle} {Proc. London Mathematical Society}}},\ \bibinfo {series} {2},
  Vol.~\bibinfo {volume} {42}\ (\bibinfo {year} {1937})\ pp.\ \bibinfo {pages}
  {230--265}\BibitemShut {NoStop}%
\bibitem [{\citenamefont {Yao}(1993)}]{Yao1993}%
  \BibitemOpen
  \bibfield  {author} {\bibinfo {author} {\bibfnamefont {A.~Chi-Chih}\
  \bibnamefont {Yao}},\ }\bibfield  {title} {\enquote {\bibinfo {title}
  {Quantum circuit complexity},}\ }in\ \href {\doibase
  10.1109/SFCS.1993.366852} {\emph {\bibinfo {booktitle} {Proc. 1993 IEEE 34th
  Annual Foundations of Computer Science}}},\ \bibinfo {series and number}
  {SFCS '93}\ (\bibinfo  {publisher} {IEEE Computer Society},\ \bibinfo
  {address} {USA},\ \bibinfo {year} {1993})\ pp.\ \bibinfo {pages}
  {352--361}\BibitemShut {NoStop}%
\bibitem [{\citenamefont {Molina}\ and\ \citenamefont
  {Watrous}(2019)}]{MolinaWatrous2018}%
  \BibitemOpen
  \bibfield  {author} {\bibinfo {author} {\bibfnamefont {A.}~\bibnamefont
  {Molina}}\ and\ \bibinfo {author} {\bibfnamefont {J.}~\bibnamefont
  {Watrous}},\ }\href {https://arxiv.org/abs/1808.01701} {\bibfield  {journal}
  {\bibinfo  {journal} {Proc. R. Soc. A}\ }\textbf {\bibinfo {volume} {475}},\
  \bibinfo {pages} {20180767} (\bibinfo {year} {2019})}\BibitemShut {NoStop}%
\bibitem [{\citenamefont {Parnas}\ \emph {et~al.}(1976)\citenamefont {Parnas},
  \citenamefont {Shore},\ and\ \citenamefont {Weiss}}]{DataType}%
  \BibitemOpen
  \bibfield  {author} {\bibinfo {author} {\bibfnamefont {D.~L.}\ \bibnamefont
  {Parnas}}, \bibinfo {author} {\bibfnamefont {John~E.}\ \bibnamefont {Shore}},
  \ and\ \bibinfo {author} {\bibfnamefont {David}\ \bibnamefont {Weiss}},\
  }\bibfield  {title} {\enquote {\bibinfo {title} {Abstract types defined as
  classes of variables},}\ }in\ \href {\doibase 10.1145/800237.807133} {\emph
  {\bibinfo {booktitle} {Proc. 1976 Conference on Data: Abstraction, Definition
  and Structure}}}\ (\bibinfo  {publisher} {Association for Computing
  Machinery},\ \bibinfo {address} {New York, USA},\ \bibinfo {year} {1976})\
  pp.\ \bibinfo {pages} {149--154}\BibitemShut {NoStop}%
\bibitem [{\citenamefont {Jafarzadeh}\ \emph {et~al.}(2020)\citenamefont
  {Jafarzadeh}, \citenamefont {Wu}, \citenamefont {Sanders},\ and\
  \citenamefont {Sanders}}]{JWSS20}%
  \BibitemOpen
  \bibfield  {author} {\bibinfo {author} {\bibfnamefont {M.}~\bibnamefont
  {Jafarzadeh}}, \bibinfo {author} {\bibfnamefont {Ya-Dong}\ \bibnamefont
  {Wu}}, \bibinfo {author} {\bibfnamefont {Y.~R.}\ \bibnamefont {Sanders}}, \
  and\ \bibinfo {author} {\bibfnamefont {B.~C.}\ \bibnamefont {Sanders}},\
  }\bibfield  {title} {\enquote {\bibinfo {title} {Randomized benchmarking for
  qudit clifford gates},}\ }\href@noop {} {\bibfield  {journal} {\bibinfo
  {journal} {New J. Phys.}\ }\textbf {\bibinfo {volume} {22}},\ \bibinfo
  {pages} {063014} (\bibinfo {year} {2020})}\BibitemShut {NoStop}%
\bibitem [{\citenamefont {Knuth}(1997)}]{Knu77}%
  \BibitemOpen
  \bibfield  {author} {\bibinfo {author} {\bibfnamefont {Donald~Ervin}\
  \bibnamefont {Knuth}},\ }\href@noop {} {\emph {\bibinfo {title} {The Art of
  Computer Programming}}},\ \bibinfo {edition} {3rd}\ ed.,\ Vol.~\bibinfo
  {volume} {1}\ (\bibinfo  {publisher} {Addison Wesley Longman},\ \bibinfo
  {address} {Reading, MA},\ \bibinfo {year} {1997})\BibitemShut {NoStop}%
\bibitem [{\citenamefont {Wilkes}\ \emph {et~al.}(1951)\citenamefont {Wilkes},
  \citenamefont {Wheeler},\ and\ \citenamefont {Gill}}]{Library}%
  \BibitemOpen
  \bibfield  {author} {\bibinfo {author} {\bibfnamefont {Maurice}\ \bibnamefont
  {Wilkes}}, \bibinfo {author} {\bibfnamefont {David.}\ \bibnamefont
  {Wheeler}}, \ and\ \bibinfo {author} {\bibfnamefont {Stanley}\ \bibnamefont
  {Gill}},\ }\href@noop {} {\emph {\bibinfo {title} {The preparation of
  programs for an electronic digital computer: with special reference to the
  EDSAC and the use of a library of subroutines}}}\ (\bibinfo  {publisher}
  {Addison-Wesley Press},\ \bibinfo {address} {Boston},\ \bibinfo {year}
  {1951})\BibitemShut {NoStop}%
\bibitem [{\citenamefont {Meurer}\ \emph {et~al.}(2017)\citenamefont {Meurer},
  \citenamefont {Smith}, \citenamefont {Paprocki}, \citenamefont
  {\v{C}ert\'{i}k}, \citenamefont {Kirpichev}, \citenamefont {Rocklin},
  \citenamefont {Kumar}, \citenamefont {Ivanov}, \citenamefont {Moore},
  \citenamefont {Singh}, \citenamefont {Rathnayake}, \citenamefont {Vig},
  \citenamefont {Granger}, \citenamefont {Muller}, \citenamefont {Bonazzi},
  \citenamefont {Gupta}, \citenamefont {Vats}, \citenamefont {Johansson},
  \citenamefont {Pedregosa}, \citenamefont {Curry}, \citenamefont {Terrel},
  \citenamefont {Rou\v{c}ka}, \citenamefont {Saboo}, \citenamefont {Fernando},
  \citenamefont {Kulal}, \citenamefont {Cimrman},\ and\ \citenamefont
  {Scopatz}}]{SymPy}%
  \BibitemOpen
  \bibfield  {author} {\bibinfo {author} {\bibfnamefont {Aaron}\ \bibnamefont
  {Meurer}}, \bibinfo {author} {\bibfnamefont {Christopher~P.}\ \bibnamefont
  {Smith}}, \bibinfo {author} {\bibfnamefont {Mateusz}\ \bibnamefont
  {Paprocki}}, \bibinfo {author} {\bibfnamefont {Ond\v{r}ej}\ \bibnamefont
  {\v{C}ert\'{i}k}}, \bibinfo {author} {\bibfnamefont {Sergey~B.}\ \bibnamefont
  {Kirpichev}}, \bibinfo {author} {\bibfnamefont {Matthew}\ \bibnamefont
  {Rocklin}}, \bibinfo {author} {\bibfnamefont {AMiT}\ \bibnamefont {Kumar}},
  \bibinfo {author} {\bibfnamefont {Sergiu}\ \bibnamefont {Ivanov}}, \bibinfo
  {author} {\bibfnamefont {Jason~K.}\ \bibnamefont {Moore}}, \bibinfo {author}
  {\bibfnamefont {Sartaj}\ \bibnamefont {Singh}}, \bibinfo {author}
  {\bibfnamefont {Thilina}\ \bibnamefont {Rathnayake}}, \bibinfo {author}
  {\bibfnamefont {Sean}\ \bibnamefont {Vig}}, \bibinfo {author} {\bibfnamefont
  {Brian~E.}\ \bibnamefont {Granger}}, \bibinfo {author} {\bibfnamefont
  {Richard~P.}\ \bibnamefont {Muller}}, \bibinfo {author} {\bibfnamefont
  {Francesco}\ \bibnamefont {Bonazzi}}, \bibinfo {author} {\bibfnamefont
  {Harsh}\ \bibnamefont {Gupta}}, \bibinfo {author} {\bibfnamefont {Shivam}\
  \bibnamefont {Vats}}, \bibinfo {author} {\bibfnamefont {Fredrik}\
  \bibnamefont {Johansson}}, \bibinfo {author} {\bibfnamefont {Fabian}\
  \bibnamefont {Pedregosa}}, \bibinfo {author} {\bibfnamefont {Matthew~J.}\
  \bibnamefont {Curry}}, \bibinfo {author} {\bibfnamefont {Andy~R.}\
  \bibnamefont {Terrel}}, \bibinfo {author} {\bibfnamefont
  {\v{S}t\v{e}p\'{a}n}\ \bibnamefont {Rou\v{c}ka}}, \bibinfo {author}
  {\bibfnamefont {Ashutosh}\ \bibnamefont {Saboo}}, \bibinfo {author}
  {\bibfnamefont {Isuru}\ \bibnamefont {Fernando}}, \bibinfo {author}
  {\bibfnamefont {Sumith}\ \bibnamefont {Kulal}}, \bibinfo {author}
  {\bibfnamefont {Robert}\ \bibnamefont {Cimrman}}, \ and\ \bibinfo {author}
  {\bibfnamefont {Anthony}\ \bibnamefont {Scopatz}},\ }\bibfield  {title}
  {\enquote {\bibinfo {title} {{SymPy}: {S}ymbolic {C}omputing in {P}ython},}\
  }\href {\doibase 10.7717/peerj-cs.103} {\bibfield  {journal} {\bibinfo
  {journal} {PeerJ Computer Science}\ }\textbf {\bibinfo {volume} {3}},\
  \bibinfo {pages} {e103} (\bibinfo {year} {2017})}\BibitemShut {NoStop}%
\bibitem [{\citenamefont {Oliphant}(2015)}]{NumPy}%
  \BibitemOpen
  \bibfield  {author} {\bibinfo {author} {\bibfnamefont {Travis}\ \bibnamefont
  {Oliphant}},\ }\href@noop {} {\emph {\bibinfo {title} {Guide to NumPy}}}\
  (\bibinfo  {publisher} {CreateSpace Independent Publishing Platform},\
  \bibinfo {year} {2015})\BibitemShut {NoStop}%
\bibitem [{Nag()}]{NagLibrary}%
  \BibitemOpen
  \href@noop {} {\bibinfo  {journal} {The {NAG} {L}ibrary, {T}he {N}umerical
  {A}lgorithms {G}roup ({NAG}), {O}xford, {U}nited {K}ingdom, www.nag.com}\
  }\BibitemShut {NoStop}%
\end{thebibliography}%
\end{document}